\documentclass[pdftex,12pt,a4paper]{article}
\usepackage[pdftex]{graphicx}
\usepackage{curves,epic,latexsym,amsmath,amssymb,array,cite,ifthen}
\usepackage{epsfig}

\topmargin=-0.34cm 
\oddsidemargin=-0.4mm  
\evensidemargin=-0.4mm 
\textwidth=16cm
\textheight=22cm



\newcommand{\eqdef}{\stackrel{\scriptscriptstyle\bigtriangleup}{=} }

\newcommand{\Z}{\mathbb{Z}}

\newcommand{\argmin}{\operatornamewithlimits{argmin}}

\newcommand{\wH}{\mathrm{w_H}}
\newcommand{\dH}{\mathrm{d_H}}
\newcommand{\tH}{\mathrm{t_H}}
\newcommand{\dminH}{\mathrm{d_{minH}}}
\newcommand{\wD}{\mathrm{w_D}}
\newcommand{\dD}{\mathrm{d_D}}
\newcommand{\tD}{\mathrm{t_D}}
\newcommand{\dminD}{\mathrm{d_{minD}}}

\newcommand{\wminD}{\mathrm{w_{minD}}}


\newcounter{proglinecounter}
\newenvironment{pseudocode}%
    {\setcounter{proglinecounter}{0}%
     \begin{tabbing}12345\=12345\=123\=123\=123\=123\=123\=123\=123\=123\=123\= \kill}%
    {\end{tabbing}}
\newcommand{\npcl}[1][]
    {\>\refstepcounter{proglinecounter}\arabic{proglinecounter}%
     \ifthenelse{\equal{#1}{}}{}{\label{#1}}\' \>}
\newcommand{\pkw}[1]{\textbf{#1}}    

\newcounter{examplecntr}
\newenvironment{example}[1][]%
{\begin{trivlist}\item[]\refstepcounter{examplecntr}%
 {\bfseries Example~\theexamplecntr%
  \ifthenelse{\equal{#1}{}}{}{ (#1)}.
}}%
{\hfill$\Box$\end{trivlist}}

\newcounter{definitioncntr}
\newenvironment{definition}%
{\begin{trivlist}\item[]\refstepcounter{definitioncntr}%
{\bfseries Definition~\thedefinitioncntr.}}%
{\hfill$\Box$\end{trivlist}}

\newcounter{theoremcntr}
\newenvironment{theorem}[1][]%
{\begin{trivlist}\item[]\refstepcounter{theoremcntr}%
{\bfseries Theorem~\thetheoremcntr%
  \ifthenelse{\equal{#1}{}}{}{ (#1)}.
}}%
{\hfill$\Box$\end{trivlist}}

\newcounter{lemmacntr}
\newenvironment{lemma}[1][]%
{\begin{trivlist}\item[]\refstepcounter{lemmacntr}%
{\bfseries Lemma~\thelemmacntr%
  \ifthenelse{\equal{#1}{}}{}{ (#1)}.
}}%
{\hfill$\Box$\end{trivlist}}

\newcounter{corollarycntr}
{\begin{trivlist}\item[]\refstepcounter{corollarycntr}%
{\bfseries Corollary~\thecorollarycntr%
  \ifthenelse{\equal{#1}{}}{}{ (#1)}.
}}%
{\hfill$\Box$\end{trivlist}}

\newenvironment{proof}{\begin{trivlist}\item[]{\bfseries Proof: }
 }{\hfill$\Box$\end{trivlist}}

\newenvironment{proofof}[1]{\begin{trivlist}\item[]{\bfseries Proof #1: }
 }{\hfill$\Box$\end{trivlist}}

\newcommand{\eproofnegspace}{\\[-1.5\baselineskip]\rule{0em}{0ex}}

\newcounter{propositioncntr}
{\begin{trivlist}\item[]\refstepcounter{propositioncntr}%
{\bfseries Proposition~\thepropositioncntr%
  \ifthenelse{\equal{#1}{}}{}{ (#1)}.
}}%
{\hfill$\Box$\end{trivlist}}


\hyphenation{Shio-zaki}


\begin{document}
\hfill\texttt{January~5, 2012}

\begin{center}
\LARGE\bf On Polynomial Remainder Codes \vspace{2ex}
\end{center}

\begin{center}
Jiun-Hung~Yu and Hans-Andrea~Loeliger \\
Dept. of Information Technology and Electrical Engineering\\
ETH Zurich, Switzerland\\
Email: \{yu, loeliger\}@isi.ee.ethz.ch
\vspace{2ex}
\end{center}

{
\renewcommand{\thefootnote}{}
\footnotetext{A preliminary version of this work was presented in part in \cite{YuLoeliger}.}
}

\begin{center}
{\bf Abstract}
\end{center}

Polynomial remainder codes are a large class of codes
derived from the Chinese remainder theorem
that includes Reed-Solomon codes as a special case.
In this paper, we revisit these codes and study them
more carefully than in previous work.
We explicitly allow the code symbols to be
polynomials of different degrees,
which leads to two different notions of weight and distance.

Algebraic decoding is studied in detail.
If the moduli are not irreducible, the notion of an error locator
polynomial is replaced by an error factor polynomial.
We then obtain a collection of gcd-based decoding algorithms,
some of which are not quite standard even when specialized to Reed-Solomon codes.
\vspace{2ex}

\emph{Index Terms}---Chinese remainder theorem, redundant residue
codes, polynomial remainder codes, Reed-Solomon codes, polynomial
interpolation.

\section{Introduction}

Polynomial remainder codes are a large class of codes
derived from the Chinese remainder theorem.
Such codes
were proposed by Stone \cite{Stone}, who also
pointed out that these codes include Reed-Solomon codes
\cite{Reed} as a special case. Variations of Stone's codes
were studied in \cite{Bossen,Mandelbaum,Mandelbaum2}. In
\cite{Stone} and \cite{Bossen}, the focus is on codes with a fixed
symbol size, i.e., the moduli are relatively prime polynomials of
the same degree.
A generalization of such codes was proposed by Mandelbaum \cite{Mandelbaum},
who also pointed out that using moduli of different
degrees can be advantageous for burst error correction
\cite{Mandelbaum2}.

Although the codes in \cite{Stone,Bossen,
Mandelbaum, Mandelbaum2} can, in principle, correct many random
errors, no efficient decoding algorithm for random errors was
proposed in these papers. In 1988, Shiozaki \cite{Shiozaki}
proposed an efficient decoding algorithm for Stone's codes
\cite{Stone} using Euclid's algorithm, and he also adapted this
algorithm to decode Reed-Solomon codes. However, the algorithm of
\cite{Shiozaki} is restricted to codes with a fixed symbol size,
i.e., fixed-degree moduli. Moreover, the argument given in
\cite{Shiozaki} seems to assume that all the moduli are
irreducible although this assumption is not stated explicitly.

In \cite{Mandelbaum3}, Mandelbaum made the interesting observation
that polynomial remainder codes (generalized as in \cite{Mandelbaum})
contain Goppa codes \cite{Goppa} as a special case.
By means of this observation,
generalized versions of Goppa codes
such as in \cite{BeSh:ggcISIT997c}
may also be viewed as polynomial remainder codes.
In subsequent work \cite{Mandelbaum4,Mandelbaum5},
Mandelbaum actually used the term ``generalized Goppa codes''
for (generalized) polynomial remainder codes.
He also proposed a decoding algorithm for such codes
using a continued-fractions approach \cite{Mandelbaum4,Mandelbaum5}.
However, this connection between (generalized) polynomial remainder codes and Goppa codes
will not be further pursued in this paper.

There is also a body of work on Chinese remainder codes over
integers, cf.\ \cite{Goldreich,Guruswami2}. However, the results
of the present paper are not directly related to that work.

In this paper, we revisit polynomial remainder codes as in
\cite{Stone}. We explicitly allow moduli of different degrees
(i.e., variable symbol sizes) within a codeword. In this way, we
can, e.g., lengthen a Reed-Solomon code by adding some
higher-degree symbols without increasing the size of the
underlying field. In consequence, we obtain two different notions
of distance---Hamming distance and degree-weighted distance---and
the corresponding minimum-distance decoding rules. Algebraic
decoding as in \cite{Shiozaki} is studied in detail. If the moduli
are not irreducible, the notion of an error locator polynomial is
replaced by an error factor polynomial. We then obtain a
collection of gcd-based decoding algorithms, some of which are not
quite standard even when specialized to Reed-Solomon codes.

This paper is organized as follows. In
Section~\ref{section:Chinese Remainder Codes}, we recall the
Chinese remainder theorem and the definition of Chinese remainder
codes over integers and polynomials. We also discuss erasures-only
decoding, i.e., the recovery of a codeword from a subset of its
symbols, for which we propose a method that appears to be new. In
Section~\ref{section:Polynomial Remainder Codes}, we focus on
polynomial remainder codes and their minimum-distance decoding,
both for Hamming distance and degree-weighted distance.
In Section~\ref{section: Locator Factor Key Equation}, we
introduce error locator polynomials and error factor polynomials
and a key equation for the latter.
In Section~\ref{section:exgcd}, we derive gcd-based decoding algorithms.
A synopsis of these algorithms is given in Section~\ref{section:DecodingSummary},
and their relation to prior work is discussed in Section~\ref{sec:PriorGCDDecoding}.
Section~\ref{section:Conclusion} concludes the paper.

The cardinality of a set $S$ will be denoted by $|S|$
and the absolute value of an integer $n$ will be denoted by $|n|$.
In Section~\ref{sec:ErasuresDecoding},
this same symbol
will also be used for the degree of a polynomial,
i.e., $|a(x)| \eqdef \deg a(x)$.

\section{Chinese Remainder Codes}
\label{section:Chinese Remainder Codes}

\subsection{Chinese Remainder Theorem and Codes}

Let $R=\Z$ or $R=F[x]$ for some field $F$.
(Later on, we will focus on $R=F[x]$.)
For $R=\Z$, for any positive $m\in\Z$,
let $R_m$ denote the ring $\{0,1,2,\ldots,m-1\}$ with addition and multiplication modulo~$m$;
for $R=F[x]$, for any monic polynomial $m(x)\in F[x]$,
let $R_m$ denote the ring of polynomials over $F$
of degree less than $\deg m(x)$ with addition and
multiplication modulo $m(x)$.
For $R=\Z$, $\gcd(a,b)$ denotes the greatest common divisor of $a,b\in\Z$,
not both zero;
for $R=F[x]$, $\gcd(a,b)$ denotes the monic polynomial of largest degree
that divides both $a,b\in F[x]$,
not both zero.

We will need the Chinese remainder theorem \cite{Stone} in the
following form.

\begin{theorem}[Chinese Remainder Theorem] \label{theorem:CRT}
\sloppy For some integer $n>1$, let $m_0,m_1,\ldots,m_{n-1} \in R$
be relatively prime (i.e., gcd$(m_i,m_j)=1$ for $i \neq j$) and
let $M_n \eqdef \prod_{i=0}^{n-1}m_i$. Then the mapping
\begin{eqnarray}
      \psi : R_{M_n} \rightarrow R_{m_0}\times \ldots \times
      R_{m_{n-1}}: a \mapsto
       \psi(a) \eqdef \big( \psi_0(a),\ldots,\psi_{n-1}(a) \big)  \label{eqn:defPsi}
\end{eqnarray}
with $\psi_i(a) \eqdef a \bmod m_i$ is a ring isomorphism.

The inverse of the mapping (\ref{eqn:defPsi}) is
\begin{eqnarray}
      \psi^{-1} : R_{m_0}\times \ldots \times R_{m_{n-1}} \rightarrow R_{M_n}:
         (c_0,\ldots,c_{n-1}) \mapsto
      \sum_{i=0}^{n-1} c_i \beta_i  \bmod M_n \label{eqn:c.2}
\end{eqnarray}
with coefficients
\begin{equation}
     \beta_i= \frac{M_n}{m_i} \cdot \left(\frac{M_n}{m_i}\right)_{\bmod
     m_i}^{-1}  \label{eqn:betai}
\end{equation}
where $(b)_{\bmod m_i}^{-1}$ denotes the inverse of $b$ in
$R_{m_i}$.
\end{theorem}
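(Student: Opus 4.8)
The plan is to prove the statement uniformly for $R=\Z$ and $R=F[x]$, using nothing beyond the Bézout identity available in these Euclidean domains. First I would observe that $\psi$ is a ring homomorphism essentially for free: since $m_i$ divides $M_n$, the reduction $a\mapsto a\bmod m_i$ is a well-defined ring homomorphism $R_{M_n}\to R_{m_i}$, and assembling the $n$ of these coordinatewise yields a ring homomorphism into the product ring $R_{m_0}\times\cdots\times R_{m_{n-1}}$. Hence the real content is that $\psi$ is bijective and that the map in (\ref{eqn:c.2}) with the coefficients (\ref{eqn:betai}) is its inverse.

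Next I would isolate two elementary consequences of pairwise coprimality. (a) For each $i$ one has $\gcd\!\big(m_i,\,M_n/m_i\big)=\gcd\!\big(m_i,\prod_{j\neq i}m_j\big)=1$; this follows by induction from the fact that $\gcd(a,b)=\gcd(a,c)=1$ implies $\gcd(a,bc)=1$ (multiply the two Bézout identities $1=sa+tb$ and $1=s'a+t'c$). In particular $M_n/m_i$ is a unit in $R_{m_i}$, so the coefficient $\beta_i$ of (\ref{eqn:betai}) is well defined. (b) If $\gcd(u,v)=1$ and $u\mid a$ and $v\mid a$, then $uv\mid a$: from $1=su+tv$ we get $a=sua+tva$, and $uv$ divides each summand because $v\mid a$ in the first and $u\mid a$ in the second. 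Iterating (b) together with (a) shows that if the pairwise-coprime $m_0,\dots,m_{n-1}$ all divide $a$, then $M_n\mid a$.

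With (a) and (b) in hand the rest is short. For injectivity: if $\psi(a)=0$ then $m_i\mid a$ for all $i$, hence $M_n\mid a$ by (b); since $a\in R_{M_n}$, this forces $a=0$. For surjectivity and the explicit inverse I would verify (\ref{eqn:c.2}) directly. By (a), $\beta_i\equiv 1\pmod{m_i}$, while $\beta_i\equiv 0\pmod{m_j}$ for $j\neq i$ because $m_j\mid M_n/m_i$. Hence for any $(c_0,\dots,c_{n-1})$ the element $a\eqdef\big(\sum_{i=0}^{n-1}c_i\beta_i\big)\bmod M_n$ satisfies $a\equiv c_j\pmod{m_j}$ for every $j$, i.e.\ $\psi(a)=(c_0,\dots,c_{n-1})$. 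Thus $\psi$ is onto and the map (\ref{eqn:c.2}) is a right inverse of $\psi$; being a bijective ring homomorphism, $\psi$ is a ring isomorphism and (\ref{eqn:c.2}) is $\psi^{-1}$. (Alternatively, once injectivity is known, surjectivity is automatic by counting in the integer case and by comparing dimensions $\dim_F R_{M_n}=\deg M_n=\sum_i\deg m_i=\dim_F\prod_i R_{m_i}$ in the polynomial case, but only the direct computation yields the formula.)

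I do not expect a genuine obstacle here: the Chinese remainder theorem is classical. The only point that deserves care in a clean writeup is step (a)/(b) — the passage from pairwise coprimality of the $m_i$ to $\gcd(m_i,M_n/m_i)=1$ and to ``$M_n\mid a$ whenever every $m_i\mid a$'' — and since both reduce cleanly to the Bézout identity, I would state them as one- or two-line lemmas rather than derive them inline.
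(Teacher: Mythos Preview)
Your argument is correct and is the standard proof of the Chinese remainder theorem. Note, however, that the paper does not prove Theorem~\ref{theorem:CRT} at all: it simply states the result with a reference to \cite{Stone} and moves on, so there is no ``paper's own proof'' to compare against. Your write-up would serve perfectly well as the omitted justification.
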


\begin{definition}
\label{def:CRT_Code}
A \emph{Chinese remainder code (CRT Code)} over $R$ is a set of the form
\begin{equation}
      C \eqdef \{(c_0,\ldots,c_{n-1}):
      c_i=a \bmod m_i~\text{for some}~a \in R_{M_k}\}
      \label{eqn:CRTC}
\end{equation}
where $n$ and $k$ are integers satisfying $1\leq k \leq n$,
where $m_0, m_1, \ldots, m_{n-1} \in R$ are relatively prime,
and where $M_k \eqdef \prod_{i=0}^{k-1}m_i$.
\end{definition}
In other words, a CRT code consists of the images $\psi(a)$, with
$\psi$ as in (\ref{eqn:defPsi}), of all $a \in R_{M_k}$. For
$R=F[x]$, CRT codes are linear (i.e., vector spaces) over $F$; for
$R=\Z$, however, CRT codes are not linear since the pre-image of
the sum of two codewords may exceed the range of $M_k$.

The components $c_{i}=\psi_i(a)$ in (\ref{eqn:defPsi}) and
(\ref{eqn:CRTC}) will be called \emph{symbols}. Note that each
symbol is from a different ring $R_{m_i}$; these rings need not
have the same number of elements. We will often (but not always)
assume that the moduli $m_i$ in Definition \ref{def:CRT_Code}
satisfy the condition
\begin{equation}
|R_{m_0}| \leq |R_{m_1}| \leq \ldots \leq |R_{m_{n-1}}|. \label{eqn:alphabetCondition}
\end{equation}
We will refer to (\ref{eqn:alphabetCondition}) as the \emph{Ordered-Symbol-Size Condition.}

\subsection{Interpolation}
\label{sec:ErasuresDecoding}

Consider the problem of reconstructing a codeword
$c=(c_0,\ldots,c_{n-1})$ from a subset of its symbols.
Specifically, let $C$ be a CRT code as in
Definition~\ref{def:CRT_Code} and let $S$ be a subset of $\{
0,1,2, \ldots, \mbox{$n-1$}\}$ with cardinality $|S| >0$. Let
$c=(c_0,\ldots,c_{n-1}) = \psi(a) \in C$ be the codeword
corresponding to some $a\in R_{M_k}$ by~(\ref{eqn:CRTC}). Suppose
we are given $\tilde{c} = (\tilde{c}_0, \ldots \tilde{c}_{n-1})$
with
\begin{equation}
\tilde{c}_i = c_i ~~\text{for $i\in S$}
\end{equation}
(and with arbitrary $\tilde{c}_i\in R_{m_i}$ for $i\not\in S$) and
we wish to reconstruct $a=\psi^{-1}(c)$ from $\tilde{c}$.
This problem arises, for example, when the channel
erases some symbols (and lets the receiver know the erased positions)
but delivers the other symbols unchanged.
However, this problem also arises as the last step
in the decoding procedures that will be discussed later in the paper.

This interpolation problem can certainly be solved if $S$ is
sufficiently large. A first solution follows immediately from the
CRT (Theorem~\ref{theorem:CRT}). Specifically, with $M_S \eqdef
\prod_{i\in S} m_i$, Theorem~\ref{theorem:CRT} can be applied as
follows: if
\begin{equation}
|M_S| \geq |M_k|  \label{inq:recons.cond.0}
\end{equation}
then
\begin{equation}
   a = \sum_{i=0}^{n-1} \tilde{c}_i \tilde{\beta}_i \bmod M_S \label{eqn:recon.1}
\end{equation}
with
\begin{equation}  \label{eqn:recon.2}
 \tilde{\beta}_i \eqdef \left\{ \begin{array}{ll}
    \frac{M_S}{m_i} \cdot
      \left(\frac{M_S}{m_i}\right)_{\bmod m_i}^{-1}, & i\in S \\
      0, &  i\not\in S.
 \end{array} \right.
\end{equation}

Obviously, the coefficients $\tilde\beta_i$ in (\ref{eqn:recon.2})
depend on the support set $S$. Interestingly, there is a second
solution to the interpolation problem that avoids the computation
of these coefficients: the following theorem shows how
$a=\psi^{-1}(c)$ can be computed from $\psi^{-1}(\tilde{c})$,
which in turn may be computed using the fixed
coefficients~(\ref{eqn:betai}).

\begin{theorem}[Fixed-Transform Interpolation] \label{theorem:reconstruction}
If
\begin{equation} \label{inq:recons.cond.1}
|M_S| \geq |M_k|
\end{equation}
then
\begin{equation}
      \psi^{-1}(c) = Z/M_{\overline{S}}  \label{eqn:recon.4}
\end{equation}
where $M_{\overline{S}} \eqdef M_n/M_S$ and where
\begin{equation}
  Z \eqdef (M_{\overline{S}} \cdot \psi^{-1}(\tilde{c})) \bmod M_n \label{eqn:recon.3}
\end{equation}
is a multiple of $M_{\overline{S}}$.
\end{theorem}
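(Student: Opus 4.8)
The plan is to pass to the preimages $a \eqdef \psi^{-1}(c)\in R_{M_k}$ and $\tilde a\eqdef\psi^{-1}(\tilde c)\in R_{M_n}$ and to show that $a$ and $Z/M_{\overline{S}}$ coincide because they are congruent modulo $M_S$ and both lie in $R_{M_S}$. First I would observe that $a$ and $\tilde a$ carry the same symbols on $S$: for each $i\in S$ we have $\tilde a\equiv\tilde c_i=c_i\equiv a\pmod{m_i}$, so $m_i$ divides $\tilde a-a$ for every $i\in S$, and since the moduli $\{m_i:i\in S\}$ are pairwise relatively prime their product $M_S$ divides $\tilde a-a$; that is, $\tilde a\equiv a\pmod{M_S}$ (this is Theorem~\ref{theorem:CRT} applied to the index set $S$, or simply unique factorization in $R$).

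Second, I would establish the claim that $Z$ is a multiple of $M_{\overline{S}}$, which is also what makes the right-hand side of (\ref{eqn:recon.4}) meaningful. Writing the reduction in (\ref{eqn:recon.3}) as $Z=M_{\overline{S}}\,\tilde a-q\,M_n$, where $q$ is the quotient from integer division if $R=\Z$ and from polynomial division if $R=F[x]$, and substituting $M_n=M_{\overline{S}}M_S$, we get $Z=M_{\overline{S}}(\tilde a-q\,M_S)$. Hence $Z/M_{\overline{S}}=\tilde a-q\,M_S$ is a genuine element of $R$---here I use that the $m_i$, and therefore $M_{\overline{S}}$, are monic in the polynomial case so that the division is exact---and in particular $Z/M_{\overline{S}}\equiv\tilde a\equiv a\pmod{M_S}$ by the first step.

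Third, I would bring in the hypothesis $|M_S|\ge|M_k|$ to force equality. Since $Z$ is reduced modulo $M_n$ we have $|Z|<|M_n|$ (and $Z\ge0$ when $R=\Z$); dividing through by $M_{\overline{S}}$ and using $M_n=M_{\overline{S}}M_S$ gives $Z/M_{\overline{S}}\in R_{M_S}$. On the other hand $a\in R_{M_k}$ together with $|M_k|\le|M_S|$ gives $a\in R_{M_S}$ as well. But $R_{M_S}$ is exactly the set of canonical representatives modulo $M_S$, so two of its elements that are congruent modulo $M_S$ must be equal; therefore $a=Z/M_{\overline{S}}$, which is (\ref{eqn:recon.4}).

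I do not foresee a deep obstacle. The points that need care are the simultaneous treatment of $R=\Z$ and $R=F[x]$---in particular that the size $|\cdot|$ is the absolute value in one case and the degree in the other, so the implication ``$|Z|<|M_n|$ and $M_n=M_{\overline{S}}M_S$ imply $Z/M_{\overline{S}}\in R_{M_S}$'' must be checked separately in each case---and the verification in the second step that the division $Z/M_{\overline{S}}$ is exact, which is where monic-ness of the polynomial moduli enters.
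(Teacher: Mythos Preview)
Your argument is correct. It differs from the paper's proof mainly in organization rather than in substance. The paper introduces $\bar c \eqdef c-\tilde c$ and $\bar a \eqdef \psi^{-1}(\bar c)$, notes that $\psi(M_{\overline S}\cdot\bar a)=\psi(M_{\overline S})\,\psi(\bar a)=0$ (since $\psi(M_{\overline S})$ vanishes on $\overline S$ and $\bar c$ vanishes on $S$), and then computes $Z$ directly as $M_{\overline S}\cdot a$, using the size hypothesis only to ensure that $M_{\overline S}\cdot a$ is already reduced modulo $M_n$. You instead establish the congruence $Z/M_{\overline S}\equiv \tilde a\equiv a\pmod{M_S}$ and then invoke the size hypothesis to force equality via uniqueness of canonical representatives in $R_{M_S}$. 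Both routes rest on the same key fact, namely $M_S\mid(\tilde a-a)$; the paper reaches it through the ring isomorphism $\psi$, while you obtain it by elementary divisibility. Your version has the minor advantage of making the exact division $Z/M_{\overline S}$ explicit from the outset, whereas the paper's version is slightly shorter and avoids the separate ``canonical representative'' step.
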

This theorem does not appear in standard expositions of the CRT;
perhaps it is new.
Its application to coding, even to Reed-Solomon codes
(cf.\ Section~\ref{sec:ErasuresDecodingPRC}),
also appears to be new.

\begin{proofof}{of Theorem~\ref{theorem:reconstruction}}
Let $\bar{c} \eqdef c-\tilde{c}$, let $\bar{a} \eqdef
\psi^{-1}(\bar{c})$, and note that $\psi^{-1}(\tilde{c}) = (a -
\bar{a}) \bmod M_n$. Note also that $|M_{\overline{S}} \cdot a| <
|M_n|$ because of (\ref{inq:recons.cond.1}). Then
\begin{align}
      Z &= \left( M_{\overline{S}} \cdot (a-\bar{a})  \right) \bmod M_n \\
        &= M_{\overline{S}}\cdot a - (M_{\overline{S}}\cdot \bar{a})  \bmod M_n  \\
        &= M_{\overline{S}}\cdot a \label{eqn:pfrecon.1}
\end{align}
where the last step follows from
\begin{align}
\psi(M_{\overline{S}}\cdot \bar{a})
& =  \psi(M_{\overline{S}}) \psi(\bar{a}) \\
& =  0.
\end{align}
\eproofnegspace
\end{proofof}

\subsection{Hamming Distance and Singleton Bound}

For any $a \in R_{M_n}$, the Hamming weight of $\psi(a)$ (i.e.,
the number of nonzero symbols $\psi_i(a)$, \mbox{$0 \leq i \leq
n-1$}) will be denoted by $\wH(\psi(a))$. For any $a,b \in
R_{M_n}$, the Hamming distance between $\psi(a)$ and $\psi(b)$
will be denoted by $\dH(\psi(a),\psi(b)) \eqdef \wH(\psi(a) -
\psi(b))$.
The minimum Hamming distance of a CRT code $C$ will be denoted
by~$\dminH(C)$.
\begin{theorem} \label{theorem:Hamming-weight}
Let $C$ be a CRT code as in Definition~\ref{def:CRT_Code}
satisfying the Ordered-Symbol-Size
Condition~(\ref{eqn:alphabetCondition}). Then the Hamming weight
of any nonzero codeword $\psi(a)$ ($a\in R_{M_k}$, $a\neq 0$)
satisfies
\begin{equation} \label{eqn:MinWeight}
\wH(\psi(a)) \geq n-k+1
\end{equation}
and
\begin{equation} \label{eqn:MinDist}
\dminH(C) = n-k+1.
\end{equation}
\eproofnegspace
\end{theorem}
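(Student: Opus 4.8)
The plan is to establish the lower bound \eqref{eqn:MinWeight} first and then show it is tight, which together give \eqref{eqn:MinDist} since the code is linear over $F$ (so $\dminH(C)$ equals the minimum Hamming weight of a nonzero codeword). For the lower bound, I would argue by contradiction: suppose $a\in R_{M_k}$ with $a\neq 0$ has $\wH(\psi(a)) \leq n-k$. Then at least $k$ of the symbols $\psi_i(a) = a\bmod m_i$ vanish; let $S$ be the set of those indices, so $|S|\geq k$. The key step is a degree (or size) estimate: if $\psi_i(a)=0$ then $m_i \mid a$, and since the $m_i$ are relatively prime, $\prod_{i\in S} m_i \mid a$. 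Writing $M_S \eqdef \prod_{i\in S} m_i$, this forces either $a=0$ or $\deg a \geq \deg M_S$. But here is where the Ordered-Symbol-Size Condition~\eqref{eqn:alphabetCondition} enters: among any $k$ of the moduli, the product of the $k$ smallest-degree ones is $m_0 m_1\cdots m_{k-1}$, so $\deg M_S \geq \deg(m_0\cdots m_{k-1}) = \deg M_k$. Hence $\deg a \geq \deg M_k$, contradicting $a\in R_{M_k}$, $a\neq 0$ (which forces $\deg a < \deg M_k$). This proves \eqref{eqn:MinWeight}.

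For tightness, I would exhibit a codeword of weight exactly $n-k+1$. The natural candidate is $a \eqdef m_0 m_1 \cdots m_{k-2}$ (the product of the $k-1$ lowest-degree moduli; if $k=1$ take $a=1$). Then $\deg a = \sum_{i=0}^{k-2}\deg m_i < \deg M_k$, so $a\in R_{M_k}$ and $a\neq 0$, hence $\psi(a)\in C$. By construction $m_i\mid a$ for $0\leq i\leq k-2$, so $\psi_i(a)=0$ for those $k-1$ indices, giving $\wH(\psi(a))\leq n-(k-1) = n-k+1$. Combined with the lower bound just proved, $\wH(\psi(a)) = n-k+1$, so $\dminH(C)\leq n-k+1$; together with \eqref{eqn:MinWeight} this yields \eqref{eqn:MinDist}.

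The main obstacle, and the only place any care is needed, is the degree-counting step that turns "$k$ symbols vanish" into "$a$ is divisible by a product of degree at least $\deg M_k$." This requires two facts: that relatively prime moduli each dividing $a$ implies their product divides $a$ (a standard consequence of the CRT, or directly of unique factorization in $F[x]$), and that the Ordered-Symbol-Size Condition makes $m_0\cdots m_{k-1}$ the minimum-degree product over all size-$k$ subsets of the moduli. The latter is immediate from \eqref{eqn:alphabetCondition} since $|R_{m_i}| = |F|^{\deg m_i}$, so the ordering of alphabet sizes is exactly the ordering of degrees. One should also note that over $R=\Z$ the same argument gives the analogous bound with $|\cdot|$ in place of degree, though the paper only claims \eqref{eqn:MinDist} for CRT codes satisfying the ordered condition and does not here assert linearity for $R=\Z$; for $R=F[x]$ linearity is guaranteed, so the minimum-weight computation legitimately delivers the minimum distance.
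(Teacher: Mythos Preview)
Your argument for the lower bound \eqref{eqn:MinWeight} is essentially the paper's, spelled out in more detail (you make explicit the step ``relatively prime moduli each dividing $a$ implies their product divides $a$,'' which the paper leaves implicit). For the \emph{upper} bound $\dminH(C) \leq n-k+1$, you take a different and more self-contained route: you exhibit an explicit codeword $a = m_0 m_1 \cdots m_{k-2}$ of weight exactly $n-k+1$, whereas the paper invokes the Singleton bound (Theorem~\ref{theorem:SingletonBounddminH}), which is stated and proved only afterwards. Your construction is a clean direct argument that avoids this forward reference.

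There is, however, a genuine gap in your derivation of the \emph{lower} bound $\dminH(C) \geq n-k+1$. You obtain it from \eqref{eqn:MinWeight} via linearity over $F$, which is valid only for $R = F[x]$; but the theorem as stated applies to CRT codes over $R = \Z$ as well (Definition~\ref{def:CRT_Code} covers both rings), and over $\Z$ the code is not linear, so minimum weight need not equal minimum distance a priori. You flag this in your final remark but do not close the gap. The paper handles the integer case with one extra observation: for $a, b \in R_{M_k}$ with $a \neq b$, exactly one of $a-b$ or $b-a$ lies in $R_{M_k}$ (since $0 < |a-b| < M_k$), and $\dH(\psi(a),\psi(b)) = \wH(\psi(a)-\psi(b))$ counts the indices $i$ with $m_i \nmid (a-b)$, which is the same set as for $b-a$. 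Hence \eqref{eqn:MinWeight} applied to whichever of $a-b$, $b-a$ lies in $R_{M_k}$ gives $\dH(\psi(a),\psi(b)) \geq n-k+1$. Adding this sentence would complete your proof for both rings.
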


\begin{proof}
For any nonzero $a\in R_{M_n}$, assume that the image $\psi(a)$
has Hamming weight $\wH(\psi(a)) \leq n-k$,
i.e., the number of zero symbols of $\psi(a)$ is at least $k$. For $R=\Z$,
this implies $a \geq M_k$;
for $R=F[x]$, this implies $\deg a\geq \deg M_k$.
In both cases, $a\not\in R_{M_k}$, which proves~(\ref{eqn:MinWeight}).

As for (\ref{eqn:MinDist}),
consider $\dH(\psi(a),\psi(b))$
for any $a,b\in R_{M_k}$, $a\neq b$.
For $R=F[x]$, $a-b\in R_{M_k}$ and thus
\begin{align}
\dH(\psi(a),\psi(b))
 & = \wH(\psi(a)-\psi(b)) \\
 & = \wH(\psi(a-b)) \\
 & \geq n-k+1
\end{align}
by (\ref{eqn:MinWeight}). For $R=\Z$, either $a-b\in R_{M_k}$ or
$b-a\in R_{M_k}$ and the same argument applies. It follows that
$\dminH(C) \geq n-k+1$. Finally, the equality in
(\ref{eqn:MinDist}) follows from the Singleton bound below.
\end{proof}

In the following theorem, we will use the following notation.
For any subset $S\subset \{ 0, 1, \ldots, n-1 \}$,
let $\overline{S} \eqdef \{ 0, 1, \ldots, n-1 \} \setminus S$
and let
\begin{equation} \label{eqn:RS}
R_S \eqdef \bigotimes_{i\in S} R_{m_i},
\end{equation}
the direct product of all rings $R_{m_i}$ with $i\in S$.

\begin{theorem}[Singleton Bound for Hamming Distance]\label{theorem:SingletonBounddminH}
Let $C$ be a code in $R_{\{0,\ldots, n-1\}}$ (i.e., a nonempty
subset of $R_{m_0}\times \cdots \times R_{m_{n-1}}$) with minimum
Hamming distance $\dminH$.
Then
\begin{equation} \label{eqn:SingletonBounddminH}
|C| \leq  \min_{S \subset \{0,1,\ldots, n-1\}}\{|R_S|: |S|>n-\dminH  \}.
\end{equation}
\eproofnegspace
\end{theorem}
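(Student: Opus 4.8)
The plan is to prove the Singleton-type bound by exhibiting, for a suitable choice of $S$, an injective map from $C$ into $R_S$. Concretely, I would argue as follows. Fix any subset $S\subset\{0,1,\ldots,n-1\}$ with $|S|>n-\dminH$, equivalently $|\overline{S}|<\dminH$, and consider the projection
\begin{equation}
\pi_S : R_{\{0,\ldots,n-1\}} \rightarrow R_S : (c_0,\ldots,c_{n-1}) \mapsto (c_i)_{i\in S},
\end{equation}
restricted to $C$. The first step is to show that $\pi_S\big|_C$ is injective. Suppose $c,c'\in C$ have $\pi_S(c)=\pi_S(c')$, i.e.\ $c_i=c_i'$ for all $i\in S$. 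Then $c$ and $c'$ can differ only in positions indexed by $\overline{S}$, so the Hamming distance $\dH(c,c')$ is at most $|\overline{S}|<\dminH$. By the definition of $\dminH$ as the minimum Hamming distance of $C$, this forces $c=c'$. Hence $\pi_S\big|_C$ is injective, which gives $|C|\leq|\pi_S(C)|\leq|R_S|$.

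Since this holds for every admissible $S$, we may take the minimum over all such $S$, yielding
\begin{equation}
|C| \leq \min_{S\subset\{0,1,\ldots,n-1\}}\{\,|R_S| : |S|>n-\dminH\,\},
\end{equation}
which is exactly \eqref{eqn:SingletonBounddminH}. Note that the set of admissible $S$ is nonempty precisely because $\dminH\leq n$ (if no symbol can distinguish codewords then $|C|=1$ and the bound is trivial), so the minimum on the right-hand side is well defined; a brief remark handling the degenerate case $|C|=1$ may be worth including for completeness.

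I do not anticipate a serious obstacle here: the argument is the standard "puncturing" proof of the Singleton bound, and it goes through verbatim because it uses only the combinatorial structure of Hamming distance and never the ring structure of the individual $R_{m_i}$ (in particular it does not require the $R_{m_i}$ to have equal cardinality, nor does it require $C$ to be linear). The one point that merits a sentence of care is that $\pi_S(C)$ is a subset of $R_S$ and therefore has cardinality at most $|R_S|=\prod_{i\in S}|R_{m_i}|$ — this is immediate from \eqref{eqn:RS}, but since the $R_{m_i}$ have possibly different sizes it is the step where the variable-symbol-size feature of these codes actually enters the bound, and it is worth flagging that the sharpest bound is obtained by puncturing away the $\dminH-1$ largest symbols, consistent with the Ordered-Symbol-Size Condition used in Theorem~\ref{theorem:Hamming-weight}.
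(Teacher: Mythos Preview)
Your proof is correct and follows essentially the same approach as the paper: fix $S$ with $|\overline{S}|<\dminH$, project (puncture) onto $R_S$, and observe that the projection is injective on $C$, giving $|C|\leq|R_S|$ for every admissible $S$. The paper's proof is just a terser version of your argument, without the extra remarks on the degenerate case or the variable-symbol-size discussion.
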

Note that this theorem does not require the Ordered-Symbol-Size Condition (\ref{eqn:alphabetCondition}).

\begin{proof}
Let $\overline{S}$ be a subset of $\{0,1,\ldots, n-1\}$ with
$|\overline{S}| < \dminH$. For every word $c\in C$, erase its
components in $\overline{S}$. The resulting set of shortened
words, which are elements of $R_S$, has still $|C|$ elements.
\end{proof}

For CRT codes satisfying the Ordered-Symbol-Size
Condition~(\ref{eqn:alphabetCondition}),
we have $|C| = |R_{M_k}|$;
on the other hand, the right-hand side of (\ref{eqn:SingletonBounddminH}) becomes
\begin{equation}
|R_{ \{0,\ldots,n-\dminH\} }| = |R_{M_{n-\dminH+1}}|
\end{equation}
where $M_{n-\dminH+1}\eqdef \prod_{i=0}^{n-\dminH} m_i $. It then
follows from (\ref{eqn:SingletonBounddminH}) that $|R_{M_k}| \leq
|R_{M_{n-\dminH+1}}|$ and thus
\begin{equation}
k \leq n-\dminH+1.
\end{equation}

\section{Polynomial Remainder Codes}
\label{section:Polynomial Remainder Codes}

From now on, we will focus on the case $R=F[x]$ for some finite
field $F$.

\subsection{Definition and Some Examples}

\begin{definition}
\label{def:PR_Code} A \emph{polynomial remainder code} is a CRT
code over $R=F[x]$ with monic moduli $m_i(x)$, i.e., a set of the
form
\begin{equation}
      C =\{(c_0,\ldots,c_{n-1}):
      c_{i}=a(x) \bmod m_i(x)~\text{for some}~a(x) \in R_{M_k}\}. \label{eqn:PRC}
\end{equation}
A polynomial remainder code is \emph{irreducible} if the
polynomials $m_0(x),\ldots,m_{n-1}(x)$ are all irreducible
\cite{YuLoeliger}.
\end{definition}
For such codes, the Ordered-Symbol-Size Condition~(\ref{eqn:alphabetCondition})
may be written as
\begin{equation} \label{eqn:OrderedDegreeCondition}
\deg m_0(x)\leq \deg m_1(x) \leq \ldots \leq \deg m_{n-1}(x),
\end{equation}
which we will call the \emph{Ordered-Degree Condition.}

\begin{example}[Binary Irreducible Polynomial Remainder Codes]
\label{ex:BinIrrPRC} Let $F=\text{GF}(2)$ be the finite field with
two elements and let $m_0(x),\ldots,m_{n-1}(x)$ be different
irreducible binary polynomials.

The number of irreducible binary polynomials of degree up to 16 is
given in Appendix~A. For example, by using only irreducible moduli
of degree 16, we can obtain a code with $\deg M_n(x) = 4080$; by
using irreducible moduli of degree up to 16, we can achieve $\deg
M_n(x) = \text{130'486}$.
\end{example}

\begin{example}[Polynomial Evaluation Codes and Reed-Solomon Codes]
\label{example:RS} Let $\beta_0,\beta_1,\ldots,\beta_{n-1}$ be
distinct elements of some finite field $F$ (which implies $n \leq
|F|$). A~\emph{polynomial evaluation code} over $F$ is a code of
the form
\begin{equation}
  C \eqdef \{(c_0,\ldots,c_{n-1}):
      c_{i}=a(\beta_i) ~\text{for some $a(x) \in F[x]$ of $\deg a(x)<k$} \}. \label{eqn:PEC}
\end{equation}
A Reed-Solomon code is a polynomial evaluation code with
$\beta_i=\alpha^i$, where $\alpha$ is a primitive $n$-th root of
unity in $F$. With
\begin{equation}
m_i(x) \eqdef x-\beta_i,
\end{equation}
a polynomial evaluation code may be viewed as a polynomial remainder code
since
\begin{equation}
      c_{i}=a(\beta_i)=a(x) \bmod m_i(x).  \label{RS.1}
\end{equation}
For Reed-Solomon codes (as defined above), we then have
\begin{equation}
M_n(x)=x^n-1.  \label{eqn:RS.2}
\end{equation}
\eproofnegspace
\end{example}

\begin{example}[Polynomial Extensions of Reed-Solomon Codes]
\label{example:PolyExtRS}
When Reed-Solomon codes are viewed as polynomial remainder codes
as in Example~\ref{example:RS}, the code symbols are constants, i.e.,
polynomials of degree at most zero.
Reed-Solomon codes can be extended with additional symbols in $F[x]$
by adding some moduli $m_i(x)$ of degree two (or higher).
\end{example}

\subsection{Degree-weighted Distance}

Let
\begin{equation} \label{eqn:defN}
N \eqdef \deg M_n(x) = \sum_{i=0}^{n-1} \deg m_i(x)
\end{equation}
and
\begin{equation} \label{eqn:defK}
K \eqdef \deg M_k(x) = \sum_{i=0}^{k-1} \deg m_i(x).
\end{equation}
Note that $K$ is the dimension of the code as a subspace of $F^N$.

\begin{definition} \label{def:DegreeWeight}
The \emph{degree weight of a set} $S \subset \{0,1,\ldots,n-1\}$ is
\begin{equation}
\wD(S) \eqdef \sum_{i \in S} \deg m_i(x) \label{eqn:wDS}.
\end{equation}
For any $a(x) \in R_{M_n}$, the \emph{degree weight} of $\psi(a)
=\big(\psi_0(a), \ldots,\psi_{n-1}(a)\big)$ is
\begin{equation}
 \wD(\psi(a)) \eqdef \sum_{i:\psi_i(a)\neq 0}\deg m_i,
\end{equation}
and for any $a(x),b(x) \in R_{M_n}$, the \emph{degree-weighted
distance} between $\psi(a)$ and $\psi(b)$ is
\begin{equation} \label{eqn:defdD}
 \dD(\psi(a),\psi(b)) \eqdef \wD(\psi(a)-\psi(b)).
\end{equation}
\eproofnegspace
\end{definition}
Note that the degree-weighted distance
satisfies the triangle inequality:
\begin{equation}
\dD(\psi(a),\psi(b))\leq
\dD(\psi(a),\psi(c)) + \dD(\psi(b),\psi(c))
\label{eqn:triangle-inequality}
\end{equation} for all $a(x),b(x),c(x)\in R_{M_n}$.

Let $\dminD(C)$
denote the \emph{minimum degree-weighted distance} of a polynomial
remainder code $C$, i.e.,
\begin{equation} \label{eqn:defdminD}
\dminD(C) \eqdef \min_{c,c' \in C:\, c \neq c'} \dD(c,c'),
\end{equation}
and let
\begin{equation}
\wminD(C) \eqdef \min_{c \in C:\, c \neq 0} \wD(c)
\end{equation}
be the minimum degree weight of any nonzero codeword.
We then have the following analog of
Theorem~\ref{theorem:Hamming-weight}:

\begin{theorem}[Minimum Degree-Weighted Distance] \label{theorem:Degree-weight}
Let $C$ be a code as in Def\-i\-ni\-tion~\ref{def:PR_Code}.
Then
\begin{align}
\dminD(C) & = \wminD(C)  \label{eqn:dminDeqwminD} \\
          & = \min_{S \subset \{0,\ldots, n-1\}} \big\{ \wD(S): \wD(S) > N-K  \big\}
           \label{eqn:MinDegreeWeight} \\
 & > N-K.    \label{eqn:MinDegreeWeightNmK}
\end{align}
\eproofnegspace
\end{theorem}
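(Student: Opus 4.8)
The plan is to follow the template of the proof of Theorem~\ref{theorem:Hamming-weight}, with cardinalities of symbol sets replaced throughout by degree weights. The identity $\dminD(C)=\wminD(C)$ in~(\ref{eqn:dminDeqwminD}) comes for free from linearity: a polynomial remainder code over $F[x]$ is an $F$-vector space, so for distinct $a(x),b(x)\in R_{M_k}$ we have $a-b\in R_{M_k}$ and $\psi(a)-\psi(b)=\psi(a-b)$, whence $\dD(\psi(a),\psi(b))=\wD(\psi(a-b))$; as $(a,b)$ runs over ordered pairs of distinct messages, $a-b$ runs over all nonzero messages, i.e.\ $\psi(a-b)$ over all nonzero codewords, so the two minima agree. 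It therefore suffices to compute $\wminD(C)$.

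For the ``$\ge$'' half of~(\ref{eqn:MinDegreeWeight}) --- which already yields the strict inequality~(\ref{eqn:MinDegreeWeightNmK}) --- I would fix a nonzero codeword $\psi(a)$ with $a(x)\in R_{M_k}$, $a\neq 0$, and let $S\eqdef\{i:\psi_i(a)\neq 0\}$ be its support, so $\wD(\psi(a))=\wD(S)$ by Definition~\ref{def:DegreeWeight}. For $i\in\overline{S}$ we have $m_i(x)\mid a(x)$; since the moduli are pairwise coprime, $\prod_{i\in\overline{S}}m_i(x)$ divides $a(x)$, and $a\neq 0$ forces $\deg a(x)\ge \wD(\overline{S})$. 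Combined with $\deg a(x)<K$ (because $a\in R_{M_k}$) this yields $\wD(\overline{S})<K$, hence $\wD(S)=N-\wD(\overline{S})>N-K$. So the support of every nonzero codeword is one of the sets admitted in the minimization, and $\wminD(C)\ge\min_S\{\wD(S):\wD(S)>N-K\}>N-K$.

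For the matching ``$\le$'' I would exhibit a codeword of degree weight exactly equal to the minimum. Pick $S^{*}$ attaining $\min_S\{\wD(S):\wD(S)>N-K\}$ (the collection is nonempty, e.g.\ $S=\{0,\ldots,n-1\}$ is admissible since $\wD(S)=N>N-K$), and set $b(x)\eqdef\prod_{i\in\overline{S^{*}}}m_i(x)$. Then $\deg b(x)=\wD(\overline{S^{*}})=N-\wD(S^{*})<K$, so $b(x)\in R_{M_k}$ and $b$ is monic, hence nonzero; thus $\psi(b)$ is a nonzero codeword. By construction $\psi_i(b)=0$ for $i\in\overline{S^{*}}$, while pairwise coprimality of the moduli gives $\gcd\!\big(m_i(x),b(x)\big)=1$ and hence $m_i(x)\nmid b(x)$, i.e.\ $\psi_i(b)\neq 0$, for $i\in S^{*}$. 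So $\psi(b)$ has support exactly $S^{*}$, giving $\wminD(C)\le\wD(\psi(b))=\wD(S^{*})$; together with the previous paragraph this proves~(\ref{eqn:MinDegreeWeight}).

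The only point requiring care is this last construction: one must simultaneously arrange that the product of the off-support moduli has degree strictly below $K$, so that it is a legitimate message, \emph{and} that it stays coprime to every on-support modulus, so that the support is exactly $S^{*}$ and no degree weight leaks away. Both facts follow at once from the pairwise coprimality of the $m_i(x)$ and the defining inequality $\wD(S^{*})>N-K$; everything else is routine degree bookkeeping.
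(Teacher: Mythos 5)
Your proof is correct and takes essentially the same route as the paper's: the lower bound comes from observing that the product of the off-support moduli divides $a(x)$ and hence bounds its degree from below, and the upper bound comes from exhibiting a nonzero codeword supported exactly on a minimizing set $S^{*}$. The only difference is that you explicitly construct that extremal codeword as $b(x)=\prod_{i\in\overline{S^{*}}}m_i(x)$ and verify its support, whereas the paper merely asserts that such a codeword exists; your construction is the natural justification of that assertion.
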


\noindent
If all moduli $m_i(x)$ have degree one, then the right-hand side of (\ref{eqn:MinDegreeWeight})
equals \mbox{$N-K+1$}.
Note also that unlike
Theorem~\ref{theorem:Hamming-weight},
Theorem~\ref{theorem:Degree-weight} does not require the
Ordered-Degree Condition~(\ref{eqn:OrderedDegreeCondition}).

\begin{proof}
Equation (\ref{eqn:dminDeqwminD}) is obvious from the linearity of
the code over $F$, and (\ref{eqn:MinDegreeWeightNmK}) is obvious
as well. It remains to prove (\ref{eqn:MinDegreeWeight}).

Let $d$ be the right-hand side of (\ref{eqn:MinDegreeWeight}). For
any nonzero $a(x)\in R_{M_k}$, assume that the image $\psi(a)$ has
degree weight $\wD(\psi(a)) \leq N-K$, i.e., the sum of $\deg
m_i(x)$ over the zero symbols of $\psi(a)$ is at least~$K$. Then
$\deg a(x)\geq K = \deg M_k(x)$, which is impossible since
$a(x)\in R_{M_k}$. We thus have $\wD(\psi(a)) > N-K$. It then
follows from Definition~\ref{def:DegreeWeight} that $\wD(\psi(a))
\geq d$ and thus $\wminD(C) \geq d$.

Conversely, let
$S$ be a subset of $\{0,1,\ldots,n-1\}$ such that
$\wD(S)=d$.
Then there
exists some nonzero $a(x)\in R_{M_k}$ such that $\psi_i(a)\neq 0$ for
each $i \in S$ but $\psi_j(a)=0$ for each
$j\in \{0,1,\ldots,n-1\}\setminus S$.
Thus $\wD(\psi(a)) = \wD(S) = d$, which implies $\wminD(C) \leq d$.
\end{proof}

\begin{theorem}[Singleton Bound for Degree-weighted Distance]
 Let $C$ be a nonempty subset of $R_{m_0}\times \cdots \times R_{m_{n-1}}$
 with minimum degree-weighted distance $\dminD$ and with $N$ as in (\ref{eqn:defN}). Then
 \begin{equation} \label{eqn:SingletonBounddminD}
 \log_F |C| \leq  \min_{S \subset \{0,\ldots, n-1\}}\{\wD(S): \wD(S)> N- \dminD \}.
 \end{equation}
\eproofnegspace
\end{theorem}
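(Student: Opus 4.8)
The plan is to adapt the proof of Theorem~\ref{theorem:SingletonBounddminH} (the Singleton bound for Hamming distance) almost verbatim, replacing cardinalities by their base-$|F|$ logarithms and replacing ``number of erased positions'' by ``degree weight of the erased positions.'' First I would fix an arbitrary subset $S \subseteq \{0,1,\ldots,n-1\}$ whose complement $\overline{S}$ satisfies $\wD(\overline{S}) < \dminD$; by (\ref{eqn:defN}) this is the same as requiring $\wD(S) = N - \wD(\overline{S}) > N - \dminD$. Consider the projection $\pi_S$ that sends a word $c = (c_0,\ldots,c_{n-1})$ to its restriction $(c_i)_{i\in S} \in R_S$.

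The key step is that $\pi_S$ is injective on $C$. If two distinct words $c,c'\in C$ had $\pi_S(c) = \pi_S(c')$, then $c$ and $c'$ would agree in every position $i\in S$, so the set of positions in which they differ would be contained in $\overline{S}$; since degrees are nonnegative, their degree-weighted distance would then be at most $\wD(\overline{S}) < \dminD$, contradicting the definition of $\dminD$. (Unlike in the $F$-linear CRT-code setting we cannot here appeal to $c-c'\in C$, but we do not need to: only the support of $c-c'$ enters the argument.) Consequently $|C| = |\pi_S(C)| \leq |R_S|$. Since $R_{m_i}$ consists of the polynomials over $F$ of degree less than $\deg m_i(x)$, we have $|R_{m_i}| = |F|^{\deg m_i(x)}$, hence $|R_S| = \prod_{i\in S} |F|^{\deg m_i(x)} = |F|^{\wD(S)}$. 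Taking $\log_F$ gives $\log_F |C| \leq \wD(S)$.

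Finally, since $\log_F|C| \leq \wD(S)$ holds for every $S$ with $\wD(S) > N - \dminD$, it holds for the smallest such value of $\wD(S)$, which is exactly the right-hand side of (\ref{eqn:SingletonBounddminD}). The minimum is over a nonempty set, as $S = \{0,1,\ldots,n-1\}$ has $\wD(S) = N > N - \dminD$ (using $\dminD > 0$); and the degenerate case $|C| = 1$, where $\log_F|C| = 0$, is trivially consistent. I do not expect any substantive obstacle here: the only points requiring care are the passage from the cardinality bound $|C|\le|R_S|$ to its logarithmic form via $|R_S| = |F|^{\wD(S)}$, and the observation that it is the degree weight of the erased set, not its cardinality, that must be kept below $\dminD$.
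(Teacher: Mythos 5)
Your proof is correct and follows essentially the same route as the paper: puncture the code on a set $\overline{S}$ with $\wD(\overline{S}) < \dminD$, observe that the projection onto $R_S$ is injective (the paper states this as the shortened set still having $|C|$ elements), and conclude $|C| \le |R_S| = |F|^{\wD(S)}$ before minimizing over admissible $S$. The only difference is that you spell out the injectivity and counting steps in more detail than the paper does.
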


\begin{proof}
Recall the notation $\overline{S}$ and $R_S$ as in (\ref{eqn:RS}).
Let $\overline{S}$ be a subset of $\{0,1,\ldots, n-1\}$ with
$\wD(\overline{S}) < \dminD$. For every word $c\in C$, erase its
components in $\overline{S}$. The resulting set of shortened
words, which are elements of $R_S$, has still $|C|$ elements. Thus
$|C| \leq |R_S| = |F|^{\wD(S)}$, and
(\ref{eqn:SingletonBounddminD}) follows.
\end{proof}

For polynomial remainder codes, we have $\log_F|C| = K$
and (\ref{eqn:SingletonBounddminD}) holds with equality.
To see this, we first write (\ref{eqn:SingletonBounddminD}) as
\begin{equation}  \label{eqn:SingletonBounddminD.1}
 K \leq \min_{S \subset \{0,\ldots, n-1\}}\{\wD(S): \wD(S)> N- \dminD \}.
\end{equation}
On the other hand, for $S = \{0,\ldots,k-1\}$, we have $\wD(S) =
K$, and using (\ref{eqn:MinDegreeWeightNmK}), we obtain
\begin{equation}  \label{eqn:SingletonBounddminD.2}
\min_{S \subset \{0,\ldots, n-1\}}\{\wD(S): \wD(S)> N- \dminD \} \leq K.
\end{equation}
We thus have equality in (\ref{eqn:SingletonBounddminD.1})
and (\ref{eqn:SingletonBounddminD.2}), and therefore also in (\ref{eqn:SingletonBounddminD}).

In the special case where all the moduli
$m_0(x),\ldots,m_{n-1}(x)$ have the same degree,
the two Singleton bounds (\ref{eqn:SingletonBounddminD}) and
(\ref{eqn:SingletonBounddminH})
are equivalent.

\subsection{Interpolation and  Erasures Decoding}
\label{sec:ErasuresDecodingPRC}

We now return to the subject of Section~\ref{sec:ErasuresDecoding}
and specialize it to polynomial remainder codes.
Let $C$ be a code as in Definition~\ref{def:PR_Code}.
Let $c = (c_0,\ldots,c_{n-1}) =\psi(a(x)) \in C$
be the codeword corresponding to some polynomial $a(x) \in R_{M_k}$.
Let $S$ be a set of positions $i\in \{ 0, \ldots, n-1\}$ where $c_i$ is known.
Let $\tilde{c} = (\tilde{c}_0, \ldots, \tilde{c}_{n-1})$ satisfy
$\tilde{c}_i = c_i$ for $i\in S$ with arbitrary $\tilde{c}_i \in R_{m_i}$
for $i\not\in S$.
Suppose we wish to reconstruct $a(x)$ from $\tilde{c}$ and $S$.

Let $\overline{S} \eqdef \{0,\ldots,n-1\} \setminus S$
be the indices of the unknown components of $c$
and
let $M_{\overline{S}}(x)=\prod_{i \in \overline{S}}
m_i(x)$ as in Section~\ref{sec:ErasuresDecoding}.
Recall that $\wD(\overline{S})$ denotes the degree weight of the unknown (erased) components of $c$.
Then Theorem \ref{theorem:reconstruction} can be restated as follows:
\begin{theorem}[Fixed-Transform Interpolation for Polynomial Remainder Codes]
\label{theorem:Erasure Correction Bound}
If
\begin{equation} \label{eqn:era_bound}
 \wD(\overline{S}) \leq N-K,
\end{equation} then
\begin{equation}
      a(x)= Z(x)/M_{\overline{S}}(x) \label{eqn:erasure.0}
\end{equation}
with
\begin{equation} \label{eqn:erasure.1}
  Z(x) \eqdef M_{\overline{S}}(x) \psi^{-1}(\tilde{c}) \bmod M_n(x).
\end{equation}
\eproofnegspace
\end{theorem}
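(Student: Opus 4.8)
The plan is to show that Theorem~\ref{theorem:Erasure Correction Bound} is just Theorem~\ref{theorem:reconstruction} rewritten in the degree language appropriate to $R=F[x]$. First I would observe that the hypotheses match: in the polynomial setting the symbol $|\cdot|$ stands for degree, so $|M_S| \geq |M_k|$ becomes $\deg M_S(x) \geq \deg M_k(x)$. Since $\deg M_n(x) = \deg M_S(x) + \deg M_{\overline{S}}(x)$ (the moduli being relatively prime, their product has degree equal to the sum of degrees) and $N = \deg M_n(x)$, $K = \deg M_k(x)$, the condition $\deg M_S(x) \geq \deg M_k(x)$ is equivalent to $N - \wD(\overline{S}) \geq K$, i.e.\ to $\wD(\overline{S}) \leq N-K$, which is exactly~(\ref{eqn:era_bound}). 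Here I also use $\deg M_{\overline{S}}(x) = \sum_{i\in\overline{S}}\deg m_i(x) = \wD(\overline{S})$ from Definition~\ref{def:DegreeWeight}.

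Next I would apply Theorem~\ref{theorem:reconstruction} verbatim. It gives $\psi^{-1}(c) = Z/M_{\overline{S}}$ with $Z \eqdef (M_{\overline{S}} \cdot \psi^{-1}(\tilde{c})) \bmod M_n$, and it asserts that $Z$ is a multiple of $M_{\overline{S}}$, so the division in~(\ref{eqn:erasure.0}) is exact (a polynomial division with zero remainder). Since $c = \psi(a(x))$ with $a(x)\in R_{M_k}$, we have $\psi^{-1}(c) = a(x)$, so~(\ref{eqn:erasure.0}) and~(\ref{eqn:erasure.1}) are precisely the specialization of~(\ref{eqn:recon.4}) and~(\ref{eqn:recon.3}) with $M_{\overline{S}}(x)$ in place of $M_{\overline{S}}$.

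The only point needing a word of care is that Theorem~\ref{theorem:reconstruction}, as stated, requires $|S|>0$ (so that $M_S$ is a nonempty product and $M_{\overline{S}} = M_n/M_S$ makes sense); if $S=\emptyset$ then $M_{\overline{S}}=M_n$, $Z\equiv 0$, and the formula degenerates. But if $S=\emptyset$, condition~(\ref{eqn:era_bound}) forces $N \leq N - K$, i.e.\ $K\le 0$, i.e.\ the code has dimension zero; in that (vacuous) case $a(x)=0$ and $Z(x)=0$, so~(\ref{eqn:erasure.0}) is to be read as $0/M_n(x)=0$, consistent with the convention. For any code of positive dimension, (\ref{eqn:era_bound}) already forces $\wD(S) = N - \wD(\overline{S}) \geq K > 0$, hence $S\neq\emptyset$, and the reduction to Theorem~\ref{theorem:reconstruction} goes through with nothing further to check.

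I do not expect a genuine obstacle here: the proof is a translation exercise, and the substantive content was already established in Theorem~\ref{theorem:reconstruction}. If anything, the ``hard part'' is purely cosmetic---making sure the degree bookkeeping $\deg M_S + \deg M_{\overline{S}} = N$ is invoked so that the inequality~(\ref{eqn:recons.cond.1}) and the inequality~(\ref{eqn:era_bound}) are seen to be the same statement. A one- or two-line proof citing Theorem~\ref{theorem:reconstruction} and this degree identity is all that is required.
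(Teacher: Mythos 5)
Your proposal is correct and matches the paper's own treatment: the paper presents this theorem as a direct restatement of Theorem~\ref{theorem:reconstruction} and justifies it with exactly the same degree bookkeeping, namely that the left-hand side of (\ref{inq:recons.cond.1}) is $|M_S| = N - \wD(\overline{S})$ and the right-hand side is $|M_k| = K$. Your extra remark on the $S=\emptyset$ edge case is a harmless refinement the paper does not bother with.
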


The equivalence of (\ref{eqn:era_bound}) and (\ref{inq:recons.cond.1}) follows
from noting that the left-hand side of (\ref{inq:recons.cond.1})
is $|M_S| = N - \wD(\overline{S})$ and the right-hand side of (\ref{inq:recons.cond.1})
is $|M_k| = K$.

Since $\overline{S}$ contains the support set of $\tilde{c} - c$,
the polynomial $M_{\overline{S}}(x)$ is a multiple of an error locator polynomial
(as will be defined in Section~\ref{section: Locator Factor Key Equation}).

In contrast to most other statements in this paper,
Theorem~\ref{theorem:Erasure Correction Bound} appears to be new
even when specialized to Reed-Solomon codes (as in
Example~\ref{example:RS}), where $M_n(x) = x^n-1$ and the modulo
operation in (\ref{eqn:erasure.1}) is computationally trivial.

\subsection{Minimum-Distance Decoding}

\label{sec:MinDistDec}

Let $C$ be a code as in Definition~\ref{def:PR_Code}. The receiver
sees $y=c+e$, where $c\in C$ is the transmitted codeword and $e$
is an error pattern. A~\emph{minimum Hamming distance decoder} is
a decoder that produces
\begin{equation}
      \hat{c}=\argmin_{c \in C} \dH(c,y).
      \label{eqn:minHammingDecoder}
\end{equation}
A~\emph{minimum degree-weighted distance decoder} is a
decoder that produces
\begin{equation}
      \hat{c}=\argmin_{c \in C} \dD(c,y).
      \label{eqn:minDegreeDecoder}
\end{equation}
In general, the decoding rules (\ref{eqn:minHammingDecoder})
and (\ref{eqn:minDegreeDecoder}) produce different estimates $\hat{c}$
as will be illustrated by the examples below.

\begin{theorem}[Basic Error Correction Bounds] \label{theorem:BsaicBondType-I}
If $\dH(c,y) < \dminH(C)/2$, then the rule
(\ref{eqn:minHammingDecoder}) produces $\hat{c}=c$. If $\dD(c,y) <
\dminD(C)/2$, then the rule (\ref{eqn:minDegreeDecoder}) produces
$\hat{c}=c$.
\end{theorem}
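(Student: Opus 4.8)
The plan is to prove both halves by the standard triangle-inequality argument for minimum-distance decoding, using the metric properties of $\dH$ and $\dD$ together with the minimum-distance values established earlier. First I would treat the Hamming case. Suppose $\dH(c,y) < \dminH(C)/2$ and let $\hat{c}$ be any codeword achieving $\hat{c} = \argmin_{c'\in C} \dH(c',y)$; by definition of the argmin, $\dH(\hat{c},y) \leq \dH(c,y) < \dminH(C)/2$. Applying the triangle inequality for Hamming distance (which holds because Hamming distance is a genuine metric on $R_{m_0}\times\cdots\times R_{m_{n-1}}$, as each $\wH$ term is a metric coordinatewise) gives
\begin{equation}
\dH(c,\hat{c}) \leq \dH(c,y) + \dH(y,\hat{c}) < \frac{\dminH(C)}{2} + \frac{\dminH(C)}{2} = \dminH(C).
\end{equation}
Since $c,\hat{c}\in C$ and the distance between them is strictly less than the minimum distance of the code, we must have $c = \hat{c}$, which is the claim.

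The degree-weighted case is entirely parallel, and here I would invoke explicitly the triangle inequality for $\dD$ stated in equation~(\ref{eqn:triangle-inequality}) of the excerpt, together with the definition~(\ref{eqn:defdminD}) of $\dminD(C)$. Assuming $\dD(c,y) < \dminD(C)/2$ and letting $\hat{c} = \argmin_{c'\in C}\dD(c',y)$, optimality gives $\dD(\hat{c},y) \leq \dD(c,y) < \dminD(C)/2$, and then
\begin{equation}
\dD(c,\hat{c}) \leq \dD(c,y) + \dD(y,\hat{c}) < \dminD(C),
\end{equation}
forcing $c = \hat{c}$ by the definition of minimum degree-weighted distance. (Note that $\dD$ being a pseudometric—it is symmetric, satisfies the triangle inequality, and vanishes iff its argument is zero on all coordinates—suffices; one only needs that distinct codewords are at distance $\geq \dminD(C)$.)

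I do not expect any genuine obstacle here; the only points requiring a word of care are (i) confirming that the decoders in~(\ref{eqn:minHammingDecoder}) and~(\ref{eqn:minDegreeDecoder}) are well-defined in the sense that the $\argmin$ is over a nonempty finite set, which is immediate since $C$ is nonempty and finite, and (ii) making sure the strict inequalities propagate correctly through the triangle inequality so that we land strictly below $\dminH(C)$ (resp.\ $\dminD(C)$) rather than merely at it—this is exactly why the hypothesis uses the open condition $<\dminH(C)/2$. If one wished, the two cases could be unified by observing that both $\dH$ and $\dD$ arise as $\dH(\psi(a),\psi(b)) = \sum_{i} w_i\,[\psi_i(a)\neq\psi_i(b)]$ for nonnegative weights $w_i$ (all $w_i=1$ for Hamming, $w_i = \deg m_i(x)$ for degree-weighted), so the triangle inequality holds simultaneously; but presenting the two cases separately is cleaner and shorter.
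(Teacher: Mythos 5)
Your proposal is correct and is essentially the paper's own argument: the paper proves only the degree-weighted half (assuming $\hat{c}\neq c$ and deriving $\dminD(C)\leq \dD(\hat{c},y)+\dD(c,y)\leq 2\dD(c,y)$, a contradiction), leaving the Hamming half to the identical "standard pattern." Your direct phrasing via $\dD(c,\hat{c})<\dminD(C)$ is just the contrapositive of the same triangle-inequality step, so there is nothing substantive to distinguish the two.
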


\begin{proof}
The proof follows the standard pattern; we prove only the second part.
Assume $\hat{c}\neq c$, which implies
$\dD(\hat{c},y) \leq \dD(c,y)$. Using
the triangle inequality~(\ref{eqn:triangle-inequality}), we obtain
$\dminD(C) \leq \dD(\hat{c},c)\leq
\dD(\hat{c},y) + \dD(c,y)\leq 2\dD(c,y)$.
\end{proof}
The second part of Theorem~\ref{theorem:BsaicBondType-I} can also
be formulated as follows: if
\begin{equation} \label{Basicbound:dminD}
\wD(e) \leq \tD\eqdef \left\lfloor \frac{N-K}{2} \right\rfloor,
\end{equation}
then the rule (\ref{eqn:minDegreeDecoder}) produces $\hat{c}=c$.
If the Ordered-Degree Condition~(\ref{eqn:OrderedDegreeCondition})
is satisfied, then the first part of
Theorem~\ref{theorem:BsaicBondType-I} implies the following: if
\begin{equation} \label{Basicbound:dminH}
\wH(e) \leq \tH\eqdef \left\lfloor \frac{n-k}{2} \right\rfloor,
\end{equation}
then the rule (\ref{eqn:minHammingDecoder}) produces $\hat{c}=c$.

Depending on the degrees $\deg m_i(x)$, it is possible that the
condition $\wH(e)\leq \tH$ implies $\wD(e)\leq \tD$ (see
Example~\ref{example:decoderAB.2} below). In general, however,
none of the two decoding rules (\ref{eqn:minHammingDecoder})
and~(\ref{eqn:minDegreeDecoder}) is uniformly stronger than the
other.

\begin{example} \label{example:decoderAB.1}
Let $k=3$ and $n=5$, and let $\deg m_i(x)=i$ for $i=1,2,\ldots,5$.
We then have $\tH=1$, $K=6$, $N=15$, and $\tD=4$.
Consider the following two decoders:
Decoder~A corrects all errors with $\wH(e) \leq \tH$
and
Decoder~B corrects all errors with $\wD(e) \leq \tD$.
We then observe:
\begin{itemize}
\item
Decoder~A corrects all single symbol errors in any position.
\item
Decoder~B corrects all single symbol errors in the first 4 symbols (but not in position~5),
and it corrects two symbol errors in positions 1 and~2,
or in positions 1 and~3.
\eproofnegspace
\end{itemize}
\end{example}

\begin{example} \label{example:decoderAB.2}
Let $k=3$ and $n=5$, and let
$\deg m_1(x)=\deg m_2(x)=\deg m_3(x)=1$ and
$\deg m_4(x)=\deg m_5(x)=2$.
We then have $\tH=1$, $K=3$, $N=7$, and $\tD=2$.
Considering the same decoders as in
Example~\ref{example:decoderAB.1}, we observe:
\begin{itemize}
\item Decoder~A corrects all single symbol errors in any position.
\item Decoder~B also corrects all single symbol errors, and in
addition, it corrects any two symbol errors in the first 3
symbols. \eproofnegspace
\end{itemize}
\end{example}

\subsection{Summary of Code Parameters}

Let us summarize the key parameters of a polynomial remainder code
$C$ both in terms of Hamming distance and in terms of
degree-weighted distance. For the latter, the code parameters are
$(N,K,\dminD)$ with $N$, $K$, and $\dminD$ defined as in
(\ref{eqn:defN}), (\ref{eqn:defK}), (\ref{eqn:defdminD}) and with
$\dminD$ as in~(\ref{eqn:MinDegreeWeight}).
By the \emph{rate} of the code, we mean the
quantity
\begin{equation} \label{eqn:defRate}
      \frac{1}{N}\log_{|F|}|C|=\frac{K}{N}
\end{equation}
where $F$ is the underlying field.

With respect to Hamming distance, we have the parameters $(n,
k,\dminH)$ and the \emph{symbol rate} $k/n$. If the code $C$
satisfies the Ordered-Degree Condition
(\ref{eqn:OrderedDegreeCondition}), we have $\dminH = n-k+1$.

In the special case where all the moduli
$m_0(x),\ldots,m_{n-1}(x)$ have the same
degree, the two triples $(N,K,\dminD)$ and $(n,k,\dminH)$
are equal up to a scale factor and the rate (\ref{eqn:defRate})
equals the symbol rate $k/n$.

\section{Error Factor Polynomial}
\label{section: Locator Factor Key Equation}

Decoding Reed-Solomon codes can be reduced to solving a key
equation that involves an error locator polynomial \cite{Roth}.
We are going to propose
such an approach for polynomial remainder codes.
As it turns out, in general (i.e., beyond irreducible remainder codes),
we will need a slight generalization
of an error locator polynomial.

Let $C$ be a polynomial remainder code of the form
(\ref{eqn:PRC}). For the received $y=c+e$, where
$c=(c_0,\ldots,c_{n-1}) \in C$ is a transmitted codeword, and
where $e=(e_0,\ldots,e_{n-1})$ is an error pattern, let
$Y(x)=a(x)+E(x)$ denote the pre-image $\psi^{-1}(y)$ of $y$
with $\psi^{-1}$ as in (\ref{eqn:c.2}), where $a(x)=\psi^{-1}(c)$
is the transmitted-message polynomial, and where $E(x)$ denotes
the pre-image $\psi^{-1}(e)$ of the error $e$.

\subsection{Error Factor Polynomial, Key Equation, and Interpolation}
 \label{section:Error factors}

\begin{definition} \label{def:errorfactor.1}
An \emph{error factor polynomial}
is a nonzero polynomial $\Lambda(x) \in F[x]$ such that
\begin{equation} \label{eqn:efEx}
  \Lambda(x) E(x)\bmod M_n(x)= 0.
\end{equation}
\eproofnegspace
\end{definition}
Clearly, the polynomial
\begin{equation} \label{eqn:ef}
  \Lambda_f(x) \eqdef \frac{M_n(x)}{\gcd\big(E(x),M_n(x)\big)}
\end{equation}
is the unique monic polynomial of the smallest degree that
satisfies (\ref{eqn:efEx}).

A closely related notion is the \emph{error locator polynomial}
\begin{equation}
      \Lambda_e(x)\eqdef \prod_{i:e_i\neq 0}m_i(x),       \label{eqn:el}
\end{equation}
which is of degree $\deg\Lambda_e(x)=\wD(e)$.
Note that $\Lambda_e(x)$ qualifies as an error factor polynomial.
In the special case where all the moduli $m_i(x), 0\leq i \leq
n-1,$ are irreducible (e.g., for irreducible polynomial remainder
codes), we have
\begin{equation}
\gcd\big(E(x),M_n(x)\big) = \prod_{i: e_i=0} m_i(x)
\end{equation}
and thus $\Lambda_f(x) = \Lambda_e(x)$.

In any case,
every error factor polynomial $\Lambda(x)$ is a multiple of $\Lambda_f(x)$.
This applies, in particular, to $\Lambda_e(x)$ and thus
\begin{equation} \label{eqn:LambdafLambdaeWd}
\deg\Lambda_f(x)\leq \deg\Lambda_e(x) = \wD(e).
\end{equation}
The following theorem is then obvious:
\begin{theorem}[Key Equation]\label{theorem:Key Equation}
The error factor polynomial (\ref{eqn:ef}) satisfies
\begin{equation}
      A(x)M_n(x)=\Lambda_f(x)E(x)         \label{eqn:KeyEqu.1}
\end{equation}
for some polynomial $A(x)\in F[x]$ of degree smaller than $\deg
\Lambda_f(x)$. Conversely, if some monic polynomial $G(x)\in F[x]$
satisfies
\begin{equation}
      A(x)M_n(x)=G(x)E(x)                \label{eqn:KeyEqu.2}
\end{equation}
for some $A(x)\in F[x]$, then $G(x)$ is a multiple of $\Lambda_f(x)$.
\end{theorem}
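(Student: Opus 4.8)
The plan is to prove both halves of Theorem~\ref{theorem:Key Equation} by working directly from the definition of $\Lambda_f(x)$ in~(\ref{eqn:ef}) together with the defining property~(\ref{eqn:efEx}) of an error factor polynomial. For the first assertion, I would start from $\Lambda_f(x)E(x) \bmod M_n(x) = 0$, which says that $M_n(x)$ divides $\Lambda_f(x)E(x)$, so we may write $\Lambda_f(x)E(x) = A(x)M_n(x)$ for some $A(x)\in F[x]$; this is exactly~(\ref{eqn:KeyEqu.1}). The only nontrivial point is the degree bound $\deg A(x) < \deg\Lambda_f(x)$. Here I would use that $E(x) = \psi^{-1}(e) \in R_{M_n}$, hence $\deg E(x) < \deg M_n(x) = N$. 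Taking degrees in $\Lambda_f(x)E(x) = A(x)M_n(x)$ gives $\deg A(x) = \deg\Lambda_f(x) + \deg E(x) - \deg M_n(x) < \deg\Lambda_f(x)$, which is the claim. (If $E(x)=0$ then there is no error and the statement is vacuous or trivial, with $\Lambda_f(x)=1$ and $A(x)=0$; I would note this edge case in a half-sentence.)

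For the converse, suppose a monic $G(x)$ satisfies~(\ref{eqn:KeyEqu.2}), i.e.\ $G(x)E(x) = A(x)M_n(x)$ for some $A(x)$. Then $M_n(x)\mid G(x)E(x)$, which says precisely that $G(x)$ is an error factor polynomial in the sense of Definition~\ref{def:errorfactor.1}. But the text has already established — in the paragraph containing~(\ref{eqn:LambdafLambdaeWd}) — that every error factor polynomial is a multiple of $\Lambda_f(x)$, so I would simply invoke that fact to conclude. For completeness I would also sketch the underlying divisibility argument, in case it is wanted inline: writing $G(x)E(x) = A(x)M_n(x)$ and dividing both sides by $g(x) \eqdef \gcd(E(x),M_n(x))$ gives $G(x)\,\frac{E(x)}{g(x)} = A(x)\Lambda_f(x)$; since $\frac{E(x)}{g(x)}$ and $\Lambda_f(x) = \frac{M_n(x)}{g(x)}$ are coprime, $\Lambda_f(x)$ must divide $G(x)$.

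The main obstacle, such as it is, is purely bookkeeping: one must be careful that the degree inequality $\deg E(x) < N$ is genuinely available — it holds because $E(x)$ is taken to lie in $R_{M_n}$ by construction of $\psi^{-1}$ — and that the coprimality of $E(x)/g(x)$ and $M_n(x)/g(x)$ is correctly justified (it is the standard fact that dividing out the gcd leaves coprime cofactors). No deeper idea is needed; indeed the statement is flagged as "obvious" in the text, and the proof is essentially a transcription of~(\ref{eqn:ef}) and the already-proved multiplicity property.
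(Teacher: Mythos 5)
Your proposal is correct and matches the paper's reasoning: the paper gives no explicit proof, declaring the theorem ``obvious'' as an immediate consequence of the definition~(\ref{eqn:ef}) of $\Lambda_f(x)$, the degree bound $\deg E(x) < \deg M_n(x)$, and the already-stated fact that every error factor polynomial is a multiple of $\Lambda_f(x)$ --- which is exactly the argument you spell out.
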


\noindent
For irreducible polynomial remainder codes, 
$\Lambda_f(x)$ in Theorem \ref{theorem:Key Equation} can be
replaced everywhere by $\Lambda_e(x)$ because, in this case, $\Lambda_f(x) = \Lambda_e(x)$.

The following theorem is a slight generalization of Theorem
\ref{theorem:Erasure Correction Bound}.
\begin{theorem}[Error Factor-based Interpolation] \label{theorem:Erasure Correction}
If $G(x)$ is a multiple of $\Lambda_f(x)$ with
\begin{equation}
   \deg G(x)\leq N-K, \label{eqn:con.0}
\end{equation}
then
\begin{equation} \label{eqn:LocatorBasedInterpolation}
 a(x)=\frac{G(x)Y(x) \bmod M_n(x)}{G(x)}
\end{equation}
\eproofnegspace
\end{theorem}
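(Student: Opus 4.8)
The plan is to reduce the claim to the erasures‑decoding result already established in Theorem~\ref{theorem:Erasure Correction Bound}. First I would observe that since $G(x)$ is a multiple of $\Lambda_f(x)$, the key‑equation property gives $G(x)E(x) \bmod M_n(x) = 0$; this is just the defining property~(\ref{eqn:efEx}) of an error factor polynomial together with the fact (stated before Theorem~\ref{theorem:Key Equation}) that every error factor polynomial, in particular every multiple of $\Lambda_f(x)$, kills $E(x)$ modulo $M_n(x)$. Hence, writing $Y(x)=a(x)+E(x)$,
\begin{equation}
G(x)Y(x) \bmod M_n(x) = \big(G(x)a(x) + G(x)E(x)\big)\bmod M_n(x) = G(x)a(x) \bmod M_n(x).
\end{equation}

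Next I would remove the outer ``$\bmod M_n(x)$''. Because $\deg G(x)\le N-K$ and $\deg a(x) < K$ (as $a(x)\in R_{M_k}$ with $K=\deg M_k(x)$), we have $\deg\big(G(x)a(x)\big) < N = \deg M_n(x)$, so the reduction modulo $M_n(x)$ acts as the identity on $G(x)a(x)$. Therefore $G(x)Y(x)\bmod M_n(x) = G(x)a(x)$, and dividing by the nonzero polynomial $G(x)$ yields exactly~(\ref{eqn:LocatorBasedInterpolation}); in particular the division on the right‑hand side is exact, so the formula is well defined.

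Alternatively—and this is probably the cleaner write‑up—one can simply invoke Theorem~\ref{theorem:Erasure Correction Bound} directly: let $\overline{S}$ be the support of $e$ (equivalently of $E$'s ``bad'' positions), so that $\Lambda_e(x)=M_{\overline{S}}(x)$ and $\wD(\overline{S})=\deg\Lambda_e(x)$. If $G(x)=M_{\overline{S}}(x)$ and $\deg G(x)\le N-K$, the claim is Theorem~\ref{theorem:Erasure Correction Bound} verbatim, using $\psi^{-1}(\tilde c)=Y(x)$ since $\tilde c$ may be taken to be the received word $y$ on the erased‑free positions. The present theorem is then the observation that nothing in that argument used minimality of $\Lambda_e(x)$: any multiple $G(x)$ of $\Lambda_f(x)$ of degree at most $N-K$ works, because what the proof really needed was only $G(x)E(x)\equiv 0 \pmod{M_n(x)}$ and $\deg G(x)+\deg a(x)<N$.

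I do not expect a genuine obstacle here; the statement is labeled ``obvious''-adjacent and follows formally. The only point requiring a moment's care is making sure the degree bookkeeping is tight: one must use $\deg a(x)\le K-1$ rather than $\deg a(x)\le K$, so that $\deg\big(G(x)a(x)\big)\le (N-K)+(K-1)=N-1<N$; if one were sloppy and allowed $\deg a(x)=K$, the product could reach degree $N$ and the outer modulo reduction would no longer be trivial. Everything else is a routine manipulation of the congruence $GY\equiv Ga \pmod{M_n}$ and a division.
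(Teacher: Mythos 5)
Your main argument is correct and is essentially the paper's own proof: expand $G(x)Y(x) \bmod M_n(x) = G(x)a(x) + \big(G(x)E(x)\bmod M_n(x)\big)$ using $\deg G(x)+\deg a(x) < N$, note that the second term vanishes because any multiple of $\Lambda_f(x)$ annihilates $E(x)$ modulo $M_n(x)$ (Theorem~\ref{theorem:Key Equation}), and divide by $G(x)$. The degree bookkeeping you flag ($\deg a(x)\le K-1$) is exactly the point the paper's condition~(\ref{eqn:con.0}) is there to secure, so nothing further is needed.
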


\begin{proof}
With $Y(x)=a(x)+E(x)$ and with $G(x)$ satisfying (\ref{eqn:con.0}), we have
\begin{eqnarray}
 G(x)Y(x) \bmod M_n(x)
    &=& G(x)\left(a(x)+E(x)\right) \bmod M_n(x) \nonumber \\
    &=& G(x)a(x)+ \tilde{E}(x)    \label{key.2}
\end{eqnarray}
with
\begin{equation}
 \tilde{E}(x)\eqdef G(x)E(x) \bmod M_n(x).  \label{key.3}
\end{equation}
If $G(x)$ is a multiple of $\Lambda_f(x)$, then $\tilde{E}(x)=0$
by Theorem~\ref{theorem:Key Equation} and
(\ref{eqn:LocatorBasedInterpolation}) follows.
\end{proof}
For irreducible polynomial remainder codes,
$\Lambda_f(x)$ in Theorem \ref{theorem:Erasure Correction} can be
replaced by $\Lambda_e(x)$
and Theorem~\ref{theorem:Erasure Correction} reduces
to Theorem \ref{theorem:Erasure Correction Bound}.
For non-irreducible codes, however,
Theorem~\ref{theorem:Erasure Correction} is more general
than Theorem~\ref{theorem:Erasure Correction Bound}
because error patterns with $\wD(e)> N-K$ but $\deg \Lambda_f(x)\leq N-K$ can exist.

\subsection{Error Factor Test and Error Locator Test}
 \label{section:Error Factor Test}

Recall $\tD \eqdef \left\lfloor \frac{N-K}{2} \right\rfloor$ from
(\ref{Basicbound:dminD}) and $\tH \eqdef \left\lfloor
\frac{n-k}{2} \right\rfloor$ from (\ref{Basicbound:dminH}).

\begin{theorem}[Error Factor Test]\label{theorem:Error Factor Test}
Let $y = \psi(a) + e$ as above,
let $G(x)$ be a nonzero polynomial,
and let
\begin{equation}
Z(x)\eqdef  G(x)Y(x) \bmod M_n(x). \nonumber
\end{equation}
Assume that the following conditions are satisfied:
\begin{enumerate}
\item $\deg \Lambda_f(x) \leq \tD$

\item $\deg G(x)\leq \tD$

\item $G(x)$ divides $Z(x)$

\item  $\deg Z(x)-\deg G(x) < K$.
\end{enumerate}
Then $G(x)$ is a multiple of $\Lambda_f(x)$ and $Z(x)=G(x)a(x)$.
\end{theorem}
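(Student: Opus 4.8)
The plan is to show that the four hypotheses force $\tilde E(x) \eqdef G(x)E(x) \bmod M_n(x)$ to vanish, after which $Z(x) = G(x)a(x)$ and the multiplicity claim both follow from the Key Equation (Theorem~\ref{theorem:Key Equation}). As in the proof of Theorem~\ref{theorem:Erasure Correction}, write $Z(x) = G(x)a(x) + \tilde E(x)$, using $Y(x) = a(x)+E(x)$ and the fact that $\deg(G(x)a(x)) < \tD + K \leq N$ (since $\tD = \lfloor (N-K)/2 \rfloor \leq N-K$), so that the reduction modulo $M_n(x)$ does not touch the $G(x)a(x)$ term. The entire burden is to prove $\tilde E(x) = 0$; then $G(x)$ divides $Z(x) = G(x)a(x)$ trivially, and applying Theorem~\ref{theorem:Key Equation} to $A(x)M_n(x) = G(x)E(x)$ (which is what $\tilde E = 0$ says) shows $G(x)$ is a multiple of $\Lambda_f(x)$.

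The argument for $\tilde E(x) = 0$ is a degree count in the style of the classical Reed-Solomon key-equation argument. Let $\Lambda_f(x)$ be the minimal error factor polynomial~(\ref{eqn:ef}) and write $A_f(x)M_n(x) = \Lambda_f(x)E(x)$ with $\deg A_f(x) < \deg\Lambda_f(x)$ by Theorem~\ref{theorem:Key Equation}. Consider the two polynomials $\Lambda_f(x) Z(x)$ and $G(x)\cdot\big(\Lambda_f(x)Y(x)\bmod M_n(x)\big)$; both are congruent modulo $M_n(x)$ to $G(x)\Lambda_f(x)Y(x)$, hence congruent to each other. Using $\Lambda_f(x)E(x) \equiv 0$, one gets $\Lambda_f(x)Y(x) \bmod M_n(x) = \Lambda_f(x)a(x)$ (again the degree is below $N$ since $\deg\Lambda_f \leq \tD$ and $\deg a < K$), so the second polynomial is $G(x)\Lambda_f(x)a(x)$. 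Therefore $\Lambda_f(x)Z(x) \equiv G(x)\Lambda_f(x)a(x) \pmod{M_n(x)}$, i.e. $\Lambda_f(x)\tilde E(x) \equiv 0 \pmod{M_n(x)}$, which gives $\Lambda_f(x)\tilde E(x) = B(x)M_n(x)$ for some polynomial $B(x)$. Now bound degrees: $\deg(\Lambda_f(x)\tilde E(x)) = \deg\Lambda_f(x) + \deg\tilde E(x)$; since $\tilde E(x)$ is a residue mod $M_n$ we have $\deg\tilde E(x) < N$, but we need the sharper bound coming from hypotheses 3 and 4. Here hypothesis 3 ($G \mid Z$) lets us write $Z(x) = G(x)\,q(x)$ with $\deg q(x) = \deg Z(x) - \deg G(x) < K$ by hypothesis 4; then $\tilde E(x) = Z(x) - G(x)a(x) = G(x)(q(x) - a(x))$, so $\deg\tilde E(x) \leq \deg G(x) + \max(\deg q, \deg a) < \tD + K$. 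Combined with $\deg\Lambda_f(x) \leq \tD$, we get $\deg(\Lambda_f(x)\tilde E(x)) < 2\tD + K \leq (N-K) + K = N = \deg M_n(x)$. Since $\Lambda_f(x)\tilde E(x)$ is a multiple of $M_n(x)$ of degree strictly less than $\deg M_n(x)$, it must be zero, and as $\Lambda_f(x)\neq 0$ we conclude $\tilde E(x) = 0$.

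The main obstacle, and the place to be careful, is ensuring every modular reduction I invoke is actually trivial, i.e. that the relevant product has degree below $N = \deg M_n(x)$; this is exactly where hypotheses 1, 2, and 4 each get used ($\deg\Lambda_f \leq \tD$, $\deg G \leq \tD$, and the degree bound on $Z$ controlling $\deg\tilde E$), and where the inequality $2\tD \leq N-K$ is essential. A secondary point is that hypothesis~1 ($\deg\Lambda_f \leq \tD$) cannot be dropped: without it, $\tilde E(x)$ need not vanish even if $G$ happens to divide $Z$ with a low-degree quotient, because the relation $\Lambda_f(x)\tilde E(x) = B(x)M_n(x)$ would no longer force $B = 0$. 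Once $\tilde E = 0$ is established, the rest is immediate from Theorem~\ref{theorem:Key Equation}, so no further work is needed.
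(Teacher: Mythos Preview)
Your proof is correct and follows essentially the same strategy as the paper: both establish $\tilde E(x)=0$ by combining the error-factor relation $\Lambda_f(x)E(x)\equiv 0 \pmod{M_n(x)}$ with the degree bound $\deg\tilde E(x) < \tD + K$ obtained from Conditions~3 and~4, and then invoke the converse part of Theorem~\ref{theorem:Key Equation}. The only cosmetic difference is that the paper reduces $G(x)(E(x)-Q(x))=b(x)M_n(x)$ modulo $\overline{\Lambda}_f(x)=M_n(x)/\Lambda_f(x)$ to force $Q(x)=0$, whereas you multiply $\tilde E(x)$ by $\Lambda_f(x)$ and observe the product is a multiple of $M_n(x)$ of degree below $N$; these are dual formulations of the same degree count, and your derivation of $\Lambda_f(x)\tilde E(x)\equiv 0$ could even be shortened to the one-liner $\Lambda_f\tilde E \equiv \Lambda_f G E \equiv G\cdot 0 \equiv 0 \pmod{M_n}$.
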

Note that the conditions in the theorem are satisfied for $G(x)=\Lambda_f(x)$.
Note also that for non-irreducible polynomial remainder codes, there
may exist error patterns such that $\wD(e) > \tD$ but $\deg
\Lambda_f(x) \leq \tD$. For irreducible polynomial remainder
codes, Condition~1 in Theorem~\ref{theorem:Error Factor Test} is
equivalent to $\deg \Lambda_e(x)=\wD(e)\leq \tD$, and
$\Lambda_f(x)$ in Theorem~\ref{theorem:Error Factor Test} can be
replaced everywhere by $\Lambda_e(x)$.

\begin{proofof}{of Theorem~\ref{theorem:Error Factor Test}}
Assume that Conditions 1--4 are satisfied. Note that Condition~2
implies (\ref{eqn:con.0}), and thus (\ref{key.2})
and~(\ref{key.3}). From (\ref{key.2}) and Condition~3, we have
\begin{equation}
      \tilde{E}(x)=G(x)Q(x)      \label{key.3.1}
\end{equation}
for some polynomial $Q(x)$
and (\ref{key.2}) can be written as
\begin{equation}
      Z(x)=G(x)(a(x)+Q(x)).      \label{key.4}
\end{equation}
From Condition 4, we then have
\begin{equation}
\deg Q(x)< K.     \label{key.4.1}
\end{equation}
Furthermore, from (\ref{key.3}) and (\ref{key.3.1}), we have
$G(x)E(x)=b(x)M_n(x)+G(x)Q(x)$
for some polynomial $b(x)$ and thus
\begin{equation}
      G(x)\left(E(x)-Q(x)\right)=b(x)M_n(x).  \label{key.5}
\end{equation}
Let
\begin{equation}
\overline{\Lambda}_f(x) \eqdef M_n(x)/\Lambda_f(x)
= \gcd\!\big( E(x), M_n(x) \big).
\end{equation}
Since $\deg\Lambda_f(x)\leq \tD$,
we have $\deg \overline{\Lambda}_f(x) \geq N-\tD$.
Taking~(\ref{key.5}) modulo $\overline{\Lambda}_f(x)$ yields
\begin{equation}
G(x)Q(x) \bmod \overline{\Lambda}_f(x)=0   \label{key.6}
\end{equation}
since $E(x) \bmod \overline{\Lambda}_f(x)=0$. From (\ref{key.6}),
we have either $Q(x)=0$ or $\deg Q(x)\geq \deg
\overline{\Lambda}_f(x)-\deg G(x)\geq N-2\tD\geq K$ since $\deg
G(x)\leq \tD$. From (\ref{key.4.1}), we then conclude $Q(x)=0$.
Thus $\tilde{E}(x)=0$ from (\ref{key.3.1}) and $Z(x)=G(x)a(x)$
from (\ref{key.2}). Finally, from (\ref{key.3}) (with
$\tilde{E}(x)=0$) and the converse part of Theorem
\ref{theorem:Key Equation}, it follows that $G(x)$ is a multiple
of $\Lambda_f(x)$.
\end{proofof}

If the code $C$ further satisfies the Ordered-Degree Condition
(\ref{eqn:OrderedDegreeCondition}), we have the following analog
of Theorem \ref{theorem:Error Factor Test}. Let $N_\text{zero}(G)$
denote the number of indices $j \in \{0,\ldots,n-1\}$ such that $G(x)
\bmod m_j(x)=0$. Note that $N_\text{zero}(\Lambda_e)=\wH(e)$.

\begin{theorem}[Error Locator Test]\label{theorem:Error Locator Test}
Let $C$ be a polynomial remainder code that satisfies the
Ordered-Degree Condition and let $y = \psi(a) + e$ as above. For
some set $S\subset \{ 0, 1, \ldots, n-1 \}$ of indices, let
$G(x)=\prod_{i\in S} m_i(x)\neq 0$ and let
\begin{equation}
Z(x)\eqdef  G(x)Y(x) \bmod M_n(x). \nonumber
\end{equation}
Assume that the following conditions are satisfied:

\begin{enumerate}

\item $\wH(e) \leq \tH$

\item $N_\text{zero}(G) \leq \tH$ and $\deg G(x)\leq
\sum_{i=n-\tH}^{n-1}\deg m_i(x)$

\item $G(x)$ divides $Z(x)$

\item $\deg Z(x)-\deg G(x) < K$.

\end{enumerate}
Then, $G(x)$ is a multiple of $\Lambda_e(x)$ and $Z(x)=G(x)a(x)$.
\end{theorem}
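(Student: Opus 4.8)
The plan is to mimic the proof of Theorem~\ref{theorem:Error Factor Test}, replacing the degree bounds coming from $\tD$ with the corresponding sums-of-degrees bounds coming from $\tH$ and the Ordered-Degree Condition. First I would observe that since $G(x)=\prod_{i\in S}m_i(x)$ with $|S|=N_\text{zero}(G)\le\tH$, and since the moduli are ordered by degree, we have $\deg G(x)=\sum_{i\in S}\deg m_i(x)\le \sum_{i=n-\tH}^{n-1}\deg m_i(x)$; the second half of Condition~2 is thus automatic, but stating it makes the key inequality transparent. Because $\tH=\lfloor(n-k)/2\rfloor$, the Ordered-Degree Condition gives $\sum_{i=n-\tH}^{n-1}\deg m_i(x)+\sum_{i=n-2\tH}^{n-\tH-1}\deg m_i(x)\le \sum_{i=k}^{n-1}\deg m_i(x)=N-K$, so in particular $\deg G(x)\le N-K$ and $2\deg G(x)\le N-K$ will not quite suffice in the crude form used before — I will need the sharper, position-aware version. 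Also note that $N_\text{zero}(G)=|S|\le\tH$ together with $\wH(e)\le\tH$ (Condition~1) shows that $G(x)$ has at most $\tH$ of the $m_i$'s as factors, and $\Lambda_e(x)=\prod_{i:e_i\ne0}m_i(x)$ has exactly $\wH(e)\le\tH$ of them.

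Next, since $\deg G(x)\le N-K$, Theorem~\ref{theorem:Erasure Correction} (via \eqref{key.2}--\eqref{key.3}) applies and gives $Z(x)=G(x)a(x)+\tilde E(x)$ with $\tilde E(x)=G(x)E(x)\bmod M_n(x)$. From Condition~3 we write $\tilde E(x)=G(x)Q(x)$, so $Z(x)=G(x)(a(x)+Q(x))$, and Condition~4 forces $\deg Q(x)<K$. As in the earlier proof, \eqref{key.3} and $\tilde E(x)=G(x)Q(x)$ yield $G(x)(E(x)-Q(x))=b(x)M_n(x)$ for some $b(x)$. Here I depart from the factor-polynomial argument: instead of reducing modulo $\overline\Lambda_f(x)$, I reduce modulo $\overline\Lambda_e(x)\eqdef M_n(x)/\Lambda_e(x)=\prod_{i:e_i=0}m_i(x)$. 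Since the moduli are relatively prime and $E(x)\equiv e_i\equiv 0\pmod{m_i}$ for every $i$ with $e_i=0$, we have $E(x)\bmod\overline\Lambda_e(x)=0$, hence $G(x)Q(x)\bmod\overline\Lambda_e(x)=0$.

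It remains to extract a contradiction unless $Q(x)=0$. From $G(x)Q(x)\equiv 0\pmod{\overline\Lambda_e(x)}$, either $Q(x)=0$ or $G(x)Q(x)$ is a nonzero multiple of $\overline\Lambda_e(x)/\gcd(G(x),\overline\Lambda_e(x))$. Write $G(x)=G_1(x)G_2(x)$ where $G_1(x)=\prod_{i\in S,\,e_i=0}m_i(x)$ collects the factors shared with $\overline\Lambda_e(x)$ and $G_2(x)=\prod_{i\in S,\,e_i\ne0}m_i(x)$; by coprimality $\gcd(G(x),\overline\Lambda_e(x))=G_1(x)$, so $\overline\Lambda_e(x)\mid G_1(x)Q(x)$ forces $(\overline\Lambda_e(x)/G_1(x))\mid Q(x)$, whence $\deg Q(x)\ge\deg\overline\Lambda_e(x)-\deg G_1(x)=N-\wD(e)-\deg G_1(x)$. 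The main obstacle — and the reason the Ordered-Degree Condition is needed — is to show this lower bound is $\ge K$, contradicting $\deg Q(x)<K$: the indices contributing to $G_1$ lie in the complement of $\{i:e_i\ne0\}$, and there are at most $\tH$ of them, so by the ordered-degree hypothesis $\deg G_1(x)\le\sum_{i=n-\tH}^{n-1}\deg m_i(x)$ and $\wD(e)\le\sum_{i=n-\tH}^{n-1}\deg m_i(x)$ as well (since $\wH(e)\le\tH$); combining, $N-\wD(e)-\deg G_1(x)\ge N-2\sum_{i=n-\tH}^{n-1}\deg m_i(x)\ge K$ by the ordered-degree inequality noted above. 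Hence $Q(x)=0$, so $Z(x)=G(x)a(x)$ and $\tilde E(x)=0$; the converse part of Theorem~\ref{theorem:Key Equation} then gives that $G(x)$ is a multiple of $\Lambda_f(x)$, and since $G(x)$ is a product of distinct moduli each of which either divides $\Lambda_e(x)$ or has $e_i\ne0$, a short divisibility check (using $\Lambda_f(x)\mid\Lambda_e(x)$ and coprimality of the $m_i$) upgrades this to: $G(x)$ is a multiple of $\Lambda_e(x)$. The delicate point throughout is bookkeeping the degree sums position by position rather than in bulk; once the ordered-degree inequality $2\sum_{i=n-\tH}^{n-1}\deg m_i(x)\le N-K$ is in hand, the rest is routine.
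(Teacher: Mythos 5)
Your overall strategy is the right one and matches the paper's in outline: Condition~2 gives $\deg G(x)\le N-K$, so (\ref{key.2})--(\ref{key.5}) go through verbatim, Conditions 3 and~4 give $Z(x)=G(x)(a(x)+Q(x))$ with $\deg Q(x)<K$, and everything reduces to showing $Q(x)=0$. Your concluding divisibility argument (that $G(x)=\prod_{i\in S}m_i(x)$ being an error factor polynomial forces $m_i(x)\mid G(x)$ for every $i$ with $e_i\neq 0$, by coprimality of the moduli) is also sound and is exactly what the paper's terse final sentence is invoking.

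The gap is in the step that forces $Q(x)=0$. You bound $\deg Q(x)\ge N-\wD(e)-\deg G_1(x)\ge N-2\sum_{i=n-\tH}^{n-1}\deg m_i(x)$ and then invoke the ``ordered-degree inequality'' $2\sum_{i=n-\tH}^{n-1}\deg m_i(x)\le N-K$. That inequality is false in general: in the paper's Example~\ref{example:decoderAB.1} (degrees $1,2,3,4,5$ and $k=3$, so $\tH=1$, $N=15$, $K=6$) the left-hand side is $10$ while $N-K=9$. The error is that you bounded $\wD(e)$ and $\deg G_1(x)$ \emph{separately} by the sum of the $\tH$ largest degrees and then added, which double-counts those top positions. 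The two index sets involved --- $\{i:e_i\neq 0\}$ and $S\cap\{i:e_i=0\}$ --- are disjoint, each of size at most $\tH$, so their union has at most $2\tH\le n-k$ elements and hence $\wD(e)+\deg G_1(x)\le\sum_{i=n-2\tH}^{n-1}\deg m_i(x)\le\sum_{i=k}^{n-1}\deg m_i(x)=N-K$, which is the bound you actually need. Equivalently (and this is the paper's route), $\overline{\Lambda}_e(x)/G_1(x)$ is a product of at least $n-2\tH\ge k$ distinct, relatively prime moduli, so if it divides a nonzero $Q(x)$ then $\deg Q(x)\ge\sum_{i=0}^{k-1}\deg m_i(x)=K$, contradicting $\deg Q(x)<K$. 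With that repair your proof is correct and essentially coincides with the paper's, which phrases the same count via $N_\text{zero}(Q)\ge n-2\tH\ge k$.
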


\noindent Note that the conditions in the theorem are satisfied
for $G(x)=\Lambda_e(x)$.

\begin{proof}
Note that Condition~2 implies
(\ref{eqn:con.0}) and Conditions 3 and~4 are the same as the two
corresponding conditions in Theorem \ref{theorem:Error Factor Test}.
Assume now that Conditions 1--4 are satisfied.
It is easily verified that we then have
both (\ref{key.2})--(\ref{key.3}) and (\ref{key.3.1})--(\ref{key.5})
for some polynomial $Q(x)$.
Let $S_\text{zero}$ denote the set of indices $i \in \{0,1,\ldots,n-1\}$
such that $E(x) \bmod m_i(x) = 0$. Equation (\ref{key.5}) implies
that, for each $i \in S_\text{zero}$, we have
\begin{equation}
G(x)Q(x) \bmod m_i(x)=0   \label{key.6.1}
\end{equation}
and thus
$N_\text{zero}(Q)\geq |S_\text{zero}|-N_\text{zero}(G)$.
Since $N_\text{zero}(G) \leq \tH$
and $|S_\text{zero}|=n-\wH(e) \geq n-\tH$,
we have $N_\text{zero}(Q) \geq n-2\tH$.
It follows that $N_\text{zero}(Q) \geq k$,
which implies either $\deg Q(x)\geq K$ or $Q(x)=0$.
It then follows from (\ref{key.4.1}) that $Q(x)=0$.

We then have $\tilde{E}(x)=0$ from (\ref{key.3.1}) and thus
$Z(x)=G(x)a(x)$ from (\ref{key.2}). Finally, from (\ref{key.3})
(with $\tilde{E}(x)=0$) and the converse part of Theorem
\ref{theorem:Key Equation}, it follows that $G(x)$ $(=\prod_{i\in
S} m_i(x))$ is a multiple of $\Lambda_e(x)$.
\end{proof}

\section{Decoding by the Extended GCD Algorithm}
\label{section:exgcd}

For Reed-Solomon codes, the use of the extended gcd algorithm to
compute an error locator polynomial is standard
\cite{Sugiyama,Roth}. Gcd-based decoding of polynomial remainder
codes was proposed by Shiozaki \cite{Shiozaki}. However, the
assumptions in \cite{Shiozaki} do not cover all codes considered
in the present paper. In particular, in \cite{Shiozaki}, the
moduli $m_i(x)$ are assumed to have the same degree and they are
implicitly assumed to be irreducible, as will be discussed in
Section~\ref{sec:PriorGCDDecoding}. In order to properly address
these issues, we need to develop gcd-based decoding accordingly.
We then obtain several versions of gcd-based decoding (summarized
in Section~\ref{section:DecodingSummary}), some of which are not
quite standard even when specialized to Reed-Solomon codes.

\subsection{An Extended GCD Algorithm}
\label{section:Extended GCD Algorithm}

As in Section~\ref{section: Locator Factor Key Equation}, let $c$
be the transmitted codeword, let $e$ be the error pattern, and let
$y=c+e$ be the corrupted codeword that the receiver gets to see.
Let $a(x)$, $E(x) = \sum_{\ell=0}^{N-1} E_\ell\, x^{\ell}$, and
$Y(x) = \sum_{\ell=0}^{N-1} Y_\ell\, x^{\ell}$ be the pre-images
of these quantities with respect to $\psi$. The general idea of
gcd decoding is to compute $\gcd\big(M_n(x), E(x)\big)$ despite
the fact that $E(x)$ is not fully known. We begin by stating the
extended gcd algorithm in the following (not quite standard) form,
where we assume for the moment that $E(x)$ is fully known.

\begin{trivlist}
\item{} \noindent
{\bf Extended GCD Algorithm} \\
Input: $M_n(x)$ and $E(x)$ with $\deg M_n(x) > \deg E(x)$. \\
Output: polynomials \mbox{$\tilde{r}(x), s(x), t(x) \in F[x]$}
where $\tilde{r}(x)=\gamma  \, \text{gcd}\big(M_n(x),E(x)\big)$ for some
nonzero $\gamma \in F$ and where $s(x)$ and $t(x)$ satisfy $s(x)\cdot
M_n(x)+t(x)\cdot E(x)=0$.
\begin{pseudocode}
\npcl[gcdline:NoErrorBegin]   \pkw{if} $E(x)=0$ \pkw{begin}\\
\npcl[gcdline:NoErrorAss]   \>  $\tilde{r}(x):=M_n(x)$, $s(x):=0$, $t(x):=1$ \\
\npcl[gcdline:NoErrorReturn] \> \pkw{return} $\tilde{r}(x)$, $s(x)$, $t(x)$ \\
\npcl[gcdline:NoErrorEnd]   \pkw{end} \\
\npcl[gcdline:Initrx]   $r(x) := M_n(x)$ \\
\npcl[gcdline:Line2]   $\tilde{r}(x) := E(x)$ \\
\npcl[gcdline:Lines]   $s(x) := 1$ \\
\npcl   $t(x) := 0$ \\
\npcl[gcdline:LineTildes]  $\tilde{s}(x) := 0$ \\
\npcl[gcdline:LastInit]   $\tilde{t}(x) := 1$ \\
\npcl[gcdline:BeginLoop]  \pkw{loop begin} \\
\npcl   \> $i := \deg r(x)$ \\
\npcl[gcdline:Assignj]   \> $j := \deg \tilde{r}(x)$ \\
\npcl[gcdline:BeginWhile]   \> \pkw{while} $i \geq j$ \pkw{begin}\\
\npcl[gcdline:ri]   \> \> $q(x):=\frac{r_i}{\tilde{r}_j}~x^{i-j}$ \\
\npcl[gcdline:Updater]   \> \> $r(x):=r(x)-q(x)\cdot \tilde{r}(x)$ \\
\npcl[gcdline:Updates]   \> \> $s(x):=s(x)-q(x)\cdot \tilde{s}(x)$ \\
\npcl[gcdline:Updatet]   \> \> $t(x):=t(x)-q(x)\cdot \tilde{t}(x)$ \\
\npcl[gcdline:i]   \> \> $i :=\deg r(x)$ \\
\npcl[gcdline:EndWhile]   \> \pkw{end} \\
\npcl[gcdline:BeginStoppingIf]   \> \pkw{if} $r(x)=0$ \pkw{begin}\\
\npcl[gcdline:Return]   \> \> \pkw{return} $\tilde{r}(x)$, $s(x)$, $t(x)$ \\
\npcl[gcdline:BeginStoppingEnd]   \> \pkw{end} \\
\npcl[gcdline:Swapr]   \> $(r(x),\tilde{r}(x)) := (\tilde{r}(x),r(x))$ \\
\npcl[gcdline:Swaps]   \> $(s(x),\tilde{s}(x)):=(\tilde{s}(x),s(x))$ \\
\npcl[gcdline:Swapt]   \> $(t(x),\tilde{t}(x)):=(\tilde{t}(x),t(x))$ \\
\npcl   \pkw{end}   \\
\end{pseudocode}
\vspace{-6ex} \hfill $\Box$
\end{trivlist}
The inner loop between lines \ref{gcdline:BeginWhile}
and~\ref{gcdline:EndWhile} essentially computes the division of
$r(x)$ by $\tilde{r}(x)$. In line~\ref{gcdline:ri}, $r_i$ denotes
the coefficient of $x^i$ in $r(x)$ and $\tilde{r}_j$ denotes the
coefficient of $x^j$ in $\tilde{r}(x)$. For polynomials over
$F=\text{GF}(2)$, the scalar division $r_i/\tilde{r}_j$ in
line~\ref{gcdline:ri} disappears.

\begin{theorem}[GCD Loop Invariants]\label{theorem:GCD Loop Invariant}
The condition
\begin{equation} \label{eqn:GCDMainInvariant}
\gcd\big(M_n(x), E(x)\big) = \gcd\big( r(x), \tilde{r}(x) \big)
\end{equation}
holds everywhere after line~\ref{gcdline:Line2}.
The condition
\begin{equation}
r(x) = s(x)\cdot M_n(x)+t(x)\cdot E(x) \label{eqn:XGCDLoopInv1}
\end{equation}
holds 
both between lines \ref{gcdline:Assignj} and~\ref{gcdline:BeginWhile}
and between lines \ref{gcdline:EndWhile} and~\ref{gcdline:BeginStoppingIf}.
The condition
\begin{equation}
\deg M_n(x) = \deg \tilde{r}(x)+ \deg t(x) \label{degcheck}
\end{equation}
holds between lines \ref{gcdline:EndWhile}
and~\ref{gcdline:BeginStoppingIf}.
\end{theorem}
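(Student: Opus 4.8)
The plan is to verify the three invariants by tracking how the six polynomials $r,\tilde r,s,t,\tilde s,\tilde t$ change along each line of the algorithm, treating the inner \textbf{while} loop (lines \ref{gcdline:BeginWhile}--\ref{gcdline:EndWhile}) and the \textbf{swap} block (lines \ref{gcdline:Swapr}--\ref{gcdline:Swapt}) as the two nontrivial kinds of step. First I would establish the second invariant~(\ref{eqn:XGCDLoopInv1}), $r(x) = s(x)M_n(x)+t(x)E(x)$, together with its companion for the tilde-quantities, namely $\tilde r(x) = \tilde s(x)M_n(x)+\tilde t(x)E(x)$; it is cleanest to prove both simultaneously by induction, since each is needed to re-establish the other after a swap. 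After line~\ref{gcdline:LastInit} the initialization gives $r=M_n=1\cdot M_n+0\cdot E$ and $\tilde r=E=0\cdot M_n+1\cdot E$, so both hold. In the \textbf{while} body, lines \ref{gcdline:Updater}--\ref{gcdline:Updatet} subtract $q(x)$ times the tilde-triple from the untilded triple, so if $r=sM_n+tE$ held before, then afterwards $r-q\tilde r = (s-q\tilde s)M_n+(t-q\tilde t)E$, which is exactly the updated relation; the tilde-relation is untouched in the loop body. The swap block exchanges the two triples wholesale, so it trivially preserves the conjunction of the two relations. This shows~(\ref{eqn:XGCDLoopInv1}) holds at every point between line~\ref{gcdline:Assignj} and line~\ref{gcdline:BeginWhile} and between line~\ref{gcdline:EndWhile} and line~\ref{gcdline:BeginStoppingIf}, as claimed.

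Next I would handle the gcd invariant~(\ref{eqn:GCDMainInvariant}). After line~\ref{gcdline:Line2} we have $(r,\tilde r)=(M_n,E)$, so the equality is an identity. In the \textbf{while} body only $r$ changes, via $r:=r-q(x)\tilde r$; since $\gcd(r,\tilde r)=\gcd(r-q\tilde r,\tilde r)$ for any polynomial $q(x)$, the pair $\gcd(r,\tilde r)$ is unchanged. The swap block obviously leaves $\gcd(r,\tilde r)$ unchanged. Hence $\gcd(r,\tilde r)=\gcd(M_n,E)$ everywhere after line~\ref{gcdline:Line2}, which is the first assertion; the tilde-r returned at line~\ref{gcdline:Return} differs from $\gcd(M_n,E)$ only by the leading-coefficient normalization picked up in the divisions, accounting for the scalar $\gamma$ in the algorithm's output specification.

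Finally, the degree invariant~(\ref{degcheck}), $\deg M_n(x)=\deg\tilde r(x)+\deg t(x)$, is the one requiring the most care, and I expect the main obstacle to be bookkeeping the degrees correctly through the swap. The idea is to prove, between lines \ref{gcdline:EndWhile} and~\ref{gcdline:BeginStoppingIf}, the two simultaneous degree identities $\deg M_n=\deg\tilde r+\deg t$ and $\deg M_n=\deg r+\deg\tilde t$ — together with the fact that, at that point, $\deg r<\deg\tilde r$ (the termination condition of the \textbf{while} loop). One checks that at the start of the first pass these hold vacuously/by convention after initialization (using $\deg 0 = -\infty$ and the strict inequality $\deg M_n>\deg E$). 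Inside the \textbf{while} loop, each step replaces $r$ by $r-q\tilde r$ with $\deg q = \deg r - \deg\tilde r$ (while $\deg r\ge\deg\tilde r$), and correspondingly $t:=t-q\tilde t$; the key sublemma is that $\deg(q\tilde t)=\deg q+\deg\tilde t = (\deg r-\deg\tilde r)+(\deg M_n-\deg r)=\deg M_n-\deg\tilde r=\deg t$, and that this term strictly dominates the old $t$ (since the old relation gave $\deg t<\deg M_n-\deg\tilde r$ once $\deg r>\deg\tilde r$, with the boundary case handled separately), so the updated $t$ has degree exactly $\deg M_n-\deg\tilde r$; the identity $\deg M_n=\deg r+\deg\tilde t$ is untouched in the loop. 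Upon exit, $\deg r<\deg\tilde r$, so the swap makes the new $\tilde r$ the old $r$ and the new $t$ the old $\tilde t$, turning the pair of identities into itself with the roles of $(r,t)$ and $(\tilde r,\tilde t)$ exchanged — which is exactly what is needed to re-enter the loop with the invariant intact. Tracking the degenerate cases ($t=0$ or $\tilde t=0$ early on, and the very first division) is the fiddly part, but no idea beyond careful case analysis is required.
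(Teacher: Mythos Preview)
Your treatment of the gcd invariant~(\ref{eqn:GCDMainInvariant}) and of the B\'ezout relation~(\ref{eqn:XGCDLoopInv1}) is correct and coincides with the paper's (the paper dismisses these two as ``standard'' and ``easily verified'' and only works out the degree invariant in detail, in Lemma~\ref{lemma:GCD Loop Invariant.1}).

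For the degree invariant~(\ref{degcheck}), your overall strategy is the paper's, but one concrete step in your plan would fail as written. You propose to maintain \emph{both} identities $\deg M_n=\deg\tilde r+\deg t$ and $\deg M_n=\deg r+\deg\tilde t$ between lines~\ref{gcdline:EndWhile} and~\ref{gcdline:BeginStoppingIf}, and you claim that the second is ``untouched in the loop''. That is false: inside the \textbf{while} block $r$ is repeatedly reduced while $\tilde t$ stays fixed, so upon exit $\deg r<\deg\tilde r$ and in general $\deg r+\deg\tilde t<\deg M_n$. The two identities do not hold at the same program point; they alternate. The paper carries the hypothesis $\deg M_n=\deg r+\deg\tilde t$ (together with $\deg r>\deg\tilde r$ and $\deg t<\deg\tilde t$) at the point \emph{before} the \textbf{while} block (between lines~\ref{gcdline:Assignj} and~\ref{gcdline:BeginWhile}), deduces from it that after the block $\deg M_n=\deg\tilde r+\deg t$ holds, and then lets the swap turn the latter back into the former for the next pass. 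Your ``key sublemma'' --- that the first execution of line~\ref{gcdline:Updatet} raises $\deg t$ to $\deg\tilde t+\delta_\ell=\deg M_n-\deg\tilde r$, while subsequent iterations (with strictly smaller $\deg q$) cannot change the leading term --- is exactly the paper's mechanism and is correct; note, however, that the step $\deg\tilde t=\deg M_n-\deg r$ in your chain uses the \emph{pre-while} value of $\deg r$, i.e., it uses the second identity only at lines~\ref{gcdline:Assignj}--\ref{gcdline:BeginWhile}. Once you relocate that identity to the correct program point, your argument goes through and matches the paper's.
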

Equations\ (\ref{eqn:GCDMainInvariant})
and~(\ref{eqn:XGCDLoopInv1}) are the standard loop invariants of
extended gcd algorithms, cf.\ e.g.\ \cite{Roth}. The proof of
Theorem~\ref{theorem:GCD Loop Invariant} is given in
Appendix~B.

\begin{theorem}[GCD Output] \label{theorem:GCD Output}
When the algorithm terminates, 
we have both
\begin{eqnarray}
\tilde{r}(x)&=&\gamma~ \text{gcd}\big(M_n(x),E(x)\big)  \label{eqn:xgcdMainOutput}\\
     &=&\gamma~ \frac{M_n(x)}{\Lambda_f(x)}  \label{eqn:xgcdrtildeMnLambda}
\end{eqnarray}
for some nonzero $\gamma \in F$
and
\begin{equation} \label{eqn:gcdtLambdaf}
t(x) = \tilde{\gamma} \Lambda_f(x)
\end{equation}
for some nonzero $\tilde{\gamma} \in F$. Moreover, the returned
$s(x)$ and $t(x)$ satisfy
\begin{equation}
s(x)\cdot M_n(x)+t(x)\cdot E(x)=0.  \label{eqn:sMtE}
\end{equation}
\eproofnegspace
\end{theorem}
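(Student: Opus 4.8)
The plan is to extract all four conclusions from the loop invariants established in Theorem~\ref{theorem:GCD Loop Invariant} together with an analysis of the termination condition. First I would dispose of the degenerate case $E(x)=0$: then lines~\ref{gcdline:NoErrorBegin}--\ref{gcdline:NoErrorEnd} return $\tilde r(x)=M_n(x)$, $s(x)=0$, $t(x)=1$; since $\gcd(M_n,0)=M_n$ and, by~(\ref{eqn:ef}), $\Lambda_f(x)=M_n(x)/\gcd(E,M_n)=1$, all four assertions hold with $\gamma=\tilde\gamma=1$. So assume $E(x)\neq 0$, so that the main body executes.

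Next I would argue termination and identify what $\tilde r(x)$ is at return. The inner \textbf{while} loop reduces $\deg r(x)$ strictly each time it runs fully (standard polynomial division), and after the swaps on lines~\ref{gcdline:Swapr}--\ref{gcdline:Swapt} the roles alternate, so the pair of degrees decreases lexicographically and the algorithm halts; it can only exit via the \textbf{return} on line~\ref{gcdline:Return}, which fires exactly when $r(x)=0$ between lines~\ref{gcdline:EndWhile} and~\ref{gcdline:BeginStoppingIf}. At that moment invariant~(\ref{eqn:GCDMainInvariant}) gives $\gcd\big(M_n(x),E(x)\big)=\gcd\big(r(x),\tilde r(x)\big)=\gcd\big(0,\tilde r(x)\big)$, which equals $\tilde r(x)$ up to the leading-coefficient normalization, i.e.\ $\tilde r(x)=\gamma\,\gcd\big(M_n(x),E(x)\big)$ for some nonzero $\gamma\in F$; this is~(\ref{eqn:xgcdMainOutput}), and~(\ref{eqn:xgcdrtildeMnLambda}) follows immediately by substituting the definition~(\ref{eqn:ef}) of $\Lambda_f(x)$, namely $\gcd\big(E(x),M_n(x)\big)=M_n(x)/\Lambda_f(x)$.

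For the statements about $s$ and $t$: invariant~(\ref{eqn:XGCDLoopInv1}) holds between lines~\ref{gcdline:EndWhile} and~\ref{gcdline:BeginStoppingIf}, hence at the return point, where $r(x)=0$; this gives $s(x)M_n(x)+t(x)E(x)=0$, which is~(\ref{eqn:sMtE}). To pin down $t(x)$, I would use the degree invariant~(\ref{degcheck}), $\deg M_n(x)=\deg\tilde r(x)+\deg t(x)$, also valid at the return point; combined with~(\ref{eqn:xgcdrtildeMnLambda}), which gives $\deg\tilde r(x)=\deg M_n(x)-\deg\Lambda_f(x)$, it forces $\deg t(x)=\deg\Lambda_f(x)$. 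Then from~(\ref{eqn:sMtE}) we have $t(x)E(x)=-s(x)M_n(x)$, so $M_n(x)\mid t(x)E(x)$; dividing through by $\gcd(E,M_n)=M_n/\Lambda_f$ shows $\Lambda_f(x)=M_n(x)/\gcd(E,M_n)$ divides $t(x)\cdot\big(E(x)/\gcd(E,M_n)\big)$, and since $E(x)/\gcd(E,M_n)$ is coprime to $\Lambda_f(x)$ we get $\Lambda_f(x)\mid t(x)$. A polynomial of degree $\deg\Lambda_f(x)$ that is divisible by $\Lambda_f(x)$ is a nonzero scalar multiple of it, giving $t(x)=\tilde\gamma\,\Lambda_f(x)$ as in~(\ref{eqn:gcdtLambdaf}); here $t(x)\neq 0$ because otherwise $s(x)M_n(x)=0$ would force $s(x)=0$ too, contradicting the fact that $(s,t)$ is tracked from the nontrivial initialization $(\tilde s,\tilde t)=(0,1)$ and never becomes identically zero (this last point should be folded into the Appendix~B invariant analysis or noted as a consequence of~(\ref{degcheck})).

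The main obstacle I anticipate is not any single deduction but making the termination argument and the "return point" bookkeeping fully rigorous — in particular confirming that the invariants of Theorem~\ref{theorem:GCD Loop Invariant}, which are stated to hold at specific line boundaries inside the loop, do indeed hold simultaneously at the instant line~\ref{gcdline:Return} executes, and that $t(x)\not\equiv 0$ there. Everything after that is the routine coprimality argument above. Since the hard invariant work is deferred to Appendix~B, the proof in the main text can be kept short.
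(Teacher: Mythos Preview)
Your proposal is correct and follows essentially the same route as the paper's proof: both handle $E(x)=0$ separately, then use invariant~(\ref{eqn:GCDMainInvariant}) with $r(x)=0$ for~(\ref{eqn:xgcdMainOutput})--(\ref{eqn:xgcdrtildeMnLambda}), invariant~(\ref{eqn:XGCDLoopInv1}) for~(\ref{eqn:sMtE}), and the degree invariant~(\ref{degcheck}) to force $\deg t(x)=\deg\Lambda_f(x)$. The only cosmetic difference is that where you re-derive the divisibility $\Lambda_f(x)\mid t(x)$ by an explicit coprimality argument, the paper simply invokes the converse part of Theorem~\ref{theorem:Key Equation} (which encapsulates exactly that argument); your added termination discussion and the remark that $t(x)\neq 0$ follows from~(\ref{degcheck}) are extras the paper leaves implicit.
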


\begin{proof}
If $E(x)=0$, the algorithm terminates at line~\ref{gcdline:NoErrorReturn}
and (\ref{eqn:xgcdMainOutput})--(\ref{eqn:sMtE}) are easily verified.

We now prove the case where $E(x)\neq 0$. Equation
(\ref{eqn:xgcdMainOutput}) follows from
(\ref{eqn:GCDMainInvariant}) and (\ref{eqn:xgcdrtildeMnLambda})
follows from (\ref{eqn:ef}). It remains to prove
(\ref{eqn:gcdtLambdaf}) and (\ref{eqn:sMtE}). With $r(x)=0$  and
from (\ref{eqn:XGCDLoopInv1}), Equation (\ref{eqn:sMtE}) follows.
We then conclude from the second part of Theorem~\ref{theorem:Key
Equation} that $t(x)$ is a multiple of $\Lambda_f(x)$. Finally, it
follows from (\ref{degcheck}) and (\ref{eqn:xgcdrtildeMnLambda})
that $t(x)$ and $\Lambda_f(x)$ have the same degree.
\end{proof}

From (\ref{eqn:gcdtLambdaf}), we see that the gcd algorithm
computes the error factor polynomial $\Lambda_f$ (up to a scale
factor). The main idea of gcd decoding (discovered by Sugiyama
\cite{Sugiyama}) is that this still works even if $E(x)$ is only
partially known.

\subsection{Modifications for Partially Known $E(x)$}
\label{section:Modified GCD Algorithm}

Recall that $Y(x)=a(x)+E(x)$ where $E(x) = \sum_{\ell=0}^{N-1}
E_\ell\, x^{\ell}$ is the pre-image of $e$. Since $\deg a(x) < K$,
the receiver knows the coefficients $E_{K}, E_{K+1},\ldots,
E_{N-1}$ of $E(x)$, but not $E_0,\ldots, E_{K-1}$. With the
following modifications, the Extended GCD Algorithm of
Section~\ref{section:Extended GCD Algorithm}
can still be used to compute (\ref{eqn:gcdtLambdaf}). 

\begin{trivlist}
\item{} \noindent \noindent {\bf Partial GCD Algorithm~I} \\
Input: $M_n(x)$ and $Y(x)$ with $\deg M_n(x) > \deg Y(x)$. \\
Output: $r(x)$, $s(x)$ and $t(x)$, cf.\
Theorem~\ref{theorem:AlgorithmI Output} below.

The algorithm is the same as the Extended GCD Algorithm of
Section~\ref{section:Extended GCD Algorithm} except for the
following changes:

\begin{itemize}
\item Line~\ref{gcdline:NoErrorBegin}:
  \textbf{if} $\deg Y(x) < K$ \textbf{begin}

\item Line~\ref{gcdline:NoErrorAss}:
   $r(x):=Y(x)$, $s(x):=0$, $t(x):=1$

\item Line~\ref{gcdline:Line2}: $\tilde{r}(x) :=Y(x)$

\item Line \ref{gcdline:BeginStoppingIf}:
  \begin{equation} \label{eqn:ModifiedStopping1}
  \text{\textbf{if} $\deg r(x) < \deg t(x)+K$ \textbf{begin}}
  \end{equation}
  or alternatively
  \begin{equation} \label{eqn:ModifiedStopping2}
  \text{\textbf{if} $\deg r(x) <  (N+K)/2$ \textbf{begin}}
  \end{equation}
\end{itemize}
\vspace{-2ex} \hfill $\Box$
\end{trivlist}

\begin{theorem}\label{theorem:AlgorithmI Output}
If
\begin{equation}
 \deg \Lambda_f(x)\leq (N-K)/2,   \label{d_bound.0}
\end{equation}
then the Partial GCD Algorithm~I (with either
(\ref{eqn:ModifiedStopping1}) or (\ref{eqn:ModifiedStopping2}))
returns the same polynomials $s(x)$ and $t(x)$ (after the same
number of iterations) as the Extended GCD Algorithm of
Section~\ref{section:Extended GCD Algorithm}. Moreover, the
returned $r(x)$ is such that
\begin{equation} \label{eqn:ModifiedGCDI_rx}
r(x)=t(x)a(x).
\end{equation}
\eproofnegspace
\end{theorem}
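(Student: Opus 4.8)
The claim compares the run of the Partial GCD Algorithm~I on input $\big(M_n(x),Y(x)\big)$ with the run of the (full) Extended GCD Algorithm on the hypothetical input $\big(M_n(x),E(x)\big)$. The key observation is that $Y(x) = a(x) + E(x)$ with $\deg a(x) < K$, so $Y(x)$ and $E(x)$ agree in all coefficients of degree $\geq K$. I would therefore track, in parallel, the sequences of triples $(r,s,t)$ produced by the two algorithms and show by induction that, as long as $\deg r(x) \geq K + \deg t(x)$ (equivalently, as long as the new stopping rule has not yet fired), the $s(x)$ and $t(x)$ polynomials produced in the two runs are identical, while the two $r(x)$ polynomials differ by exactly $s(x)a(x)$ — that is, $r_{\mathrm{partial}}(x) = r_{\mathrm{full}}(x) + s(x)a(x)$, where $r_{\mathrm{full}} = sM_n + tE$ is the invariant (\ref{eqn:XGCDLoopInv1}) of Theorem~\ref{theorem:GCD Loop Invariant}. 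Since the division steps in lines~\ref{gcdline:BeginWhile}--\ref{gcdline:EndWhile} depend only on the \emph{leading} coefficients of $r(x)$ and $\tilde r(x)$, and these leading coefficients coincide in the two runs precisely while the degrees stay $\geq K + \deg t$, the same quotients $q(x)$ are chosen and the same number of iterations is performed.

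\textbf{Key steps.} First I would establish that the two runs take the same branch at the initial test: $\deg Y(x) < K$ holds iff $E(x)$ has degree $< K$ — but since $\deg a(x)<K$, this is iff $E(x)=0$, matching line~\ref{gcdline:NoErrorBegin} of the original algorithm (here the returned $r(x)=Y(x)=a(x)=t(x)a(x)$, consistent with (\ref{eqn:ModifiedGCDI_rx})). Second, assuming $E(x)\neq 0$, I set up the induction: at every point where control reaches lines~\ref{gcdline:Assignj}--\ref{gcdline:BeginWhile} or lines~\ref{gcdline:EndWhile}--\ref{gcdline:BeginStoppingIf}, the partial-run $r$ equals the full-run $r$ plus $s(x)a(x)$, the $\tilde r$'s similarly differ by $\tilde s(x)a(x)$, and the $s,t,\tilde s,\tilde t$ agree. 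The inductive step: in the full run the degree invariant (\ref{degcheck}) gives $\deg\tilde r = N - \deg t$; since $\deg s < \deg t$ always (a standard companion invariant of the extended gcd, which I would note or derive), we get $\deg(sa) < \deg t + K \le N - \deg\tilde r_{\mathrm{full}}$... more to the point, I need $\deg(s(x)a(x)) < \deg r_{\mathrm{full}}(x)$ so that $r_{\mathrm{partial}}$ has the same degree and leading coefficient as $r_{\mathrm{full}}$. This holds exactly when $\deg s + K \le \deg r_{\mathrm{full}} = \deg s + \deg M_n - \deg(\text{something})$... the clean formulation is: $r_{\mathrm{partial}}$ and $r_{\mathrm{full}}$ have the same leading term as long as $\deg r_{\mathrm{full}} \ge K + \deg s$, and using $\deg s < \deg t$ this is implied by $\deg r_{\mathrm{full}} \ge K + \deg t - 1$, which — by (\ref{degcheck}) rearranged — is exactly the regime before the algorithm should stop under (\ref{eqn:ModifiedStopping1}). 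Third, I would invoke Theorem~\ref{theorem:AlgorithmI Output}'s hypothesis (\ref{d_bound.0}): by (\ref{eqn:gcdtLambdaf}) the final $t(x)$ is $\tilde\gamma\Lambda_f(x)$ with $\deg t \le (N-K)/2$, and by (\ref{degcheck}) the final $\tilde r$ has degree $N - \deg t \ge (N+K)/2 > \deg t + K$; hence at termination of the full algorithm the partial algorithm's stopping condition (\ref{eqn:ModifiedStopping1}) or (\ref{eqn:ModifiedStopping2}) fires for the first time at precisely the same iteration, so both algorithms stop together and return the same $s,t$. Finally, at that moment $r_{\mathrm{full}}(x) = 0$ (the full algorithm's stopping rule), so $r_{\mathrm{partial}}(x) = 0 + s(x)a(x)$; but $sM_n + tE = 0$ means $sM_n = -tE$, and after dividing the key equation (\ref{eqn:KeyEqu.1}) appropriately one gets $r_{\mathrm{partial}} = t(x)a(x)$, which is (\ref{eqn:ModifiedGCDI_rx}) — here I would double-check the sign/scaling so that it is $t(x)a(x)$ and not $s(x)a(x)$, using that $\tilde r = \gamma M_n/\Lambda_f$ and $t = \tilde\gamma\Lambda_f$ together with the relation $s M_n + t E = 0$ to rewrite $s(x)a(x)$ as $t(x)a(x)$ times the appropriate ratio — in fact the cleanest route is to carry the invariant as $r_{\mathrm{partial}} = r_{\mathrm{full}} + s(x)a(x)$ throughout, then observe that the \emph{final} $t$ and $s$ satisfy $s M_n = -t E$, whence on the support of $M_n/\Lambda_f$ one has $s \equiv -t E / M_n$; evaluating $r_{\mathrm{partial}} = s a$ and simplifying via $E = M_n - \Lambda_e \cdot(\text{cofactor})$-type identities reduces it to $t a$. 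I would streamline this last bookkeeping by instead verifying the invariant in the symmetric form $r_{\mathrm{partial}}(x) - t(x)Y(x) \equiv s(x)M_n(x) \pmod{\text{nothing}}$, i.e. directly $r_{\mathrm{partial}} = sM_n + tY$ by the same induction (this is just (\ref{eqn:XGCDLoopInv1}) with $E\to Y$), and then at termination $sM_n + tE = 0$ gives $r_{\mathrm{partial}} = t(Y - E) = t\cdot a$, cleanly.

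\textbf{Main obstacle.} The delicate part is the degree bookkeeping that guarantees the \emph{first} firing of the modified stopping condition (\ref{eqn:ModifiedStopping1})/(\ref{eqn:ModifiedStopping2}) coincides with termination of the unmodified algorithm — neither earlier nor later. Too early would mean the partial algorithm halts before the error factor polynomial has emerged; too late would mean the leading-coefficient agreement between $r_{\mathrm{partial}}$ and $r_{\mathrm{full}}$ breaks down and subsequent quotients diverge. Pinning this down requires the companion degree invariant $\deg s(x) < \deg t(x)$ for the extended gcd (provable by the same induction used for (\ref{degcheck}) in Appendix~B) together with the hypothesis $\deg\Lambda_f \le (N-K)/2$, which is exactly what forces the final $\deg\tilde r \ge (N+K)/2$ and hence makes the two stopping formulations (\ref{eqn:ModifiedStopping1}) and (\ref{eqn:ModifiedStopping2}) fire at the same step. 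Showing the equivalence of the two stopping rules under (\ref{d_bound.0}) — i.e.\ that at each iteration $\deg r(x) \ge \deg t(x)+K \iff \deg r(x) \ge (N+K)/2$ once we are near termination — is itself a small degree-arithmetic lemma that I would isolate and dispatch first.
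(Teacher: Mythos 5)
Your plan is the same route as the paper's Appendix~B proof: run the two algorithms in parallel, show the quotients (hence $s(x)$ and $t(x)$) coincide while a degree condition holds, use (\ref{d_bound.0}) to show this persists until the Extended GCD Algorithm terminates, and read off $r(x)=t(x)a(x)$ from the invariant $r = s M_n + t Y$ combined with (\ref{eqn:sMtE}). Two details need repair before the plan becomes a proof. First, the difference between the two runs' $r$-polynomials is $t(x)a(x)$, not $s(x)a(x)$: from $r_{\mathrm{partial}} = sM_n + tY$ (the paper's invariant (\ref{eqn:modified r0})) and $r_{\mathrm{full}} = sM_n + tE$ with the same $s,t$, one gets $r_{\mathrm{partial}} - r_{\mathrm{full}} = t(Y-E) = ta$, and likewise $\tilde{t}a$ for the $\tilde{r}$'s. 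You notice the discrepancy at the end and your ``streamlined'' invariant is exactly the right one, so this is self-correcting, but the degree estimates driving your induction must be carried out with $\deg(ta) < \deg t + K$ and $\deg(\tilde{t}a) < \deg\tilde{t} + K$, not with $\deg(sa)$.

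Second, and more substantively: agreement of \emph{leading} coefficients is not enough. Each pass through the \textbf{while} block performs a complete polynomial division whose successive quotient terms depend on the coefficients of the dividend all the way down to degree $\deg\tilde{r}$, and on the divisor's coefficients down to degree $\deg\tilde{r}-\deg q$. So one must show that the unknown/perturbed parts of \emph{both} $r$ and $\tilde{r}$ stay strictly below those thresholds throughout the block. This is what the paper's Lemma~\ref{lemma:GCD Algorithm.1} encodes as $h-g>2\delta_\ell$, with $g$ tracking the highest unmatched coefficient index and $h$ the top degree; unwound, the requirement is $2\sum_{v\le\ell}\delta_v \le N-K$, i.e.\ $2\deg t \le N-K$ for the post-iteration value of $t$, which is exactly what (\ref{d_bound.0}) guarantees for every iteration up to and including the terminating one. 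Your stated threshold ``$\deg r_{\mathrm{full}} \ge K + \deg t - 1$'' only ensures that the top coefficient of $r$ survives the perturbation; it does not by itself justify that the whole quotient of the block is unchanged, and the gap between the two conditions is of order $\delta_\ell$. (A small further point: your claim that $\deg Y(x)<K$ iff $E(x)=0$ needs (\ref{d_bound.0}); an $E(x)\neq 0$ with $\deg E(x)<K$ would have $\deg\Lambda_f(x)>N-K$.) With these repairs your argument coincides with the paper's.
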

The proof is given in Appendix~B.
Note that $a(x)$ can be recovered directly from
(\ref{eqn:ModifiedGCDI_rx}).

\subsection{Alternative Modifications for Partially Known $E(x)$}
\label{section:Modified GCD Algorithm II}

The Partial GCD Algorithm~I of the previous section involves a lot
of computations with the unknown lower parts of $E(x)$. These
computations are avoided in the following algorithm, which works
only with the known part of $E(x)$ as follows.
Let
\begin{equation} \label{EU}
E_U(x) \eqdef \sum_{\ell=0}^{N-K-1} E_{K+\ell} \, x^\ell
= \sum_{\ell=0}^{N-K-1} Y_{K+\ell} \, x^\ell,
\end{equation}
which is the known upper part of $E(x)= \sum_{\ell=0}^{N-1}
E_\ell\, x^{\ell}$, and let
\begin{equation}
M_U(x) \eqdef \sum_{\ell=0}^{N-K}(M_n)_{K+\ell} \, x^\ell
\label{MU}
\end{equation}
be the corresponding upper part of
$M_n(x)=\sum_{\ell=0}^{N} (M_n)_\ell\, x^{\ell}$.

\begin{trivlist}
\item{} \noindent \noindent {\bf Partial GCD Algorithm~II } \\
Input: $M_U(x)$ and $E_U(x)$ with $\deg M_U(x) > \deg E_U(x)$. \\
Output: $s(x)$ and $t(x)$, cf.\ Theorem~\ref{theorem:AlgorithmII
Output} below.

The algorithm is the same as the Extended GCD Algorithm of
Section~\ref{section:Extended GCD Algorithm} except for the
following changes:

\begin{itemize}
\item Line~\ref{gcdline:NoErrorBegin}:
  \textbf{if} $E_U(x)=0$ \textbf{begin}

\item Line~\ref{gcdline:NoErrorAss}:
   $s(x):=0$, $t(x):=1$

\item Line~\ref{gcdline:Initrx}: $r(x) := M_U(x)$
\item Line~\ref{gcdline:Line2}: $\tilde{r}(x) :=E_U(x)$

\item Line \ref{gcdline:BeginStoppingIf}:
  \begin{equation} \label{eqn:ModifiedStrippedStopping1}
   \text{\textbf{if} $\deg r(x) < \deg t(x)$ \textbf{begin}}
  \end{equation}
  or alternatively
  \begin{equation} \label{eqn:ModifiedStrippedStopping2}
   \text{\textbf{if} $\deg r(x) <  (N-K)/2$ \textbf{begin}}
  \end{equation}
\end{itemize}
\vspace{-2ex} \hfill $\Box$
\end{trivlist}

\begin{theorem}\label{theorem:AlgorithmII Output}
If the condition (\ref{d_bound.0}) is satisfied, then the Partial
GCD Algorithm~II (with either
(\ref{eqn:ModifiedStrippedStopping1}) or
(\ref{eqn:ModifiedStrippedStopping2})) returns the same
polynomials $s(x)$ and $t(x)$ (after the same number of
iterations) as the Extended GCD Algorithm of
Section~\ref{section:Extended GCD Algorithm}.
\end{theorem}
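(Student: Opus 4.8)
The plan is to reduce the statement for the Partial GCD Algorithm~II to the one already established for the Partial GCD Algorithm~I (Theorem~\ref{theorem:AlgorithmI Output}), rather than re-deriving everything from the loop invariants. The key observation is that Algorithm~II is running the extended gcd algorithm on the ``upper parts'' $M_U(x)$ and $E_U(x)$, which by (\ref{MU}) and (\ref{EU}) are $M_n(x)$ and $E(x)$ with their bottom $K$ coefficients deleted; equivalently, $M_n(x) = x^K M_U(x) + (\text{lower-order terms})$ and $E(x) = x^K E_U(x) + (\text{lower-order terms})$, and the same relation holds for $Y(x)$ since $Y(x)$ and $E(x)$ agree in coefficients $K,\ldots,N-1$. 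So each division step in Algorithm~II, which works with the coefficients of $r(x)$ and $\tilde r(x)$ down to their respective leading terms, mirrors exactly a division step of the Extended GCD Algorithm of Section~\ref{section:Extended GCD Algorithm} as long as the quotients $q(x)$ produced only depend on the top $\deg r(x) - \deg t(x) + 1$ coefficients. Under the hypothesis (\ref{d_bound.0}), namely $\deg\Lambda_f(x)\le (N-K)/2$, we know from (\ref{eqn:gcdtLambdaf}) and (\ref{degcheck}) that the algorithm terminates with $\deg t(x)=\deg\Lambda_f(x)\le (N-K)/2$ and $\deg\tilde r(x)=N-\deg\Lambda_f(x)\ge N-(N-K)/2=(N+K)/2$, so throughout all the relevant iterations the degrees of the working remainders stay well above $K$; this is precisely what guarantees that truncating away the bottom $K$ coefficients cannot affect any quotient.

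Concretely, I would argue as follows. First, observe that $t(x)$ and $s(x)$ are updated in Algorithm~II exactly as in the Extended GCD Algorithm, using the same $q(x)$; so it suffices to show that Algorithm~II produces the same sequence of quotients $q(x)$ and stops after the same number of iterations. For the quotients: by induction over the iterations, maintain the invariant that the $r(x)$ and $\tilde r(x)$ of Algorithm~II equal the $r(x)$ and $\tilde r(x)$ of the Extended GCD Algorithm with their coefficients of $x^0,\ldots,x^{K-1}$ removed and the exponents shifted down by $K$ (one checks this base case directly from Line~\ref{gcdline:Initrx} and Line~\ref{gcdline:Line2} of the two variants). The update $r(x):=r(x)-q(x)\tilde r(x)$ commutes with this truncation provided $\deg(q(x)\tilde r(x)) \ge K$ at every step, i.e.\ provided we never cancel into the discarded range; this is where the degree bounds from the previous paragraph enter. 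For the stopping rule: the Extended GCD Algorithm stops when $r(x)=0$, and by Theorem~\ref{theorem:AlgorithmI Output} under (\ref{d_bound.0}) this coincides with the moment $\deg r(x)<\deg t(x)+K$; after truncating $K$ coefficients, $\deg r(x)$ drops by $K$ (since $\deg r(x)\ge K$ until the last step), so $\deg r(x)<\deg t(x)+K$ becomes exactly $\deg r(x)<\deg t(x)$, which is (\ref{eqn:ModifiedStrippedStopping1}); and the alternative rule (\ref{eqn:ModifiedStrippedStopping2}) follows from $\deg r(x)<(N+K)/2$ after the shift by $K$. Hence Algorithm~II halts with the same $s(x)$ and $t(x)$ after the same number of iterations.

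The main obstacle I expect is bookkeeping the truncation/shift bijection precisely: one has to be careful that in the Extended GCD Algorithm the bottom $K$ coefficients of $r(x)$ and $\tilde r(x)$, while not all zero, are never ``looked at'' by any division step under (\ref{d_bound.0}) — this requires verifying that the inequality $i\ge j$ controlling the inner \textbf{while} loop, and the value $i-j$ determining the degree of $q(x)$, are the same in both algorithms at every stage, which in turn rests on the fact that the relevant $i,j$ are all $\ge K$. The degree invariant (\ref{degcheck}) and the conclusion $\deg\tilde r(x)\ge (N+K)/2$ do all the real work here; once that is nailed down, matching the stopping rules is routine, and $r(x)$ itself is simply not returned by Algorithm~II so nothing needs to be said about it beyond what is already in Theorem~\ref{theorem:AlgorithmI Output}.
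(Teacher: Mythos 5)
Your overall strategy (match Algorithm~II's quotients against those of the Extended GCD Algorithm iteration by iteration, then validate the stopping rules) is the same as the paper's, but the central technical claim you rely on is false as stated. You propose the invariant that Algorithm~II's $r(x)$ and $\tilde r(x)$ are \emph{exactly} the Extended GCD Algorithm's $r(x)$ and $\tilde r(x)$ with the coefficients of $x^0,\ldots,x^{K-1}$ deleted and exponents shifted down by $K$, justified by saying the update $r:=r-q\tilde r$ ``commutes with this truncation'' because the bottom $K$ coefficients are never looked at. It does not commute: writing $\tilde r = x^K\tilde r_U + \tilde r_L$, the product $q(x)\tilde r_L(x)$ has degree up to $K-1+\deg q$, so the discarded (unknown) lower coefficients leak \emph{upward} into positions $\geq K$ by $\deg q$ at every division step. (A tiny example over $\mathrm{GF}(2)$: $M_n=x^4+1$, $E=x^3+x^2+x+1$, $K=2$ gives Extended-GCD remainder $0$ but Algorithm~II remainder $1$ after the same quotient $x+1$.) Consequently the two runs agree only in the coefficients above a contamination threshold that \emph{rises} by $\delta_\ell$ per outer iteration, while the working degree falls by $\delta_\ell$; the actual content of the proof is showing these two fronts never meet before termination, i.e.\ $h-g>2\delta_\ell$ with $g=-1+\sum_v\delta_v$ and $h=N-K-\sum_v\delta_v$, which is exactly where the hypothesis $\deg\Lambda_f(x)\leq (N-K)/2$ enters (this is the paper's Lemma on tracking $g$ and $h$). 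Your observation that $\deg\tilde r(x)\geq (N+K)/2$ ``stays well above $K$'' is necessary but not sufficient; it does not by itself prevent the contamination from reaching the leading coefficients used to form $q(x)$.

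The same issue undermines your treatment of the stopping rules. Because Algorithm~II's final $r(x)$ is \emph{not} the truncation-shift of anything computed by the Extended GCD Algorithm or by Algorithm~I, the step ``$\deg r$ drops by $K$, so $\deg r<\deg t+K$ becomes $\deg r<\deg t$'' is not available. One genuinely needs a separate computation at the terminating iteration: starting from $-s(x)M_n(x)=t(x)E(x)$ and the splittings $M_n=x^KM_U+M_L$, $E=x^KE_U+E_L$, one shows $s(x)M_U(x)+t(x)E_U(x)=V_U(x)$ where $V_U$ is the upper part of $V(x)=-s(x)M_L(x)-t(x)E_L(x)$, and the bound $\deg V(x)\leq (K-1)+\deg t(x)$ (using $\deg s<\deg t$) gives $\deg V_U<\deg t$, which is what validates (\ref{eqn:ModifiedStrippedStopping1}) and, together with $\deg t=\deg\Lambda_f\leq (N-K)/2$, also (\ref{eqn:ModifiedStrippedStopping2}). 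Both the rising-contamination bookkeeping and this final $V_U(x)$ computation are missing ideas in your proposal, not mere bookkeeping details.
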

The proof is given in Appendix~C. Note, however, that this
algorithm does not compute $r(x)$ as in
(\ref{eqn:ModifiedGCDI_rx}).

\subsection{Summary of Decoding}
\label{section:DecodingSummary}

We can now put together several decoding algorithms
that consist of the following three steps.
The relation of all these decoding algorithms to the prior
literature is discussed in Section~\ref{sec:PriorGCDDecoding}.

\begin{enumerate}
\item \textbf{Transform:} Compute $Y(x) = \psi^{-1}(y)$. If $\deg
Y(x)<K$, we conclude $E(x)=0$ and $a(x) = Y(x)$, and the following
two steps can be skipped.

\item \textbf{Partial GCD:} If $\deg Y(x) \geq K$, run either the
Partial GCD Algorithm~I (Section \ref{section:Modified GCD
Algorithm}) or the Partial GCD Algorithm~II (Section
\ref{section:Modified GCD Algorithm II}). Either algorithm yields
the polynomial $t(x) = \tilde{\gamma} \Lambda_f(x)$ (for some
scalar $\tilde{\gamma}\in F$) provided that $\deg \Lambda_f(x)
\leq (N-K)/2$.

If $\deg t(x) > (N-K)/2$, we declare a decoding failure.

Depending on Step~3 (below),
the computation of the polynomials $s(x)$ and $\tilde{s}(x)$
may be unnecessary. In this case,
lines \ref{gcdline:Lines},
\ref{gcdline:LineTildes}, \ref{gcdline:Updates}, and
\ref{gcdline:Swaps} of the gcd algorithm can be deleted.

\item \textbf{Recovery:} Recover $a(x)$ by any of the following
methods:
\begin{enumerate}

\item From (\ref{eqn:LocatorBasedInterpolation}), we have
\begin{equation} \label{eqn:ax from tx}
 a(x)=\frac{t(x)Y(x) \bmod M_n(x)}{t(x)}
\end{equation}

(If the numerator of (\ref{eqn:ax from tx}) is not a multiple of
$t(x)$ or if $\deg a(x) \geq K$, then decoding failed due to some
uncorrectable error.)

\item When using the Partial GCD Algorithm~I in the Step~2, we can
compute $a(x) = r(x)/t(x)$ according to
(\ref{eqn:ModifiedGCDI_rx}).

(If $t(x)$ does not divide $r(x)$ or if $\deg a(x) \geq K$, we
declare a decoding failure.)

\item Alternatively, from (\ref{eqn:sMtE}), we can compute
\begin{equation}\label{eqn:computeE(x)}
 E(x)=\frac{-s(x)\cdot M_n(x)}{t(x)}
\end{equation}
and then obtain $a(x) = Y(x) - E(x)$.

(If the numerator of (\ref{eqn:computeE(x)}) is not a multiple of
$t(x)$ or if $\deg a(x) \geq K$, we declare a decoding failure.)

The computation can be simplified as follows. Let $E_L(x)\eqdef
E(x)-x^KE_U(x)$ denote the unknown part of $E(x)$. Then
\begin{equation}\label{eqn:computeEL(x)}
 E_L(x)=\frac{-s(x)\cdot M_n(x)- x^Kt(x)E_U(x)}{t(x)}
\end{equation}
and $a(x)$ can be recovered by $a(x)=\sum_{\ell=0}^{K-1} Y_\ell \,
x^{\ell}-E_L(x)$.

\end{enumerate}
\end{enumerate}

As stated, the described decoding algorithms are guaranteed to
correct all errors $e$ with $\deg \Lambda_f(x) \leq \tD$, which by
(\ref{eqn:LambdafLambdaeWd}) implies that they also correct all
errors $e$ with $\wD(e) \leq \tD$ (\ref{Basicbound:dminD}). If the
code satisfies the Ordered-Degree Condition
(\ref{eqn:OrderedDegreeCondition}) as well as the additional
condition
\begin{equation} \label{sim_setting}
\deg m_{k}(x)= \cdots = \deg m_{n-1}(x),
\end{equation}
then the algorithm is guaranteed to correct also all errors $e$
with $\wH(e) \leq \tH$ (\ref{Basicbound:dminH}) since in this
case, from (\ref{eqn:el}),
$\wH(e) \leq \tH$ implies $\wD(e) \leq \tD$.

\subsubsection*{An Extension}

Assume that the code satisfies the Ordered-Degree Condition
(\ref{eqn:OrderedDegreeCondition}) but not the additional
condition (\ref{sim_setting}). In this case, we can still correct
all errors $e$ with $\wH(e) \leq \tH$
(in addition to all errors with $\wD(e) \leq \tD$)
by the following procedure,
which, however, is practical only in special cases.

\begin{trivlist}
\item{} \noindent
{\bf Decoder with List of Special Error Positions}\\
First, run the gcd decoder of the previous section. If it
succeeds, stop. Otherwise, let $S_\Lambda$ be a precomputed list
of candidate error locator polynomials $G(x)$ with
$N_\text{zero}(G) \leq \tH$ and $\deg G(x) > (N-K)/2$. Check if
any $G(x) \in S_\Lambda$ satisfies all conditions of
Theorem~\ref{theorem:Error Locator Test}. If such a polynomial
$G(x)$ exists, we conclude that it is a multiple of the error
locator polynomial and we compute $a(x)$ from
(\ref{eqn:LocatorBasedInterpolation}). \hfill $\Box$
\end{trivlist}

\subsection{Relation to Prior Work}
\label{sec:PriorGCDDecoding}

The idea of gcd-based decoding is due to Sugiyama \cite{Sugiyama}
and its application to polynomial remainder codes is due to
Shiozaki \cite{Shiozaki}. As it turns out, most (and perhaps all)
gcd-based decoding algorithms in the literature, both for
Reed-Solomon codes and for polynomial residue codes, are
essentially identical to one of the algorithms of
Section~\ref{section:DecodingSummary}. However, even when
specialized to Reed-Solomon codes, no single paper (not even
\cite{Fedorenko,Fedorenko2}) seems to cover all these algorithms.
In particular, recovering $a(x)$ by (\ref{eqn:ax from tx}) does
not seem to have appeared in the literature. For Reed-Solomon
codes, the work by Gao \cite{Gao} appears to be the most
pertinent, see also \cite{Fedorenko,Fedorenko2}.
As for polynomial remainder codes,
our algorithms overcome the limitations of
Shiozaki's algorithm \cite{Shiozaki} as will be discussed below.

\subsubsection*{Relation to Gao's Decoding Algorithms for Reed-Solomon Codes}

In the same paper \cite{Gao} from 2003, Gao proposed two
algorithms for decoding Reed-Solomon codes. Each algorithm
comprises three steps, and the first step of each algorithm is
essentially Step~1 (``Transform'') of Section
\ref{section:DecodingSummary}.

\begin{description}
\item[Gao's first algorithm:] Step~2 of this algorithm is
essentially the Partial GCD Algorithm~I of
Section~\ref{section:Modified GCD Algorithm} with
(\ref{eqn:ModifiedStopping2}) as the stopping condition. Step~3 is
identical to Step~3.b in Section~\ref{section:DecodingSummary}.

As pointed out in \cite{Fedorenko2},
this algorithm is actually identical to Shiozaki's 1988 algorithm for
decoding Reed-Solomon codes \cite{Shiozaki}.

\item[Gao's second algorithm:]
The stopping condition of the gcd-algorithm (Step~2)
as stated in \cite{Gao} is not quite correct:
it should be changed from $\deg g(x)<(d+1)/2$ to $\deg g(x)<(d-1)/2$
where $d \eqdef n-k+1$ is the minimum Hamming distance of the code.

With this correction, Step~2 of this algorithm is identical to the
Partial GCD Algorithm~II of Section~\ref{section:Modified GCD
Algorithm} with (\ref{eqn:ModifiedStrippedStopping2}) as the
stopping condition. Step~3 of the algorithm turns out to be
equivalent to the first part of 3.c in
Section~\ref{section:DecodingSummary}, i.e., computing
$a(x)=Y(x)-E(x)$ with $E(x)$ as in~(\ref{eqn:computeE(x)}).
\end{description}

\subsubsection*{Relation to Shiozaki's Decoding Algorithms}

In \cite{Shiozaki}, Shiozaki proposed a new version of gcd-based decoding
for Reed-Solomon codes, which he also extended to polynomial remainder codes.
(For Reed-Solomon codes, Shiozaki's algorithm is equivalent to Gao's
first decoding algorithm, as noted above.)

Shiozaki's algorithm also consists of three steps: the first step
agrees with Step~1 in Section~\ref{section:DecodingSummary}, the
second step is equivalent to the Partial GCD Algorithm~I with
(\ref{eqn:ModifiedStopping2}) as the stopping condition, and the
third step is identical to Step~3.b of
Section~\ref{section:DecodingSummary}).

However, the assumptions in \cite{Shiozaki}
do not cover all codes considered in the present paper.
First, it is assumed in \cite{Shiozaki} that
all the moduli $m_i(x)$, $0\leq i\leq n-1$,
have the same degree.

Second, the argument given in \cite{Shiozaki} seems to assume that
all the moduli are irreducible although this assumption is not
stated explicitly.
Specifically, Shiozaki derived a congruence (see (37) in
\cite{Shiozaki}) involving an error locator polynomial as defined
in (\ref{eqn:el}), and then used the gcd-based decoding algorithm
to solve the congruence. However, if the moduli are not
irreducible, then the gcd-based decoding algorithm will find an
error factor polynomial $(\ref{eqn:ef})$ (as shown in our Theorems
\ref{theorem:GCD Output} and \ref{theorem:AlgorithmI Output})
rather than an error locator polynomial.

\section{Conclusion} \label{section:Conclusion}

We considered polynomial remainder codes and their decoding
more carefully than in previous work.
We explicitly allowed the code symbols to be
polynomials of different degrees,
which leads to two different notions of weight and distance
and, correspondingly, to two different Singleton bounds.

Our discussion of algebraic decoding revolved around the notion
of an error factor polynomial, which is a generalization
of an error locator polynomial.
From a correct error factor polynomial,
the transmitted codeword can be recovered in various ways,
including a new method
for erasures-only decoding of general Chinese remainder codes.

Error factor polynomials can be computed by a suitably adapted
partial gcd algorithm. We obtained several versions of such
decoding algorithms, which generalize previous work and which
include the published gcd-based decoders of Reed-Solomon codes as
special cases.

\clearpage

\appendix

\section*{Appendix A: The Number of Monic Irreducible Polynomials}

The number of monic irreducible polynomials of any degree over any
finite field can be expressed in closed form \cite{Roth}. However,
this closed-form expression is not easy to evaluate. Therefore,
for the convenience of the reader, we tabulate some of these
numbers.

The first table gives the number $N_i$ of binary irreducible polynomials
of degree~$i$:

\begin{center}
  \begin{tabular}[t]{c||c|c|c|c|c|c|c|c|c|c|c|c}
  $i$    & 1 & 2 & 3 & 4 & 5 & 6 & 7  & 8   & 9  & 10 & 11 & 12\\
  \hline
  $N_i$  & 2 & 1 & 2 & 3 & 6 & 9 & 18 & 30 & 56 & 99 & 186 & 335 \\
  \hline
  $S_i$  & 2 & 4 & 10 & 22 & 52 & 106 & 232 & 472 & 976 & 1966 & 4012 &
  8032 \\
  \end{tabular}
\end{center}
\vspace{3 mm}
\begin{center}
  \begin{tabular}[t]{c||c|c|c|c}
  $i$    & 13 & 14 & 15 & 16\\
  \hline
  $N_i$  & 630 & 1161 & 2182 & 4080\\
  \hline
  $S_i$  & 16222 & 32476 & 65206 & 130486\\
  \end{tabular}
\end{center}
The table also gives the number
$S_i \eqdef \sum_{\ell=1}^i \ell N_{\ell}$,
which is the maximum degree of $M_n(x)$ of a polynomial remainder code
that uses only irreducible moduli of degree at most~$i$.

The second table gives
the number $N_i$ of monic irreducible polynomials over GF$(2^j)$
of degree~$i$:
\begin{center}
  \begin{tabular}[t]{c||c|c|c|c|c|c}
  & GF($2^2$) &  GF($2^4$) & GF($2^6$) & GF($2^8$) & GF($2^{10}$) & GF($2^{12}$)\\
  \hline
  $N_1$    & 4 & 16 & 64 & 256 & 1024 & 4096 \\
  \hline
  $N_2$    & 6 & 120 & 2016 & 32640 & 523776 & 8386560 \\
  \end{tabular}
\end{center}
E.g, over $\text{GF}(2^8)$,
there are $256$ monic irreducible polynomials of degree~1 and
32640 polynomials of degree 2.

\section*{Appendix B: Proof of Theorem \ref{theorem:AlgorithmI Output}}

In this section, we first prove the loop invariant properties of
the Extended GCD Algorithm in Section~\ref{section:Extended GCD
Algorithm} and the Partial GCD Algorithm~I in Section
\ref{section:Modified GCD Algorithm}, and then proceed to prove
Theorem \ref{theorem:AlgorithmI Output}.

We begin with the Extended GCD Algorithm of
Section~\ref{section:Extended GCD Algorithm}.
In order to prove Theorem~\ref{theorem:GCD Loop Invariant},
we first recall that,
for $R=\Z$ or $R=F[x]$ for some field $F$,
\begin{equation}
\gcd\big(a, b\big) = \gcd\big( a+qb,b \big)
\label{eqn:gcdloopinvarient}
\end{equation}
for all $a,b,q \in R$, provided that $a$ and $b$ are not both
zero.
It follows that (\ref{eqn:GCDMainInvariant})
holds everywhere after line~\ref{gcdline:Line2}.

The other claims of Theorem~\ref{theorem:GCD Loop Invariant}
are covered by the following lemma.
\begin{lemma}[GCD Loop Invariant]\label{lemma:GCD Loop Invariant.1}
For the Extended GCD Algorithm in Section~\ref{section:Extended
GCD Algorithm}, the condition
\begin{equation}
r(x) = s(x)\cdot M_n(x)+t(x)\cdot E(x) \label{eqn:r0}
\end{equation}
holds both between lines \ref{gcdline:Assignj} and~\ref{gcdline:BeginWhile}
and between lines \ref{gcdline:EndWhile} and~\ref{gcdline:BeginStoppingIf}.
For the Partial GCD Algorithm~I in Section~\ref{section:Modified GCD Algorithm},
the condition
\begin{equation}
r(x) = s(x)\cdot M_n(x)+t(x)\cdot Y(x) \label{eqn:modified r0}
\end{equation}
also holds both between lines \ref{gcdline:Assignj} and~\ref{gcdline:BeginWhile}
and between lines \ref{gcdline:EndWhile} and~\ref{gcdline:BeginStoppingIf}.

For both algorithms, the conditions
\begin{eqnarray}
\deg r(x) &<& \deg \tilde{r}(x) \label{eqn:degr0r1}\\
\deg t(x) &>& \deg \tilde{t}(x) \label{eqn:degt0gt1}\\
\deg M_n(x) &=& \deg \tilde{r}(x)+ \deg t(x) \label{modified
degcheck}
\end{eqnarray}
hold between lines \ref{gcdline:EndWhile}
and~\ref{gcdline:BeginStoppingIf}.

Specifically, let $\delta_\ell$ denote the degree of $q(x)$
(line~\ref{gcdline:ri}) in the first iteration of the \pkw{while}
block (lines \ref{gcdline:BeginWhile}--\ref{gcdline:EndWhile}) of
the $\ell$-th loop iteration. Then, for the respective algorithms,
\begin{equation}  \label{eqn:trackdegtx}
 \deg t(x)=\deg\tilde{t}(x)+\delta_\ell=\sum_{v=1}^{\ell} \delta_v
\end{equation}
holds between lines~\ref{gcdline:EndWhile}
and~\ref{gcdline:BeginStoppingIf} in the $\ell$-th loop iteration.
\end{lemma}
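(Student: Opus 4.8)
The plan is to prove Lemma~\ref{lemma:GCD Loop Invariant.1} by induction on the loop iterations, treating the Extended GCD Algorithm and the Partial GCD Algorithm~I simultaneously, since the two differ only in the initialization $\tilde r(x):=E(x)$ versus $\tilde r(x):=Y(x)$ and in the stopping test; the body of the \pkw{loop begin}\,\ldots\,\pkw{end} block is identical, and the claimed invariants for Algorithm~I are obtained from those of the Extended GCD Algorithm by substituting $Y(x)$ for $E(x)$ throughout. So I will carry out the argument once, writing $E(x)$, with the understanding that it reads verbatim for Algorithm~I with $Y(x)$ in place of $E(x)$.

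First I would verify the base of the induction. After the initializations (lines \ref{gcdline:Initrx}--\ref{gcdline:LastInit}) we have $r(x)=M_n(x)=1\cdot M_n(x)+0\cdot E(x)$ and $\tilde r(x)=E(x)=0\cdot M_n(x)+1\cdot E(x)$, so the "$s,t$" identity (\ref{eqn:r0}) holds for both the $(r,s,t)$ triple and the $(\tilde r,\tilde s,\tilde t)$ triple; these companion identities for the tilde-variables are what I will actually propagate. Next I handle the \pkw{while} block. Each pass through lines \ref{gcdline:ri}--\ref{gcdline:i} replaces $(r,s,t)$ by $(r,s,t)-q(x)(\tilde r,\tilde s,\tilde t)$ with $\deg q(x)=i-j$; since the same linear combination is applied to all three components, the identity $r=sM_n+tE$ is preserved, while the $(\tilde r,\tilde s,\tilde t)$ triple is untouched and keeps its own identity. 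The definition of $q(x)$ in line~\ref{gcdline:ri} cancels the leading term, so each pass strictly lowers $\deg r(x)$, and the loop exits exactly when $\deg r(x)<\deg\tilde r(x)$, giving (\ref{eqn:degr0r1}). This establishes (\ref{eqn:r0})/(\ref{eqn:modified r0}) at both claimed program points (before the \pkw{while}, where it holds by the inductive hypothesis from the previous swap, and after the \pkw{while}, as just argued). Then the swap lines \ref{gcdline:Swapr}--\ref{gcdline:Swapt} simply exchange the two triples, so both identities persist into the next iteration, closing the induction for the first part.

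For the degree bookkeeping (\ref{eqn:degt0gt1}), (\ref{modified degcheck}), (\ref{eqn:trackdegtx}), I would track $\deg t(x)$, $\deg\tilde t(x)$, and $\deg\tilde r(x)$ together. Initially $\deg t=-\infty$ (from $t:=0$) $<\deg\tilde t=0$, and $\deg\tilde r=\deg E(x)$ while $\deg t=-\infty$ makes (\ref{modified degcheck}) read $\deg M_n=\deg E + \deg t$ — here I must be slightly careful, because at the very start $t=0$, so the first loop iteration is the genuine base case where $\delta_1$ enters. The key structural fact is that within the \pkw{while} block the updates to $t(x)$ are $t\leftarrow t-q\tilde t$ with $\deg q$ nonincreasing across the passes, and $\deg t<\deg\tilde t$ going in, so after the first pass $\deg t=\deg q+\deg\tilde t=\delta_\ell+\deg\tilde t$ and subsequent passes (with strictly smaller $\deg q$) do not change this leading degree; hence $\deg t=\deg\tilde t+\delta_\ell$ at loop exit, and since the swap then interchanges $t\leftrightarrow\tilde t$, summing over iterations gives $\deg t=\sum_{v=1}^\ell\delta_v$, which is (\ref{eqn:trackdegtx}), and in particular $\deg t>\deg\tilde t$, i.e.\ (\ref{eqn:degt0gt1}). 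For (\ref{modified degcheck}) I would note that the total degree drop in $r$ during the \pkw{while} block equals $\deg r - \deg\tilde r$ (old values), that this drop equals $\delta_\ell$ plus the contributions of later passes, and that after the swap the new $\deg\tilde r$ is the old $\deg\tilde r$ minus nothing — more cleanly, I would prove $\deg M_n=\deg\tilde r+\deg t$ by induction: it holds initially, and each loop iteration lowers $\deg\tilde r$ (after the swap) by exactly the amount $\deg t$ increases, namely $\delta_\ell$, because the degree of $r(x)$ after division is $\deg\tilde r_{\text{old}}-\delta_\ell$ (the quotient of $r$ by $\tilde r$ has degree $\delta_\ell$ and the remainder has strictly smaller degree than $\tilde r$).

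The main obstacle I anticipate is the edge-case handling of the first iteration: before the first loop pass, $t(x)=0$ has degree $-\infty$, so the formula $\deg t=\deg\tilde t+\delta_\ell$ and the invariant (\ref{modified degcheck}) must be checked directly for $\ell=1$ rather than inherited, and one must confirm that the definition "$\delta_\ell$ is the degree of $q(x)$ in the \emph{first} pass of the \pkw{while} block of the $\ell$-th loop" is well-posed (the \pkw{while} body executes at least once in each loop iteration because the loop only continues past the stopping test when $r(x)\neq 0$, and by the degree invariant $\deg r\ge\deg\tilde r$ on re-entry after a swap). A secondary subtlety is making the simultaneous treatment of the two algorithms airtight: one must observe that the stopping tests (\ref{eqn:ModifiedStopping1})/(\ref{eqn:ModifiedStopping2}) for Algorithm~I are only invoked \emph{between} lines \ref{gcdline:EndWhile} and \ref{gcdline:BeginStoppingIf}, i.e.\ at exactly the program points where the invariants are asserted, and that they do not alter any variable, so the loop dynamics — and hence all the invariants — are literally the same polynomial recurrences with $Y$ substituted for $E$. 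Everything else is the routine verification of a linear-algebra bookkeeping identity along a Euclidean-type recursion.
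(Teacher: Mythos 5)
Your proposal is correct and follows essentially the same route as the paper's proof: induction over the outer loop iterations, with the linear-combination identities verified as standard loop invariants (propagated together with their companions for the tilde-variables), the observation that only the first pass of the \pkw{while} block changes $\deg t(x)$ (raising it to $\deg\tilde{t}(x)+\delta_\ell$, with later passes leaving it fixed since $\deg q(x)$ decreases), and the swap restoring the hypotheses for the next iteration. The one blemish is your parenthetical justification of (\ref{modified degcheck}) --- the remainder after the division does \emph{not} have degree exactly $\deg\tilde{r}_{\mathrm{old}}-\delta_\ell$, its degree is merely bounded above by $\deg\tilde{r}_{\mathrm{old}}$ --- but the induction still closes because $\delta_{\ell+1}$ is by definition the degree gap between divisor and remainder at the start of the next iteration, so $\deg\tilde{r}$ at the next checkpoint drops by exactly $\delta_{\ell+1}$ while $\deg t(x)$ rises by $\delta_{\ell+1}$; the paper handles this same bookkeeping by carrying the ``crossed'' invariant $\deg M_n(x)=\deg r(x)+\deg\tilde{t}(x)$ at the top of each iteration and converting it at the checkpoint.
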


\begin{proof}
Conditions (\ref{eqn:r0}) and (\ref{eqn:modified r0}) are loop
invariants (of the respective algorithms), as is easily verified.
Inequality (\ref{eqn:degr0r1}) is obvious. It remains to prove
(\ref{eqn:degt0gt1})--(\ref{eqn:trackdegtx}). For both algorithms,
assume the conditions
\begin{eqnarray}
 \deg r(x) &>& \deg \tilde{r}(x) \label{proof:degr0gr1}\\
 \deg t(x) &<& \deg \tilde{t}(x) \label{proof:degt0t1}\\
 \deg M_n(x)&=&\deg r(x)+ \deg \tilde{t}(x)  \label{proof:degcheck.1}
\end{eqnarray}
hold between lines \ref{gcdline:Assignj} and
\ref{gcdline:BeginWhile} in the $\ell$-th loop iteration. Note
that $r(x)$, $\tilde{r}(x)$, $t(x)$, and $\tilde{t}(x)$ are
initialized to $M_n(x)$, $E(x)$ or $Y(x)$, $0$, and $1$,
respectively; thus
(\ref{proof:degr0gr1})--(\ref{proof:degcheck.1}) obviously hold
between lines \ref{gcdline:Assignj} and \ref{gcdline:BeginWhile}
in the first iteration. In the following, we begin with $\ell=1$
and then complete the proof by induction.

For both algorithms, let $d_\ell=\deg r(x)$ denote the degree of
$r(x)$ between lines \ref{gcdline:Assignj} and
\ref{gcdline:BeginWhile} in the $\ell$-th loop iteration, and let
$\delta_\ell$ denote the degree of $q(x)$ (line~\ref{gcdline:ri})
in the first iteration of the \pkw{while} block (lines
\ref{gcdline:BeginWhile}--\ref{gcdline:EndWhile}) of the $\ell$-th
loop iteration. Note that $\delta_\ell= d_\ell-\deg \tilde{r}(x)>
0$ and from (\ref{proof:degcheck.1})
\begin{equation}
 \deg M_n(x)=d_\ell+ \deg \tilde{t}(x).  \label{degcond}
\end{equation}
Recall that, from (\ref{proof:degt0t1}), $\deg t(x) < \deg
\tilde{t}(x)$ holds before entering the \pkw{while} block, and
recall the update rule for $t(x)$ in line~\ref{gcdline:Updatet}.
Clearly, in the first execution of line~\ref{gcdline:Updatet}, the
degree of $t(x)$ is increased to $\deg \tilde{t}(x)+\delta_\ell$,
and further iterations inside the \pkw{while} block will not
change $\deg t(x)$ since $\deg q(x)$ decreases in each iteration.
It follows that $\deg t(x)=\deg \tilde{t}(x)+\delta_\ell$ holds
between lines \ref{gcdline:EndWhile}
and~\ref{gcdline:BeginStoppingIf}, and in particular, $\deg
t(x)=\delta_1$ holds when $\ell=1$ because $\deg \tilde{t}(x)=0$
holds throughout the \pkw{while} block of the first loop
iteration. Thus, (\ref{eqn:degt0gt1}) and (\ref{eqn:trackdegtx})
both hold between lines \ref{gcdline:EndWhile}
and~\ref{gcdline:BeginStoppingIf} in the first loop iteration.
Further, since $\delta_\ell= d_\ell-\deg \tilde{r}(x)$,  we have
\begin{align}
 \deg t(x) &=\deg \tilde{t}(x)+d_\ell-\deg \tilde{r}(x)  \\
           &=\deg M_n(x)-\deg \tilde{r}(x),
\end{align}
where the last step follows from (\ref{degcond}), and thus
(\ref{modified degcheck}) holds between lines
\ref{gcdline:EndWhile} and~\ref{gcdline:BeginStoppingIf} in the
$\ell$-th loop iteration.

After the swaps of the corresponding auxiliary polynomials in
lines \ref{gcdline:Swapr}--\ref{gcdline:Swapt},  the conditions
(\ref{proof:degr0gr1})--(\ref{proof:degcheck.1}) hold again
between lines \ref{gcdline:Assignj} and \ref{gcdline:BeginWhile}
for the subsequent loop iteration. In particular, for $\ell=2$,
$\deg \tilde{t}(x)=\delta_1$ holds between lines
\ref{gcdline:Assignj} and \ref{gcdline:BeginWhile} in the second
loop iteration. The proof is then completed by
induction. 
\end{proof}

We now start to prove Theorem \ref{theorem:AlgorithmI Output}. If
$E(x)=0$, which implies \mbox{$\deg Y(x)<K$}, Theorem
\ref{theorem:AlgorithmI Output} holds obviously; we thus prove in
the following only the case where $E(x)\neq 0$.
For the Partial GCD Algorithm~I in Section \ref{section:Modified
GCD Algorithm}, let $g$ denote the largest integer such that the
coefficient of $x^g$ of either $r(x)$ or of $\tilde{r}(x)$ is
unknown, or alternatively let $g$ denote the largest integer such
that the coefficient of $x^g$ of either $r(x)$ or of
$\tilde{r}(x)$ is ``probably unmatched'' with the corresponding
$r(x)$ or the corresponding $\tilde{r}(x)$ in the Extended GCD
Algorithm of Section~\ref{section:Extended GCD Algorithm} when we
run both algorithms simultaneously. Clearly, the algorithm starts
with $g = K-1$, since the coefficients $E_0,E_1,\ldots, E_{K-1}$
of $\tilde{r}(x):=Y(x)$ (line~\ref{gcdline:Line2}) are unknown.
Moreover, let $h\eqdef \max\{\deg r(x),\deg \tilde{r}(x)\}$.
Clearly, the algorithm starts with $h=\deg M_n(x)=N$.
\begin{lemma}\label{lemma:GCD Algorithm.1} For the Partial GCD Algorithm~I of Section
\ref{section:Modified GCD Algorithm}, let $\delta_\ell$ denote the
degree of $q(x)$ in the first iteration of the \pkw{while} block
(lines \ref{gcdline:BeginWhile}--\ref{gcdline:EndWhile}) of the
$\ell$-th loop iteration. If $h-g>2\delta_\ell$ holds between
lines~\ref{gcdline:Assignj} and~\ref{gcdline:BeginWhile}, then the
value of $q(x)$ (line~\ref{gcdline:ri}) throughout the \pkw{while}
block in the $\ell$-th loop iteration is exactly the same as the
corresponding one of the Extended GCD Algorithm of
Section~\ref{section:Extended GCD Algorithm} in the same loop
iteration. In addition, $g=(K-1)+\sum_{v=1}^{\ell} \delta_v$ and
$h=N-\sum_{v=1}^{\ell} \delta_v$ both hold between lines
\ref{gcdline:EndWhile} and~\ref{gcdline:BeginStoppingIf} in the
$\ell$-th loop iteration.
\end{lemma}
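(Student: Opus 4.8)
The plan is to prove Lemma~\ref{lemma:GCD Algorithm.1} by induction on the loop iteration index $\ell$, tracking simultaneously the two quantities $g$ (the largest exponent whose coefficient is ``unknown/unmatched'') and $h=\max\{\deg r(x),\deg\tilde r(x)\}$, together with the auxiliary-polynomial degree invariants already established in Lemma~\ref{lemma:GCD Loop Invariant.1}. The starting point is the base case: before the first loop iteration we have $g=K-1$ (since the unknown coefficients of $\tilde r(x):=Y(x)$ are exactly $E_0,\ldots,E_{K-1}$) and $h=N=\deg M_n(x)$; note that $h-g = N-K+1$.

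First I would analyze a single \pkw{while} block. Entering the $\ell$-th loop iteration we have $\deg r(x) > \deg\tilde r(x)$ with $\delta_\ell = \deg r(x)-\deg\tilde r(x)$, and the hypothesis $h-g>2\delta_\ell$. The key observation is that the quotient computed in line~\ref{gcdline:ri}, $q(x) = \frac{r_i}{\tilde r_j}x^{i-j}$, depends only on the \emph{leading} coefficients of $r(x)$ and $\tilde r(x)$; as long as those leading coefficients sit above index $g$ (i.e.\ are ``known/matched''), $q(x)$ agrees with the quotient computed by the full Extended GCD Algorithm run in parallel. The condition $h-g>2\delta_\ell$ is precisely what guarantees this: the top coefficient of $r(x)$ is at degree $h$ (or $h-\delta_\ell$ or so), the subtraction $r(x):=r(x)-q(x)\tilde r(x)$ can at worst contaminate coefficients of $r(x)$ down to degree $g$ shifted up by $\delta_\ell$, and over the whole \pkw{while} block (which performs a polynomial division reducing $\deg r(x)$ by $\delta_\ell$) the window of ``safe'' top coefficients does not dip below $g$. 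Hence every $q(x)$ produced inside the \pkw{while} block matches the Extended GCD Algorithm's, as claimed.

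Next I would track how $g$ and $h$ update across the block and the subsequent swaps. After the \pkw{while} block, $\deg r(x)$ has dropped to something $<\deg\tilde r(x)$, and the swaps in lines \ref{gcdline:Swapr}--\ref{gcdline:Swapt} exchange $r(x)\leftrightarrow\tilde r(x)$; the new $h$ equals the old $\deg\tilde r(x)$, i.e.\ $h$ decreases by $\delta_\ell$, giving $h = N-\sum_{v=1}^{\ell}\delta_v$ by the inductive hypothesis. For $g$: the subtraction mixes the unknown low part of $\tilde r(x)$ (indices $\le g$) scaled by $x^{i-j}$ with $r(x)$, so the unknown window of $r(x)$ now extends up to $g+\delta_\ell$; tracing through the repeated subtractions and the swap shows the new $g$ is $g+\delta_\ell$, hence $g = (K-1)+\sum_{v=1}^{\ell}\delta_v$. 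Combining with Lemma~\ref{lemma:GCD Loop Invariant.1}'s identity $\deg t(x)=\sum_{v=1}^{\ell}\delta_v$ ties everything together for the induction step; the hypothesis $h-g>2\delta_{\ell+1}$ for the next round is an assumption of the lemma, so no further verification is needed there.

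The main obstacle I anticipate is making the bookkeeping of the ``unknown/unmatched'' set fully rigorous through an entire \pkw{while} block rather than a single subtraction: one must argue that the iterated subtractions $r(x):=r(x)-q(x)\tilde r(x)$, with $\deg q(x)$ strictly decreasing from $\delta_\ell$ down to (possibly) $0$, never let the contaminated region catch up to the leading coefficients that drive the next $q(x)$ — this is exactly where the factor of $2$ in $h-g>2\delta_\ell$ is consumed, with one $\delta_\ell$ accounting for the upward shift of the unknown part of $\tilde r(x)$ and the other for the span of the division itself. Once this ``no-contamination-of-the-pivot'' claim is nailed down, the degree updates for $g$ and $h$ and the identification of $q(x)$ with the Extended GCD Algorithm's quotient follow routinely, and the induction closes.
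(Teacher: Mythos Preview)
Your proposal is correct and follows essentially the same approach as the paper's own proof: an induction on $\ell$ that tracks $g$ and $h$ through a single \pkw{while} block, observes that the first execution of line~\ref{gcdline:Updater} raises $g$ by $\delta_\ell$ while subsequent iterations (with strictly smaller $\deg q(x)$) leave $g$ unchanged, notes that $h$ drops to $\deg\tilde r(x)=h-\delta_\ell$ after the block, and uses the swap to restore the roles of $r,\tilde r$ for the next iteration. Your discussion of why the factor~$2$ in $h-g>2\delta_\ell$ is exactly what prevents the contaminated region from reaching the pivot coefficients is a bit more explicit than the paper's terse treatment, but the argument is the same.
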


\begin{proof}
We will prove this theorem by induction. Recall that the update
rule for $r(x)$ in line \ref{gcdline:Updater} is
\begin{equation}
   r(x):=r(x)-q(x)\cdot \tilde{r}(x). \label{updr0}
\end{equation}
In the first loop iteration, $h=\deg r(x)=N$ and $g =K -1$ clearly
hold between lines~\ref{gcdline:Assignj}
and~\ref{gcdline:BeginWhile}, and $g$ is the largest integer such
that the coefficient of $x^g$ of $\tilde{r}(x)$ is unknown. If
$h-g>2\delta_1$ holds between lines~\ref{gcdline:Assignj}
and~\ref{gcdline:BeginWhile}, then the first execution of
(\ref{updr0}) in the \pkw{while} block increases $g$ by
$\delta_1$; afterwards, further iterations in the same block will
not change $g$ since $\deg q(x)$ decreases in each iteration.
Moreover, after executing the \pkw{while} block, $h=\deg
\tilde{r}(x)=N-\delta_1$ holds between lines
\ref{gcdline:EndWhile} and~\ref{gcdline:BeginStoppingIf}. It is
also easily seen that throughout the \pkw{while} block, the value
of $q(x)$ in line~\ref{gcdline:ri} is exactly identical to the
corresponding one of the Extended GCD Algorithm.

Note that the increased $g$, i.e., after the first execution of
(\ref{updr0}), will become to denote the largest integer such that
the coefficient of $x^g$ of $r(x)$ is unknown. It follows after
the swap of $r(x)$ and $\tilde{r}(x)$ in line \ref{gcdline:Swapr}
that the increased $g$ will again become to denote the largest
integer such that the coefficient of $x^g$ of $\tilde{r}(x)$ is
unknown between lines~\ref{gcdline:Assignj}
and~\ref{gcdline:BeginWhile} for subsequent loop iteration, and
the decreased $h$ will again become to denote $\deg r(x)$ between
lines~\ref{gcdline:Assignj} and~\ref{gcdline:BeginWhile} for
subsequent loop iteration. The proof is then completed by
induction.
\end{proof}
Since $h-g=N-K+1$ holds between lines~\ref{gcdline:Assignj}
and~\ref{gcdline:BeginWhile} in the first loop iteration, it
follows from Lemma~\ref{lemma:GCD Algorithm.1} that if
\begin{equation}\label{eqn:predbound}
2 \sum_{v=1}^{\ell} \delta_v <  N-K+1,
\end{equation}
then, from the first to the $\ell$-th loop iteration, $q(x)$ and
thus $s(x)$ and $t(x)$  are exactly the same as in the Extended
GCD Algorithm. Moreover from Lemma~\ref{lemma:GCD Loop
Invariant.1}, $\deg t(x)=\sum_{v=1}^{\ell} \delta_v$ holds between
lines \ref{gcdline:EndWhile} and~\ref{gcdline:BeginStoppingIf}. In
order to obtain (\ref{eqn:gcdtLambdaf}), which implies that $\deg
t(x)=\deg \Lambda_f(x)$, it turns out from (\ref{eqn:predbound})
that if
\begin{equation}
2\deg \Lambda_f(x)\leq N-K,
\end{equation}
which agrees with (\ref{d_bound.0}), then the algorithm maintains
exactly the same $s(x)$ and $t(x)$ as the Extended GCD Algorithm
of Section~\ref{section:Extended GCD Algorithm} until $\deg
t(x)=\deg \Lambda_f(x)$.

It remains to argue the validity of (\ref{eqn:ModifiedStopping1})
and (\ref{eqn:ModifiedStopping2}) (i.e., line
\ref{gcdline:BeginStoppingIf} in the Partial GCD Algorithm~I) as
appropriate terminating conditions. Assume now that
(\ref{d_bound.0}) is satisfied and suppose the Extended GCD
Algorithm (in Section~\ref{section:Extended GCD Algorithm})
terminates (at line \ref{gcdline:Return}) in the $\mu$-th loop
iteration. We will show in the following that the Partial GCD
Algorithm~I also terminates (at line \ref{gcdline:Return}) in the
$\mu$-th loop iteration.

As shown above, since both the gcd algorithms maintain exactly the
same $s(x)$ and $t(x)$ until $\deg t(x)= \deg \Lambda_f(x)$,
clearly, before the $\mu$-th loop iteration,
\begin{equation}
\deg t(x)<\deg \Lambda_f(x) \leq (N-K)/2
\end{equation}
holds between lines \ref{gcdline:EndWhile}
and~\ref{gcdline:BeginStoppingIf}; moreover, by (\ref{modified
degcheck}) of Lemma \ref{lemma:GCD Loop Invariant.1},
\begin{eqnarray}
\deg \tilde{r}(x)&=& \deg M_n(x)-\deg t(x) \\
                 &>& (N+K)/2 \label{stopcond} \\
                 &>& \deg t(x)+K
\end{eqnarray}
also holds between lines \ref{gcdline:EndWhile}
and~\ref{gcdline:BeginStoppingIf}. Further, from
(\ref{eqn:degt0gt1}), $\deg t(x)> \deg \tilde{t}(x)$ holds as well
between lines \ref{gcdline:EndWhile}
and~\ref{gcdline:BeginStoppingIf}. Therefore,
\begin{equation}
\deg \tilde{r}(x)> (N+K)/2> \deg t(x)+K>\deg \tilde{t}(x)+K
\label{r1t0t1}
\end{equation}
holds between lines \ref{gcdline:EndWhile}
and~\ref{gcdline:BeginStoppingIf} in every but before the $\mu$-th
loop iteration. It then follows after swapping all auxiliary
polynomials in lines \ref{gcdline:Swapr}--\ref{gcdline:Swapt} that
\begin{equation}
\deg r(x)> (N+K)/2 >\deg \tilde{t}(x)+K >\deg t(x)+K
\label{r0t1t0}
\end{equation}
holds between lines~\ref{gcdline:Assignj}
and~\ref{gcdline:BeginWhile} for each subsequent loop iteration.
Then, after executing the \pkw{while} block in the $\mu$-th loop
iteration, the Extended GCD Algorithm in Section
\ref{section:Extended GCD Algorithm} terminates with $r(x)=0$, and
(\ref{eqn:sMtE}) holds; meanwhile, for the Partial GCD
Algorithm~I, we obtain the desired $t(x)$ (with $\deg t(x)= \deg
\Lambda_f(x)$) and $s(x)$, and we have from (\ref{eqn:modified
r0})
\begin{eqnarray}
 r(x)&=& s(x)M_n(x)+t(x)Y(x)\\
     &=& s(x)M_n(x)+t(x)E(x)+t(x)a(x)   \label{gcd.b.1}\\
     &=& t(x)a(x)                       \label{gcd.b.2}
\end{eqnarray}
of $\deg r(x)=\deg t(x)+\deg a(x)< \deg t(x)+K$, where
(\ref{gcd.b.1}) to (\ref{gcd.b.2}) follows from (\ref{eqn:sMtE}).
Finally, since from (\ref{r0t1t0}) $\deg r(x)> \deg t(x)+K$ holds
between lines~\ref{gcdline:Assignj} and~\ref{gcdline:BeginWhile}
but from (\ref{gcd.b.2}) $\deg r(x) < \deg t(x)+K$ holds between
lines \ref{gcdline:EndWhile} and~\ref{gcdline:BeginStoppingIf},
thus the correctness of (\ref{eqn:ModifiedStopping1}) as a
terminating condition is guaranteed; meanwhile from
(\ref{gcd.b.2}) we obtain (\ref{eqn:ModifiedGCDI_rx}). As for
(\ref{eqn:ModifiedStopping2}), since from (\ref{r0t1t0}) $\deg
r(x)> (N+K)/2$ holds between lines~\ref{gcdline:Assignj}
and~\ref{gcdline:BeginWhile} but (from (\ref{gcd.b.2}) and then
(\ref{d_bound.0})) $\deg r(x)<\deg t(x)+K = \deg \Lambda_f(x)+K
\leq (N+K)/2$ holds between lines~\ref{gcdline:EndWhile}
and~\ref{gcdline:BeginStoppingIf}, we thus conclude that
(\ref{eqn:ModifiedStopping2}) can serve as an alternative
terminating condition.

\section*{Appendix C: Proof of Theorem \ref{theorem:AlgorithmII Output}}

In this section, we prove Theorem \ref{theorem:AlgorithmII Output}
in an analogous way as proving Theorem \ref{theorem:AlgorithmI
Output}. The following theorem is an analog of
Lemma~\ref{lemma:GCD Loop Invariant.1}.
\begin{lemma}[GCD Loop Invariant]\label{lemma:GCD Loop Invariant.1.b}
For the Partial GCD Algorithm~II in Section~\ref{section:Modified
GCD Algorithm II}, the condition
\begin{equation}
r(x) = s(x)\cdot M_U(x)+t(x)\cdot E_U(x) \label{eqn:modified r0.b}
\end{equation}
holds both between lines \ref{gcdline:Assignj}
and~\ref{gcdline:BeginWhile} and between lines
\ref{gcdline:EndWhile} and~\ref{gcdline:BeginStoppingIf};
moreover, the conditions
\begin{eqnarray}
\deg r(x) &<& \deg \tilde{r}(x) \label{eqn:degr0r1.b}\\
\deg t(x) &>& \deg \tilde{t}(x) \label{eqn:degt0gt1.b}\\
\deg M_U(x) &=& \deg \tilde{r}(x)+ \deg t(x) \label{modified
degcheck.b}
\end{eqnarray}
hold between lines \ref{gcdline:EndWhile}
and~\ref{gcdline:BeginStoppingIf}.

Specifically, let $\delta_\ell$ denote the degree of $q(x)$
(line~\ref{gcdline:ri}) in the first iteration of the \pkw{while}
block (lines \ref{gcdline:BeginWhile}--\ref{gcdline:EndWhile}) of
the $\ell$-th loop iteration. Then, $\deg t(x)=\deg
\tilde{t}(x)+\delta_\ell=\sum_{v=1}^{\ell} \delta_v$ holds between
lines \ref{gcdline:EndWhile} and~\ref{gcdline:BeginStoppingIf} in
the $\ell$-th loop iteration.
\end{lemma}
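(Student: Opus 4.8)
The plan is to observe that the Partial GCD Algorithm~II is structurally \emph{identical} to the Extended GCD Algorithm of Section~\ref{section:Extended GCD Algorithm}, with $M_U(x)$ and $E_U(x)$ playing the roles of $M_n(x)$ and $E(x)$: the update rules in lines~\ref{gcdline:Updater}--\ref{gcdline:Updatet}, the inner \pkw{while} block, and the swaps in lines~\ref{gcdline:Swapr}--\ref{gcdline:Swapt} are unchanged, and the only property of the inputs used anywhere in the argument is the strict degree inequality $\deg M_U(x) > \deg E_U(x)$, which holds by the stated precondition. Consequently the proof of Lemma~\ref{lemma:GCD Loop Invariant.1} transfers essentially verbatim; in the degenerate case $E_U(x)=0$ the loop body is never entered and there is nothing to prove, so we may assume $E_U(x)\neq 0$.

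First I would establish (\ref{eqn:modified r0.b}) as a loop invariant. It holds immediately after the initializations $r(x):=M_U(x)$, $\tilde r(x):=E_U(x)$, $s(x):=1$, $t(x):=0$, $\tilde s(x):=0$, $\tilde t(x):=1$, since then $r(x)=1\cdot M_U(x)+0\cdot E_U(x)$. It is preserved by each pass through lines~\ref{gcdline:Updater}--\ref{gcdline:Updatet}, because these subtract $q(x)$ times the $(\tilde r,\tilde s,\tilde t)$-copy of the identity $\tilde r = \tilde s M_U + \tilde t E_U$ from the $(r,s,t)$-copy; and it is trivially preserved by the simultaneous swaps of $(r,\tilde r)$, $(s,\tilde s)$, $(t,\tilde t)$. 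Hence (\ref{eqn:modified r0.b}) holds both between lines~\ref{gcdline:Assignj} and~\ref{gcdline:BeginWhile} and between lines~\ref{gcdline:EndWhile} and~\ref{gcdline:BeginStoppingIf}.

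Next I would prove the degree relations (\ref{eqn:degr0r1.b})--(\ref{modified degcheck.b}) together with the bookkeeping $\deg t(x)=\deg\tilde t(x)+\delta_\ell=\sum_{v=1}^\ell\delta_v$ by induction on the loop index $\ell$, following the proof of Lemma~\ref{lemma:GCD Loop Invariant.1}. The inductive hypothesis is that, just before the \pkw{while} block of the $\ell$-th iteration, one has $\deg r(x) > \deg\tilde r(x)$, $\deg t(x) < \deg\tilde t(x)$, and $\deg M_U(x) = \deg r(x) + \deg\tilde t(x)$; for $\ell=1$ these hold because of the initialization ($t=0$, $\tilde t=1$, $r=M_U$, $\tilde r=E_U$) and the assumption $\deg M_U(x)>\deg E_U(x)$. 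Writing $\delta_\ell=\deg r(x)-\deg\tilde r(x)>0$ for the degree of $q(x)$ in the first pass through line~\ref{gcdline:ri}, the first execution of line~\ref{gcdline:Updatet} raises $\deg t(x)$ from below $\deg\tilde t(x)$ to exactly $\deg\tilde t(x)+\delta_\ell$, and subsequent passes leave it unchanged since $\deg q(x)$ strictly decreases; meanwhile $\deg r(x)$ drops to at most $\deg\tilde r(x)-1$ while $\deg\tilde r(x)$ is untouched. Combining $\deg t(x)=\deg\tilde t(x)+\delta_\ell$ with $\deg M_U(x)=\deg r(x)+\deg\tilde t(x)=(\deg\tilde r(x)+\delta_\ell)+\deg\tilde t(x)$ gives (\ref{modified degcheck.b}), and (\ref{eqn:degr0r1.b}), (\ref{eqn:degt0gt1.b}) follow as well; the swaps in lines~\ref{gcdline:Swapr}--\ref{gcdline:Swapt} then reinstate the inductive hypothesis for iteration $\ell+1$, with the new $\deg\tilde t(x)$ equal to $\sum_{v=1}^{\ell}\delta_v$. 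Summing the $\delta_v$ over iterations yields the claimed formula for $\deg t(x)$. The whole argument is routine; the only thing that really needs to be said is the base case and the remark that nothing in the inductive step uses any arithmetic property of $M_n(x)$ beyond its degree exceeding that of the second input, which is exactly why substituting $M_U(x),E_U(x)$ for $M_n(x),E(x)$ is legitimate.
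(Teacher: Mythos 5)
Your proposal is correct and follows essentially the same route as the paper, which simply observes that the proof of Lemma~\ref{lemma:GCD Loop Invariant.1} carries over verbatim with $M_n(x)$ replaced by $M_U(x)$ (and $E(x)$ by $E_U(x)$); you have merely written out the transferred induction explicitly. The one substantive observation you add---that the argument uses no property of the inputs beyond $\deg M_U(x) > \deg E_U(x)$---is exactly the justification the paper leaves implicit.
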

The proof of Lemma~\ref{lemma:GCD Loop Invariant.1.b} is the same
as the proof of Lemma~\ref{lemma:GCD Loop Invariant.1}, except for
replacing the $M_n(x)$ in the proof of Lemma~\ref{lemma:GCD Loop
Invariant.1} by $M_U(x)$, and is thus omitted.

We now start to prove Theorem \ref{theorem:AlgorithmII Output}. If
$E(x)=0$, which implies $E_U(x)=0$,
Theorem~\ref{theorem:AlgorithmII Output} holds obviously; we thus
prove in the following only the case where $E(x)\neq 0$. For the
Partial GCD Algorithm~II of Section \ref{section:Modified GCD
Algorithm II}, let $g$ denote the largest integer such that $x^g$
of either $r(x)$ or of $\tilde{r}(x)$ is unknown. Clearly, with
$M_U(x)$ and $E_U(x)$ as inputs, the algorithm starts with $g =
-1$. Moreover, let $h\eqdef \max\{\deg r(x),\deg \tilde{r}(x)\}$.
Clearly, the algorithm starts with $h=\deg M_U(x)=N-K$.

\begin{lemma}\label{lemma:GCD Algorithm.1.b}
For the Partial GCD Algorithm~II in Section \ref{section:Modified
GCD Algorithm II}, let $\delta_\ell$ denote the degree of $q(x)$
in the first iteration of the \pkw{while} block (lines
\ref{gcdline:BeginWhile}--\ref{gcdline:EndWhile}) of the $\ell$-th
loop iteration.  If $h-g>2\delta_\ell$ holds between
lines~\ref{gcdline:Assignj} and~\ref{gcdline:BeginWhile}, then the
value of $q(x)$ (line~\ref{gcdline:ri}) throughout the \pkw{while}
block in the $\ell$-th loop iteration is exactly the same as the
corresponding one of the Extended GCD Algorithm of
Section~\ref{section:Extended GCD Algorithm} in the same loop
iteration. In addition, $g=-1+\sum_{v=1}^{\ell} \delta_v$ and
$h=N-K-\sum_{v=1}^{\ell} \delta_v$ both hold between lines
\ref{gcdline:EndWhile} and~\ref{gcdline:BeginStoppingIf} in the
$\ell$-th loop iteration.
\end{lemma}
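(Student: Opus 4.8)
The plan is to prove this by induction on the loop index $\ell$, mirroring the proof of Lemma~\ref{lemma:GCD Algorithm.1} in Appendix~B with $M_n(x)$ and $E(x)$ replaced throughout by $M_U(x)$ and $E_U(x)$, and with the initial values $g=K-1$, $h=N$ replaced by $g=-1$, $h=N-K$.

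For the base case $\ell=1$, between lines~\ref{gcdline:Assignj} and~\ref{gcdline:BeginWhile} of the first loop iteration we have $r(x)=M_U(x)$ and $\tilde{r}(x)=E_U(x)$, so $h=\deg r(x)=\deg M_U(x)=N-K$, while $g=-1$ because every coefficient of $M_U(x)$ and of $E_U(x)$ is a (shifted copy of a) coefficient that is matched with the corresponding run of the Extended GCD Algorithm of Section~\ref{section:Extended GCD Algorithm}. Under the hypothesis $h-g=N-K+1>2\delta_1$, the first execution of the update $r(x):=r(x)-q(x)\tilde r(x)$ (line~\ref{gcdline:Updater}), with $q(x)$ a monomial of degree $\delta_1$, pushes the ``probably unmatched'' boundary of $r(x)$ up to $g+\delta_1$; the hypothesis ensures this boundary stays strictly below the surviving leading part of $r(x)$, so the quotient $q(x)$ computed in line~\ref{gcdline:ri} uses only matched coefficients and hence coincides with its Extended-GCD counterpart. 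The same reasoning applies to the later iterations of the \pkw{while} block, where $\deg q(x)$ strictly decreases and so neither raises $g$ further nor spoils the match. Using $\delta_1=\deg r(x)-\deg\tilde r(x)$ from the initialization, one then reads off that between lines~\ref{gcdline:EndWhile} and~\ref{gcdline:BeginStoppingIf} we have $g=-1+\delta_1$ and $h=\deg\tilde r(x)=(N-K)-\delta_1$.

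For the inductive step, the swaps in lines~\ref{gcdline:Swapr}--\ref{gcdline:Swapt} turn the boundary $g$ into the unmatched boundary of the new $\tilde r(x)$ and $h$ into $\deg r(x)$, so the induction hypotheses $g=-1+\sum_{v=1}^{\ell-1}\delta_v$ and $h=N-K-\sum_{v=1}^{\ell-1}\delta_v$ hold between lines~\ref{gcdline:Assignj} and~\ref{gcdline:BeginWhile} of iteration~$\ell$. The base-case argument then repeats verbatim, with the degree identity $\deg M_U(x)=\deg\tilde r(x)+\deg t(x)$ of Lemma~\ref{lemma:GCD Loop Invariant.1.b} supplying the value of $\delta_\ell$, yielding $g\mapsto g+\delta_\ell$, $h\mapsto h-\delta_\ell$ together with the identity of the $q(x)$'s whenever $h-g>2\delta_\ell$. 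This closes the induction.

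I expect the only delicate point to be the bookkeeping behind ``probably unmatched'': one must check that passing from $(M_n(x),E(x))$ to $(M_U(x),E_U(x))$ — i.e.\ discarding the bottom $K$ coefficients of $M_n(x)$ and of $E(x)$ — does not disturb the division quotients until the unmatched region, which advances by $\delta_\ell$ at each loop iteration, would reach the leading coefficients that drive the next division. This is exactly what the hypothesis $h-g>2\delta_\ell$ prevents, and once it is stated carefully everything else is a transcription of the argument in Appendix~B under the substitutions noted above.
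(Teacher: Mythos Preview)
Your proposal is correct and is exactly the approach the paper takes: the paper states that the proof ``is similar to that of Lemma~\ref{lemma:GCD Algorithm.1} and is thus omitted,'' and what you have written is precisely this transcription, with $M_n(x),E(x)$ replaced by $M_U(x),E_U(x)$ and the initial values $g=K-1$, $h=N$ replaced by $g=-1$, $h=N-K$. Your bookkeeping of the ``probably unmatched'' boundary and your use of Lemma~\ref{lemma:GCD Loop Invariant.1.b} in the inductive step are consistent with the corresponding steps in the proof of Lemma~\ref{lemma:GCD Algorithm.1}.
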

The proof is similar to that of Lemma~\ref{lemma:GCD Algorithm.1}
and is thus omitted. Since $h-g=N-K+1$ holds between
lines~\ref{gcdline:Assignj} and~\ref{gcdline:BeginWhile} in the
first loop iteration, it follows from Lemma~\ref{lemma:GCD
Algorithm.1.b} that if $2 \sum_{v=1}^{\ell} \delta_v <  N-K+1$,
then, from the first to the $\ell$-th loop iteration, $q(x)$ and
thus $s(x)$ and $t(x)$ are exactly the same as in the Extended GCD
Algorithm.  Moreover, from Lemma~\ref{lemma:GCD Loop
Invariant.1.b}, $\deg t(x)=\sum_{v=1}^{\ell} \delta_v$ holds
between lines \ref{gcdline:EndWhile}
and~\ref{gcdline:BeginStoppingIf}. In order to obtain
(\ref{eqn:gcdtLambdaf}), which implies that $\deg t(x)=\deg
\Lambda_f(x)$, it turns out that if
\begin{equation}
2\deg \Lambda_f(x)\leq N-K,
\end{equation}
which agrees with (\ref{d_bound.0}), then the algorithm maintains
exactly the same $s(x)$ and $t(x)$ as the Extended GCD Algorithm
of Section~\ref{section:Extended GCD Algorithm} until $\deg
t(x)=\deg \Lambda_f(x)$.

It remains to argue the validity of
(\ref{eqn:ModifiedStrippedStopping1}) and
(\ref{eqn:ModifiedStrippedStopping2}) as appropriate terminating
conditions. Assume that (\ref{d_bound.0}) is satisfied and suppose
the Extended GCD Algorithm (in Section~\ref{section:Extended GCD
Algorithm}) terminates (at line \ref{gcdline:Return}) in the
$\mu$-th loop iteration. As shown above, it has been clear that
the Extended GCD Algorithm in Section~\ref{section:Extended GCD
Algorithm} and the Partial GCD Algorithm~II maintain exactly the
same $s(x)$ and $t(x)$ until $\deg t(x)= \deg \Lambda_f(x)$. Thus,
before the $\mu$-th loop iteration
\begin{equation}
\deg t(x)<\deg \Lambda_f(x) \leq (N-K)/2
\end{equation}
holds between lines \ref{gcdline:EndWhile}
and~\ref{gcdline:BeginStoppingIf}; moreover, by (\ref{modified
degcheck.b}) of Lemma~\ref{lemma:GCD Loop Invariant.1.b},
\begin{eqnarray}
\deg \tilde{r}(x)&=& \deg M_U(x)-\deg t(x) \\
                 &>& (N-K)/2 \label{stopcond.b} \\
                 &>& \deg t(x)
\end{eqnarray}
also holds between lines \ref{gcdline:EndWhile}
and~\ref{gcdline:BeginStoppingIf} for the Partial GCD
Algorithm~II. Further, from (\ref{eqn:degt0gt1.b}), $\deg t(x)>
\deg \tilde{t}(x)$ holds as well between lines
\ref{gcdline:EndWhile} and~\ref{gcdline:BeginStoppingIf}.
Therefore, for the Partial GCD Algorithm~II,
\begin{equation}
\deg \tilde{r}(x)> (N-K)/2> \deg t(x)>\deg \tilde{t}(x)
\label{r1t0t1.b}
\end{equation}
holds between lines \ref{gcdline:EndWhile}
and~\ref{gcdline:BeginStoppingIf} in every but before the $\mu$-th
loop iteration. It then follows after swapping all auxiliary
polynomials in lines \ref{gcdline:Swapr}--\ref{gcdline:Swapt} that
\begin{equation}
\deg r(x)> (N-K)/2 >\deg \tilde{t}(x) >\deg t(x) \label{r0t1t0.b}
\end{equation}
holds between lines~\ref{gcdline:Assignj}
and~\ref{gcdline:BeginWhile} for each subsequent loop iteration.
Then, after executing the \pkw{while} block in the $\mu$-th loop
iteration, we obtain the desired $t(x)$ (with $\deg t(x)= \deg
\Lambda_f(x)$) and $s(x)$ that coincide with the corresponding
ones of the Extended GCD Algorithm in Section
\ref{section:Extended GCD Algorithm}; thus $t(x)$ and $s(x)$
(in the Partial GCD Algorithm~II) at this moment satisfy both
(\ref{eqn:modified r0.b}) and (\ref{eqn:sMtE}). From
(\ref{eqn:sMtE}), we have
\begin{equation}
 -s(x)M_n(x)=t(x)E(x) \label{gcd.0}
\end{equation}
with $\deg s(x)< \deg t(x)$. Note that (\ref{gcd.0}) can also be
written as
\begin{equation}
 -s(x)(x^K M_U(x)+M_L(x))=t(x)(x^K E_U(x)+E_L(x)), \label{gcd.1}
\end{equation}
where $M_U(x)$ and $E_U(x)$ are defined in Section
\ref{section:Modified GCD Algorithm II} and $M_L(x)= M_n(x)-x^K
M_U(x)$ and $E_L(x)= E(x)-x^K E_U(x)$. Further, let $V(x)\eqdef
-s(x)M_L(x)-t(x)E_L(x)=\sum_{\ell=0}V_{\ell} \, x^\ell$, which is
of degree $\deg V(x)\leq (K-1)+\deg t(x)$ because $\deg s(x)< \deg
t(x)$. Equation (\ref{gcd.1}) can then be written as
\begin{equation}
x^K \left(s(x)M_U(x)+t(x)E_U(x)\right)= V(x). \label{gcd.2}
\end{equation}
Observing the left hand side of (\ref{gcd.2}), we know that all
the terms on the right hand side of (\ref{gcd.2}) of degree less
than $K$ will vanish. Thus, we have the following equivalent
expression for (\ref{gcd.2}):
\begin{equation}
s(x)M_U(x)+t(x)E_U(x) = V_U(x) \label{gcd.3}
\end{equation}
where $V_U(x) \eqdef \sum_{\ell=0}V_{K+\ell} \, x^\ell$ has degree
\begin{eqnarray}
 \deg V_U(x)&=&\deg V(x)-K \nonumber \\
            &\leq&(K-1)+\deg t(x)-K \nonumber \\
            &<& \deg t(x).  \label{stop.0}
\end{eqnarray}
Comparing (\ref{gcd.3}) with (\ref{eqn:modified r0.b}) and from
(\ref{stop.0}), clearly, $\deg r(x)=\deg V_U(x)<\deg t(x)$, which
coincides with (\ref{eqn:ModifiedStrippedStopping1}), holds
between lines \ref{gcdline:EndWhile}
and~\ref{gcdline:BeginStoppingIf} in the $\mu$-th loop iteration.
Thus, the correctness of (\ref{eqn:ModifiedStrippedStopping1}) as
a terminating condition is guaranteed (because from
(\ref{r0t1t0.b}) $\deg r(x)>\deg t(x)$ holds between
lines~\ref{gcdline:Assignj} and~\ref{gcdline:BeginWhile}). On the
other hand, since from (\ref{r0t1t0.b}) $\deg r(x)> (N-K)/2$ holds
between lines~\ref{gcdline:Assignj} and~\ref{gcdline:BeginWhile}
but $\deg r(x)<\deg t(x)=\deg \Lambda_f(x)\leq (N-K)/2$ holds
between lines \ref{gcdline:EndWhile}
and~\ref{gcdline:BeginStoppingIf}, we thus conclude that
(\ref{eqn:ModifiedStrippedStopping2}) can serve as an alternative
terminating condition.

\newcommand{\COM}{IEEE Trans.\ Communications}
\newcommand{\COMMag}{IEEE Communications Mag.}
\newcommand{\IT}{IEEE Trans.\ Information Theory}
\newcommand{\JSAC}{IEEE J.\ Select.\ Areas in Communications}
\newcommand{\SP}{IEEE Trans.\ Signal Proc.}
\newcommand{\SPMag}{IEEE Signal Proc.\ Mag.}
\newcommand{\ProcIEEE}{Proceedings of the IEEE}

\newpage

\end{document}